\renewcommand{\leq}{\varleq}
\renewcommand{\geq}{\vargeq}
\renewcommand{\epsilon}{\varepsilon}
\renewcommand{\O}{\mathcal{O}}
\NewDocumentTextCommand\testing{O{$\cdot$}O{$\cdot$}}{Test[#1][#2]}
\NewDocumentMathCommand\testing{O{\cdot}O{\cdot}}{\textup{Test}[#1][#2]}
\newcommand{\abs}[1]{\lvert #1 \rvert}
\NewDocumentTextCommand\LS{O{$\cdot$}O{$\cdot$}}{LastSeen[#2][#1]} 
\NewDocumentMathCommand\LS{O{\cdot}O{\cdot}}{\textup{LastSeen}[#2][#1]} 
\NewDocumentTextCommand\KMalt{O{$\cdot$}O{$\cdot$}}{Knotenmatrix[#2][#1]} 
\NewDocumentMathCommand\KMalt{O{\cdot}O{\cdot}}{\textup{Knotenmatrix}[#2][#1]}
\NewDocumentTextCommand\FIA{O{$\cdot$}O{$\cdot$}}{FirstInArch[#1][#2]}
\NewDocumentMathCommand\FIA{O{\cdot}O{\cdot}}{\textup{FirstInArch}[#1][#2]}
\NewDocumentTextCommand\LIA{O{$\cdot$}O{$\cdot$}}{LastInArch[#1][#2]}
\NewDocumentMathCommand\LIA{O{\cdot}O{\cdot}}{\textup{LastInArch}[#1][#2]}
\NewDocumentTextCommand\mA{O{$\cdot$}}{minArch[#1]}
\NewDocumentMathCommand\mA{O{\cdot}}{\textup{minArch}[#1]}
\newtextcommand{\last}{last}
\newmathcommand{\last}{\textup{last}}
\newtextcommand{\startSAS}{startSAS}
\newmathcommand{\startSAS}{\textup{startSAS}}
\newtextcommand{\arches}{$\textup{ar}_w(1)\dots\textup{ar}_w(k)$}
\newmathcommand{\arches}{\textup{ar}_w(1)\dots\textup{ar}_w(k)}
\NewDocumentTextCommand{\sL}{O{$\cdot$}O{$\cdot$}}{sortedLast[#1][#2]}
\NewDocumentMathCommand{\sL}{O{\cdot}O{\cdot}}{\textup{sortedLast}[#1][#2]}
\NewDocumentTextCommand{\sF}{O{$\cdot$}O{$\cdot$}}{sortedFirst[#1][#2]}
\NewDocumentMathCommand{\sF}{O{\cdot}O{\cdot}}{\textup{sortedFirst}[#1][#2]}
\NewDocumentTextCommand{\Leqq}{O{$\cdot$}O{$\cdot$}}{Leq[#1][#2]}
\NewDocumentMathCommand{\Leqq}{O{\cdot}O{\cdot}}{\textup{Leq}[#1][#2]}
\NewDocumentTextCommand{\archIdx}{O{$\cdot$}}{archIdx[#1]}
\NewDocumentMathCommand{\archIdx}{O{\cdot}}{\textup{archIdx}[#1]}
\NewDocumentTextCommand{\knn}{O{$\cdot$}}{knnext[#1]}
\NewDocumentMathCommand{\knn}{O{\cdot}}{\textup{knnext}[#1]}
\NewDocumentTextCommand{\knp}{O{$\cdot$}}{knprev[#1]}
\NewDocumentMathCommand{\knp}{O{\cdot}}{\textup{knprev}[#1]}
\NewDocumentTextCommand{\testt}{d<> O{$\cdot$}}{\IfNoValueTF{#1}{test nr. 1 mit #1 und #2 und so}{zweiter test #1 und #2 und alles}}
\NewDocumentTextCommand{\KM}{d<> o}{\IfNoValueTF{#1}{\IfNoValueTF{#2}{P[$\cdot$, $\cdot$]}{P[#2]}}{P[#2, #1]}}
\NewDocumentMathCommand{\KM}{d<> o}{\IfNoValueTF{#1}{\IfNoValueTF{#2}{{P}[\cdot, \cdot]}{{P}[#2]}}{{P}[#2, #1]}}
\DeclareMathOperator{\al}{alph}
\DeclareMathOperator{\ar}{ar}
\DeclareMathOperator{\dist}{dist}
\DeclareMathOperator{\firstPosArch}{firstInArch}
\DeclareMathOperator{\lastPosArch}{lastInArch}
\DeclareMathOperator{\mas}{MAS}
\DeclareMathOperator{\nextArray}{sameNext}
\DeclareMathOperator{\prevArray}{samePrev}
\DeclareMathOperator{\nextpos}{next}
\DeclareMathOperator{\back}{back}
\DeclareMathOperator{\rest}{r}
\DeclareMathOperator{\sas}{SAS}
\DeclareMathOperator{\defPath}{defPath}
\DeclareMathOperator{\lastGap}{lastGap}
\DeclareMathOperator{\lastPair}{lastPair}
\def\nth#1{#1$^{\text{th}}$}
\newcommand{\len}[1]{|#1|}
\newcommand{\rightedge}[3]{#1 \rightarrow_{#3} #2}
\newcommand{\rightedget}[3]{#1 \rightarrow_{#3}^{\ast} #2}
\title{Efficiently Finding All Minimal and Shortest Absent Subsequences in a String}
\authorrunning{Manea et al.}
\titlerunning{Efficiently Finding All Minimal and Shortest Absent Subsequences}
\author{Florin Manea}{Department of Computer Science, University of Göttingen, Germany}{florin.manea@cs.uni-goettingen.de}{https://orcid.org/0000-0001-6094-3324}{This work was supported by the German Research Foundation (DFG), project number 466789228 (Heisenberg grant).}
\author{Tina Ringleb}{Department of Computer Science, University of Göttingen, Germany}{tina.ringleb@stud.uni-goettingen.de}{}{}
\author{Stefan Siemer}{Department of Computer Science, University of Göttingen, Germany}{stefan.siemer@cs.uni-goettingen.de}{https://orcid.org/0000-0001-7509-8135}{}
\author{Maximilian Winkler}{Department of Computer Science, University of Göttingen, Germany}{maximilian.winkler01@stud.uni-goettingen.de}{}{}
\keywords{Subsequences, Absent Subsequences, Shortest Absent Subsequences, Minimal Absent Subsequences, Enumeration Algorithms}
\begin{document}

\nolinenumbers
\maketitle  

\hideLIPIcs
\begin{abstract}
Given a string $w$, another string $v$ is said to be a subsequence of $w$ if $v$ can be obtained from $w$ by removing some of its letters; on the other hand, $v$ is called an absent subsequence of $w$ if $v$ is not a subsequence of $w$. The existing literature on absent subsequences focused on understanding, for a string $w$, the set of its shortest absent subsequences (i.e., the shortest strings which are absent subsequences of $w$) and that of its minimal absent subsequences (i.e., those strings which are absent subsequences of $w$ but whose every proper subsequence occurs in $w$). Our contributions to this area of research are the following. Firstly, we present optimal algorithms (with linear time preprocessing and output-linear delay) for the enumeration of the shortest and, respectively, minimal absent subsequences. Secondly, we present optimal algorithms for the incremental enumeration of these strings with linear time preprocessing and constant delay; in this setting, we only output short edit-scripts showing how the currently enumerated string differs from the previous one. Finally, we provide an efficient algorithm for identifying a longest minimal absent subsequence of a string. All our algorithms improve the state-of-the-art results for the aforementioned problems. 
\end{abstract}


\newpage

\section{Introduction}

A string $v$ of length $\ell$ is a {\em subsequence} of a string $w$ if there exist (possibly empty) strings  $x_1, \ldots, x_{\ell+1}$ and letters $v_1, \ldots,  v_\ell$
such that $v = v_1 \cdots v_\ell$ and $w = x_1 v_1 \cdots x_\ell v_\ell x_{\ell+1}$. In other words, $v$ can be obtained from $w$ by removing some of its letters. 

The concept of subsequence (also called scattered factor, or subword) appears and plays important roles in many different areas of computer science: in combinatorics on strings~\cite{RigoS15,FreydenbergerGK15,Mat04,Salomaa05}, formal languages~\cite{simonPhD,Simon72,journals/lmcs/KarandikarS19,bib:UniReg,bib:uniCF} and automata theory~\cite{HalfonSZ17,Zetzsche16,Zetzsche18}. Aside from that, subsequences are important objects of study also in the area of algorithm-design. On the one hand, they appear in string-algorithmic problems, such as computing the {longest common subsequences} or the {shortest common supersequences} (see the survey \cite{BergrothHR00} and the references therein) or the testing of the Simon's congruence of strings (see~\cite{SimonWords,GawrychowskiEtAl2021} and the references therein). On the other hand, the combinatorial and algorithmic study of subsequences (or patterns) occurring in permutations (such as, e.g., longest increasing subsequences \cite{Fredman75}) is an important and classical topic of research (see, e.g., \cite{Kitaev:2011,Linton:2010,Romik:2015}). For a more detailed discussion regarding combinatorial pattern matching problems related to subsequences see the survey~\cite{SurveyNCMA}. Moreover, algorithmic problems on subsequences were revisited in the context of fine-grained complexity~\cite{BringmannK18,AbboudEtAl2015}, and lower bounds for their complexity were shown. Nevertheless, subsequences also appear in more applied settings: for modeling concurrency~\cite{Riddle1979a,Shaw1978}, in database theory, in settings  related to {event stream processing}~\cite{ArtikisEtAl2017,GiatrakosEtAl2020,ZhangEtAl2014}, in data mining~\cite{LiW08,LiYWL12}, or in problems related to bioinformatics~\cite{BilleEtAl2012}. These applications led to new theoretical results (algorithms and fine-grained complexity lower bounds) as well. For instance, in connection to the database connection, mentioned above, we recall the study of constrained subsequences of strings, where the substrings occurring between the positions of the subsequence are constrained by regular or length constraints~\cite{DayKMS22,AdamsonKKMS23,ManeaRS24}. \looseness=-1

This paper is centered around the notion of \emph{absent subsequence}, introduced in~\cite{KoscheKMS21}. A string $v$ is an absent subsequence of a string $w$ if $v$ is not a subsequence of $w$. The main motivation of~\cite{KoscheKMS21} for the study of absent subsequences was related to the congruence $\sim_k$ (called Simon's Congruence, see, e.g., \cite{simonPhD,Pin2019}) between strings having exactly the same set of subsequences of length at most $k$. The problems of deciding whether two given strings are $\sim_k$-equivalent, for a given $k$, and to find the largest $k$ such that two given strings are $\sim_k$-equivalent (and their applications) were heavily investigated in the literature, with optimal solutions for both these problems being obtained recently (see~\cite{GawrychowskiEtAl2021} and the references therein). One of the central ingredients in many of these works was the notion of a distinguishing string, i.e., the shortest subsequence present in one string and {\em absent} from the other. Along the same lines, and very close to the study of Simon's congruence, is the study of $m$-nearly $k$-universal strings, where one is considering strings which contain as subsequences all strings of length $k$, except exactly $m$ such strings, which are, therefore, absent \cite{FleischmannHHHMN23}. The other main motivation for investigating absent subsequences was the study of {\em missing factors} (or missing substrings) in strings, where the focus is on the set of strings which are not substrings (or factors) of $w$. The study of missing factors ranges from practical applications \cite{Chairungsee2012,Charalampopoulos2018,Crochemore2000}
to theoretical results of combinatorial \cite{SolonCPM1,Crochemore1998,Mignosi2002} or algorithmic nature \cite{SolonCPM2,Barton2014,Crochemore2020,Fujishige2016}.
Last but not least, while subsequences are used to model sequences of events occurring in a string or a stream, formalising the trace of a computation, absent subsequences correspond, similarly, to sequences of events avoided by the respective computation-trace. On a related note, absent subsequences also naturally occur in the study of patterns avoided by permutations~\cite{Kitaev:2011}, with the important difference that a permutation is essentially a string whose letters are pairwise distinct. \looseness=-1

Two classes of absent subsequences were investigated in the seminal paper~\cite{KoscheKMS21} and subsequent works \cite{KoscheKMS22,Tronicek23,abs-2407-18599}. On the one hand, we have the class of \emph{shortest absent subsequences} of a string, namely the absent subsequences of minimal length of that string. On the other hand, we have the class of \emph{minimal absent subsequences} of a string $w$: the absent subsequences whose every proper subsequence is, in fact, a subsequence of $w$. Both combinatorial and algorithmic results regarding these two notions were obtained. In \cite{KoscheKMS21,KoscheKMS22,abs-2407-18599}, combinatorial characterisations of the sets of minimal and, respectively, shortest 
absent subsequences in a string, tight bounds on the possible number of minimal or shortest absent subsequences, as well as compact representations of these sets were shown.  In \cite{KoscheKMS21,KoscheKMS22,Tronicek23}, efficient algorithms testing if a string is the shortest or minimal absent subsequence in a string were given, alongside efficient algorithms for counting and enumerating them. 

\smallskip

\noindent\textbf{Our Contribution:} We improve significantly a series of algorithmic results reported in the literature. Our main focus is on algorithms for enumerating all shortest and minimal absent subsequences of a given string of length~$n$, over an alphabet of size $\sigma$. We give optimal algorithms for these problems, having an $\O(n)$ time preprocessing and output-linear delay; the fastest known algorithms solving these problems required $\O(n\sigma)$ time preprocessing with output-linear delay (see \cite{Tronicek23}, where the results of \cite{KoscheKMS22} were improved). In a slightly different setting, we give optimal incremental enumeration algorithms (i.e., which produce, for each output-element, a short edit-script with respect to the previous output-element) for these two sets of shortest and minimal absent subsequences (with $\O(n)$ time preprocessing and $\O(1)$ delay). Finally, we give an algorithm running in $\O(n\log \sigma)$ time that computes a longest minimal absent subsequence of a string of length $n$, over an alphabet of size~$\sigma$; the fastest known algorithm solving this problem ran in $\O(n\sigma)$ time \cite{Tronicek23}. The main ingredients of our algorithms are novel and more efficient representations of minimal and shortest absent subsequences of a string, suitable for efficient enumeration, as well as a series of data-structures allowing us to efficiently search in the sets of such absent subsequences. \looseness=-1

\section{Preliminaries}\label{sec:prels}

Let $\mathbb{N}$ be the set of natural numbers. For $m, n \in \mathbb{N}$, we let $[m:n] = \{m, m+1, \ldots, n\}$ and $[n]=[1:n]$. An alphabet $\Sigma$ is a non-empty finite set of symbols called {\em letters}. A {\em string} is a finite sequence of letters from $\Sigma$; for the rest of the paper, we assume that all strings we consider are over an alphabet $\Sigma=\{1,2,\ldots,\sigma\}$. The set of all strings over $\Sigma$ is denoted by $\Sigma^*$, and let $\Sigma^+ = \Sigma^\ast \setminus \{\varepsilon\}$, where $\varepsilon$ is the empty string. The {\em length} of a string $w \in \Sigma^\ast$ is denoted by $\len w$; let $\Sigma^n = \{w\in \Sigma^\ast \mid \abs{w}=n\}$. 
The \nth{$i$} letter of $w \in \Sigma^\ast$ is denoted by  $w[i]$, for $i \in [\len w]$. For $m, n \in \mathbb{N}$, we let $w[m:n] = w[m] w[m+1] \cdots w[n]$ and $\len w_a = |\{i \in [\len w] \mid w[i] = a \}|$. A string $u=w[m:n]$ is a {\em factor} of $w$, and we have $w = xuy$ for some $x,y \in \Sigma^\ast$. If $x = \varepsilon$ (resp., $y = \varepsilon$),
$u$ is called a  {\em prefix} (resp., {\em suffix}) of $w$. Let $\al(w) = \{x \in \Sigma \mid \len w_x > 0 \}$ be the smallest subset $S \subseteq \Sigma$ such that $w \in S^\ast$. For $i=0$, we define $w^0=\varepsilon$ and, for $i>0$, we define $w^{i}=ww^{i-1}$. 
We can now recall the classical notion of subsequence.

\begin{definition}
	We call a string $v$ a \emph{subsequence} of length $|v| = k$ of string $w$,
	where $\len w = n$,
	if there exist positions $1 \leq i_1 < \ldots < i_k \leq n$,
	such that $v = w[i_1] \cdots w[i_k]$. For $j\in [k]$, we say that $v[j]$ is \emph{embedded} on position $i_j$ of $w$. 
\end{definition}

If $v$ is a subsequence of $w$, with $|v|=k$ and $|w|=n$, the \emph{canonical embedding} of $v$ in $w$ is defined as the sequence of positions $1 \leq i_1 < i_2 < \ldots < i_k \leq n$ such that $i_1$ is the leftmost occurrence of $v[1]$ in $w$, and, for $j\in [2:k]$, $i_j$ is the leftmost occurrence of $v[j]$ in $w[i_{j-1}+1:n]$. We recall the notion of $k$-universality of a string as introduced in \cite{dlt2020}: a string $w$ is \emph{$k$-universal} if every string $v$ of length $\len v \le k$  over $\al(w)$ appears as a subsequence of $w$. For a string $w $, its \emph{universality index} $\iota(w)$ is the largest integer $k$ such that $w$ is $k$-universal. Clearly, if $\iota(w) = k$, then $w$ is $\ell$-universal for all $\ell \leq k$. Note that the universality index of a string $w$ is always defined w.r.t. $\al(w)$, the alphabet of the string $w$, see \cite{dlt2020,day2021edit}.
We also recall the \emph{arch factorisation}~\cite{TCS::Hebrard1991}. For $w \in \Sigma^\ast$, with $\Sigma=\al(w)$, the {\em arch factorisation} of $w$ is defined as $w = \ar_w(1) \cdots \ar_w(k) \rest(w)$ where for all $\ell \in [k]$ the last letter of $\ar_{w}(\ell)$ occurs exactly once in $\ar_w(\ell)$, each string $\ar_w(\ell)$ is $1$-universal over $\Sigma$, and $alph(\rest(w)) \subsetneq \Sigma$. The strings $\ar_w(\ell)$ are called {\em arches} of $w$, $\rest(w)$ is called the {\em rest}. \looseness=-1

Note that every string has a unique arch factorisation and by definition each arch $\ar_w(\ell)$ from a string $w$ is $1$-universal; clearly, the number of arches in the arch factorisation of a string $w$ is exactly $\iota(w)$. By an abuse of notation, we can write $i \in \ar_w(\ell)$ if $i$ is a natural number such that $|\ar_w(1) \cdots \ar_w{(\ell-1)}| < i \leq |\ar_w(1) \cdots \ar_w{(\ell)}|$, i.e., $i$ is a position of $w$ contained in the \nth{$\ell$} arch of $w$.
The main concepts discussed in this paper were introduced in \cite{KoscheKMS22}: \looseness=-1
\begin{definition}[\cite{KoscheKMS22}] A string $v$ is an \emph{absent subsequence} of $w$ if $v$ is not a subsequence of $w$.
An {absent subsequence} $v$ of $w$ is a \emph{minimal absent subsequence} (for short, $\mas$) of $w$ if every proper subsequence of $v$ is a subsequence of $w$.
We denote the set of all $\mas$s of $w$ by $\mas(w)$.
An absent subsequence $v$ of $w$ is a \emph{shortest absent subsequence} (for short, $\sas$) of $w$ if $|v|\le |v'|$ for any other absent subsequence $v'$ of $w$.
We denote the set of all $\sas$s of $w$ by $\sas(w)$.\looseness=-1
\end{definition}

Note that any $\sas$ of $w$ has length $\iota(w)+1$ (as $w$ contains all strings of length $\iota(w)$ as subsequences, and at least one string of length $\iota(w)+1$ is not a subsequence of $w$) and $v$ is an $\mas$ of $w$ if and only if $v$ is absent and every subsequence of $v$ of length $|v|-1$ is a subsequence of $w$. For a string $w$, with $|w|=n$ and $\iota(w)=k$, and $v$ an $\sas$ of $w$, let $i_1<i_2<\ldots <i_k$ be the canonical embedding of $v[1:k]$ in $w$. Then $i_\ell\in \ar_w(\ell)$, for all $\ell\in [k]$. Moreover, $v[k+1]\notin \al(w[i_k+1:n])$, see \cite{KoscheKMS22}. 

Interestingly, every $\sas$ is an $\mas$, but not vice versa. Moreover, the length of the shortest $\mas$ of a string $w$ is $\iota(w)+1$ (as the shortest $\mas$s are the $\sas$s), but it is not obvious which is the length of a longest $\mas$ of $w$ (except that it is at most $|w|+1$). Moreover, even if the length of a longest $\mas$ is known, this still does not shed a lot of light on the set of lengths of $\mas$s. For instance, the string $1^2 2 1^5$ has $\mas$s of lengths $2,4,7,$ and $8$; these are $2^2$, $1^32$, $21^6$, and $1^8$. \looseness=-1

A series of data structures were defined in \cite{KoscheKMS22} for the efficient processing of the absent subsequences of a string; we recall some of them, which are also useful here. In the following, $w$ is a string over an alphabet $\Sigma$, with $|w|=n$ and $|\Sigma|=\sigma$. 
We firstly define the $\sigma \times n$ matrix $\nextpos[\cdot,\cdot]$, where $\nextpos[a,i]=\min (\{n+1\} \cup \{j\mid j\geq i, w[j]=a\} )$; the elements of $\nextpos[\cdot,\cdot]$ are computed by a simple dynamic programming in $\O(n\sigma)$ time, see \cite[Remark 4.2]{KoscheKMS22} for the basic idea. 
We also define the arrays $\nextArray [\cdot]$ and $\prevArray [\cdot]$, with $\nextArray [i]=\min(\{j>i\mid w[j]=w[i] \}\cup\{n+1\})$ and $\prevArray [i]=\max(\{j<i\mid w[j]=w[i] \}\cup\{0\})$; note that  $\nextArray [i]=\nextpos[w[i],i+1]$. Clearly, if $i\in [n]$ and $\nextArray[i]\neq n+1$, then $\prevArray[\nextArray[i]]=i$; if $i\in [n]$ and $\prevArray[i]\neq 0$, then $\nextArray[\prevArray[i]]=i$. The arrays $\nextArray [\cdot]$ and $\prevArray [\cdot]$ are computed in time $\O(n)$, see \cite[Lemma 4.20]{KoscheKMS22}.

We recall, as well, the array $\dist[\cdot]$, with $n$ elements, where, for $i\in [n]$, $\dist[i]=\min\{|u|\mid u$ is an absent subsequence of $w[i:n],$ starting with $w[i]\}$. This array is computed in $\O(n)$ time, see \cite[Lemma 4.16]{KoscheKMS22}. The main property of this array is the following. If $\iota(w)=k$, then, for $i\in \ar_w(\ell)$, $\ell\in[k]$, there exists an $\sas$ $v$ of $w$ such that $i$ is part of the canonical embedding of $v$ in $w$ if and only if $\dist[i]=k - (\ell-1) + 1 = k-\ell+2$ and $i$ is the first occurrence of $w[i]$ in $\ar_w(\ell)$. That is, the first $(\ell-1)$ positions, corresponding to the first $(\ell-1)$ arches of $w$, of the embedding have been chosen and we can find suitable positions for the remaining $k-(\ell-1)$ arches, as well as one position that can not be embedded in $w$.

Finally, we recall the definitions of two $k \times \sigma$ matrices $\firstPosArch[\cdot,\cdot]$ and $\lastPosArch[\cdot,\cdot]$. For $\ell \in [k]$ and $a \in \Sigma$,  $\firstPosArch[\ell,a]$ is the leftmost position of $\ar_w(\ell)$ where $a$ occurs and  $\lastPosArch[\ell,a]$ is the rightmost position of $\ar_w(\ell)$ where $a$ occurs. To avoid confusions, for all $\ell$ and $a$, the values stored in $\firstPosArch[\ell,a]$ and $\lastPosArch[\ell,a]$  are positions of $w$ (so, between $1$ and $|w|$). These two arrays are computed in $\O(n)$ time, see \cite[Lemma 4.7]{KoscheKMS22}.

We recall the following theorem from \cite{KoscheKMS21}. See \cite{KoscheKMS22} for a detailed proof.
\begin{restatable}{theorem}{mastheorem}\label{thm:mas}
Let $v,w\in \Sigma^\ast,~|v|=m+1$ and $|w|=n$, then $v$ is an $\mas$ of $w$ if and only if there exist positions $0=i_0<i_1<\ldots <i_m <i_{m+1}= n+1$ such that the following conditions are satisfied:
\begin{enumerate}
	\item 
$v=w[i_1]\cdots w[i_m]v[m+1]$; 
	\item 
$v[1]\notin \al(w[1:i_1-1])$;
	\item 
$v[r]\notin\al(w[i_{r-1}+1:i_r-1])$ for all $r\in[2: m+1]$;
	\item 
$v[r]\in \al(w[i_{r-2}+1:i_{r-1}])$ for all $r\in[2: m+1]$.\looseness=-1
\end{enumerate} 
\end{restatable}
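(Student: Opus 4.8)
The plan is to prove both directions of the characterisation by exhibiting, in the forward direction, a canonical choice of the positions $i_1 < \dots < i_m$ and verifying the four conditions, and in the backward direction, by showing that any such configuration of positions forces $v$ to be absent while every length-$m$ subsequence of $v$ is present.

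For the ``only if'' direction, assume $v$ is an $\mas$ of $w$ with $|v| = m+1$. Since $v[1:m]$ is a proper subsequence of $v$, it is a subsequence of $w$; let $i_1 < \dots < i_m$ be its canonical embedding in $w$, and set $i_0 = 0$, $i_{m+1} = n+1$. Condition~(1) is then immediate from $v = w[i_1]\cdots w[i_m]\,v[m+1]$. Conditions~(2) and~(3) capture exactly the leftmost/greedy nature of the canonical embedding: $i_1$ is the first occurrence of $v[1]$, so $v[1]$ does not occur in $w[1:i_1-1]$, and for $r \in [2:m]$, $i_r$ is the first occurrence of $v[r]$ after $i_{r-1}$, so $v[r] \notin \al(w[i_{r-1}+1:i_r-1])$; for $r = m+1$ we must argue that $v[m+1]$ does not occur in $w[i_m+1:n]$ — this follows because otherwise $v = v[1:m]\,v[m+1]$ would be embeddable, contradicting that $v$ is absent. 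Condition~(4), that $v[r] \in \al(w[i_{r-2}+1:i_{r-1}])$ for $r \in [2:m+1]$, is the key consequence of \emph{minimality}: the string $v' = v[1]\cdots v[r-2]\,v[r]\cdots v[m+1]$ (deleting the symbol $v[r-1]$) is a proper subsequence of $v$, hence a subsequence of $w$; one shows, by comparing its canonical embedding with that of $v[1:m]$ and using an exchange/induction argument, that the only way $v'$ can embed while $v$ cannot is if $v[r]$ already appears within the block $w[i_{r-2}+1:i_{r-1}]$, i.e., the symbol $v[r]$ could have been ``pulled back'' into the slot used by $v[r-1]$.

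For the ``if'' direction, suppose positions $0 = i_0 < i_1 < \dots < i_m < i_{m+1} = n+1$ satisfy (1)--(4). First, $v$ is absent: conditions (2) and (3) show that the embedding $i_1, \dots, i_m$ of $v[1:m]$ is forced to be the canonical (leftmost) one, and then (3) with $r = m+1$ says $v[m+1] \notin \al(w[i_m+1:n])$, so $v[m+1]$ cannot be appended, hence $v$ is not a subsequence of $w$. Second, every length-$m$ subsequence of $v$ — that is, every $v^{(r)}$ obtained by deleting $v[r]$ for some $r \in [m+1]$ — is a subsequence of $w$: for $r = m+1$ this is $v[1:m]$, embedded via $i_1, \dots, i_m$; for $r \le m$, we use condition~(4) to find an occurrence of $v[r+1]$ (or $v[m+1]$ when $r = m$, re-indexed appropriately) inside the block $w[i_{r-1}+1:i_r]$ and thereby ``shift'' the embedding, keeping $i_1, \dots, i_{r-1}$ and $i_{r+1}, \dots, i_m$ and inserting the new position in $(i_{r-1}, i_r]$, which yields a valid embedding of $v^{(r)}$. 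Combining: $v$ is absent but all its proper subsequences (equivalently, all its length-$m$ subsequences) are present, so $v$ is an $\mas$.

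The main obstacle is the bookkeeping in condition~(4) in both directions: in the forward direction, carefully arguing that minimality forces each $v[r]$ to already lie in the \emph{previous} arch-like block $w[i_{r-2}+1:i_{r-1}]$ rather than merely somewhere earlier, which requires an induction that simultaneously tracks the canonical embeddings of $v$ and of each $v'$; and in the backward direction, verifying that the shifted embeddings obtained from (4) are genuinely increasing and stay within the prescribed blocks, so that deleting any single letter of $v$ yields a subsequence. The index gymnastics near the endpoints ($r = 2$, where $i_{r-2} = i_0 = 0$, and $r = m+1$, where $v[m+1]$ plays the role of the ``missing'' final letter) also need separate, routine attention. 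Since this is Theorem~\ref{thm:mas} quoted verbatim from \cite{KoscheKMS21,KoscheKMS22}, I would present the argument at the level of detail above and refer to \cite{KoscheKMS22} for the full proof.
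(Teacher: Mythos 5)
You should note first that the paper itself does not prove Theorem~\ref{thm:mas}: it only recalls the statement and defers to the cited literature for a detailed proof, so the comparison here is against the standard argument. Your overall architecture is the right one (forward: take the canonical embedding of $v[1:m]$ and use minimality for condition~(4); backward: absence from leftmost-ness plus condition~(3) at $r=m+1$, and presence of every one-letter deletion from condition~(4)), but the key step of the backward direction is wrong as written. For $r\le m$ you claim that $v^{(r)}$ (delete $v[r]$) embeds by ``keeping $i_1,\dots,i_{r-1}$ and $i_{r+1},\dots,i_m$ and inserting one new position $p\in(i_{r-1},i_r]$ carrying $v[r+1]$''. This is not a valid embedding: the positions $i_{r+1},\dots,i_m$ carry the letters $v[r+1],\dots,v[m]$, whereas the letters of $v^{(r)}$ that must follow $p$ are $v[r+2],\dots,v[m+1]$; these agree only if $v[t]=v[t+1]$ for all $t\in[r+1:m]$, and in particular $v[m+1]$ would have to sit at $i_m$. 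Nor can you keep $i_{r+1},\dots,i_m$ for $v[r+1],\dots,v[m]$ and append $v[m+1]$ afterwards, since condition~(3) with $r=m+1$ forbids exactly that. The correct construction invokes condition~(4) at \emph{every} index $t\in[r+1:m+1]$, not just at $t=r+1$: choose $p_t\in[i_{t-2}+1:i_{t-1}]$ with $w[p_t]=v[t]$; then $i_1<\dots<i_{r-1}<p_{r+1}<p_{r+2}<\dots<p_{m+1}$ is strictly increasing (because $p_t\le i_{t-1}<i_{t-1}+1\le p_{t+1}$) and embeds $v^{(r)}$, i.e., the entire suffix is shifted one block to the left. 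Your ``main obstacle'' paragraph hints at this (plural shifted positions staying in prescribed blocks), but the construction you actually state would fail.

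In the forward direction, your justification of condition~(4) remains a gesture (``an exchange/induction argument''), and the phrasing about $v[r]$ being ``pulled back into the slot used by $v[r-1]$'' does not identify the contradiction. The actual mechanism is: if $v[r]\notin\al(w[i_{r-2}+1:i_{r-1}])$, then the canonical embedding of $v$ with $v[r-1]$ deleted (which exists by minimality) agrees with $i_1,\dots,i_{r-2}$ on its prefix and from then on runs at least one block ahead of $i_{r-1},\dots,i_m$; hence its last letter, $v[m+1]$, is embedded strictly to the right of $i_m$, which would make $v$ itself a subsequence of $w$ --- contradicting absence, not minimality. That part of your sketch is repairable along these lines; the backward-direction embedding, however, is a genuine error that needs the full suffix-shift argument.
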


For a string $w$, a subsequence $v$ of $w$ is called an \emph{$\mas$-prefix} of length $m$ of $w$, if $|v|=m$ and there exist positions $0=i_0<i_1<\ldots <i_m \leq n $ such that $v=w[i_1]\cdots w[i_m]$ and the following conditions are satisfied: $v[1]\notin \al(w[1:i_1-1])$ and $v[r]\notin\al(w[i_{r-1}+1:i_r-1])$ for all $r\in[2: m]$
(i.e., the sequence $i_1<\ldots <i_m$ is the canonical embedding of $v$ in $w$), and $v[r]\in \al(w[i_{r-2}+1:i_{r-1}])$ for all $r\in[2: m]$.
The following two lemmas are immediate.
\begin{restatable}{lemma}{masextensionlemma}\label{lem:MASextension}
Let $v,w\in \Sigma^\ast,~|v|=m$ and $|w|=n$, such that $v$ is an $\mas$-prefix of $w$. Then there exists an $\mas$ $u$ which has $v$ as prefix.
\end{restatable}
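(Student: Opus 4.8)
The plan is to construct $u$ by greedily extending the canonical embedding of $v$ one position to the right at a time, keeping the invariant that the string built so far is an $\mas$-prefix of $w$; as soon as this extension cannot be continued, Theorem~\ref{thm:mas} will certify that appending one more letter turns the current string into an $\mas$ that still has $v$ as a prefix. First I would fix notation: let $0 = i_0 < i_1 < \dots < i_m$ be the canonical embedding of $v$ in $w$ (we may assume $m \ge 1$, the case $m = 0$ being immediate). The single step of the construction is: given an $\mas$-prefix $v'$ of length $m'$ with canonical embedding $0 = i'_0 < i'_1 < \dots < i'_{m'}$, set $a := w[i'_{m'}]$ and $S := \al(w[i'_{m'-1}+1 : i'_{m'}])$, noting that $S \neq \emptyset$ and $a \in S$ (and, when $m' = 1$, $S = \al(w[1:i'_1])$).

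The key case distinction I would use is whether $S \subseteq \al(w[i'_{m'}+1 : n])$ or not. If not, pick $b \in S$ with $b \notin \al(w[i'_{m'}+1 : n])$ and take $u := v' b$: the positions $0 = i'_0 < i'_1 < \dots < i'_{m'} < n+1$ satisfy the four conditions of Theorem~\ref{thm:mas}, since the conditions at indices $\le m'$ are exactly the defining conditions of the $\mas$-prefix $v'$, condition~(3) at index $m'+1$ holds by the choice of $b$, and condition~(4) at index $m'+1$ is precisely $b \in S$; thus $u$ is an $\mas$ with $v'$, and hence $v$, as a prefix. If instead $S \subseteq \al(w[i'_{m'}+1 : n])$, then $a$ occurs in $w[i'_{m'}+1:n]$, so $\nextArray[i'_{m'}] \le n$, and I claim $v' a$ is again an $\mas$-prefix of $w$, now of length $m'+1$, whose canonical embedding is obtained by appending $\nextArray[i'_{m'}]$ to that of $v'$: the new canonical-embedding condition holds because $\nextArray[i'_{m'}]$ is by definition the first occurrence of $a$ after $i'_{m'}$, and the new membership condition $a \in \al(w[i'_{m'-1}+1:i'_{m'}]) = S$ is trivial as $a = w[i'_{m'}]$. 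Starting from $v$ itself and iterating this step, the first branch produces the desired $\mas$, while the second branch produces a strictly longer $\mas$-prefix.

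The point that needs the most care — and the only real obstacle — is that the second branch cannot be applied forever. The observation is that once the second branch has been taken even once, the letter appended is always $a = w[i_m]$ (because $\nextArray[i'_{m'}]$ is an occurrence of $a$, so $w$ at that position is again $a$), and the last embedded position runs strictly through successive occurrences of $a$ in $w$. Since $a$ has at most $n$ occurrences in $w$, the second branch can be used only finitely many times; when the last embedded position reaches the rightmost occurrence of $a$, we have $a \notin \al$ of the remaining suffix while $a$ still lies in the corresponding set $S$, so the first branch necessarily applies and yields an $\mas$ $u$ having $v$ as a prefix. The remaining work — checking that the conditions of Theorem~\ref{thm:mas} invoked above literally match the definition of an $\mas$-prefix, and the trivial bookkeeping for $m' = 1$ — is routine.
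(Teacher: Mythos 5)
Your argument is correct: each step is justified (the Theorem~\ref{thm:mas} conditions at indices $\le m'$ are indeed the $\mas$-prefix conditions, the new condition~(4) holds because the appended letter is $w[i'_{m'}]$, and the termination argument via the finitely many occurrences of $a=v[m]$ is sound). It is, however, a more procedural route than the paper's. The paper writes the witness down in closed form: with $\ell$ the number of occurrences of $v[m]$ in $w[i_m+1:n]$, the string $v\,v[m]^{\ell+1}$ is an $\mas$ of $w$ — exactly the string your iteration produces when your first branch never fires early with a letter $b\neq a$. The mechanism is the same in both proofs (the last letter of an $\mas$-prefix can always be re-appended, since condition~(4) is automatic for it, and once its occurrences are exhausted the result is absent and minimal); the difference is that you rediscover this through a greedy loop with a case distinction and a termination argument, while the paper skips the loop entirely. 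Your extra first branch (allowing any $b\in S$ absent from the suffix) buys a potentially shorter extension and mirrors the path-extension viewpoint used later for the enumeration DAG, but for the purposes of this lemma it is unnecessary overhead; the one-step construction suffices and avoids having to maintain the $\mas$-prefix invariant across iterations.
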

\begin{proof}
Let $0=i_0<i_1<\ldots <i_m \leq n $ be positions of $w$ such that $i_1<\ldots < i_m$ is the canonical embedding of $v$ in $w$. As $v$ is an $\mas$-prefix of $w$, by definition we have that $v[r]\in \al(w[i_{r-2}+1:i_{r-1}])$ for all $r\in[2: m]$. Now, let $\ell$ be the number of occurrences of $v[m]$ in $w[i_m+1:n]$. Then, $vv[m]^{\ell+1}$ is clearly an $\mas$ of $w$. 
\end{proof}

\vspace*{-0.3cm}
\begin{restatable}{lemma}{masprefixlemma}\label{lem:MASprefix}
Let $w\in \Sigma^\ast$, with $|w|=n$, $i\in [n]$, and $w[i]=a$. Then $a^{|w[1:i]|_a}$ is an $\mas$-prefix $u$ whose canonical embedding ends on position $i$.
\end{restatable}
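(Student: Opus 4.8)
The plan is to unwind directly the definition of an $\mas$-prefix. Set $m = |w[1:i]|_a$ and let $j_1 < j_2 < \cdots < j_m$ enumerate, in increasing order, all positions of $[1:i]$ at which the letter $a$ occurs in $w$. Since $w[i] = a$ we have $j_m = i$, and since $j_1, \ldots, j_m$ are \emph{all} occurrences of $a$ inside $[1:i]$, every occurrence of $a$ strictly before $i$ is among $j_1, \ldots, j_{m-1}$; hence $j_1, \ldots, j_m$ are exactly the first $m$ occurrences of $a$ in $w$, and in particular $i = j_m$ is the $m$-th occurrence of $a$ from the left.

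Next I would verify that $j_1 < \cdots < j_m$ is the canonical embedding of $u = a^m$ in $w$. By the observation above, $j_1$ is the leftmost occurrence of $a$ in $w$, so $u[1] = a \notin \al(w[1:j_1-1])$; and for $r \in [2:m]$, $j_r$ is the leftmost occurrence of $a$ in $w$ strictly to the right of $j_{r-1}$, so $u[r] = a \notin \al(w[j_{r-1}+1:j_r-1])$. This is precisely the canonical-embedding part of the definition, and it shows the embedding terminates on $j_m = i$, as required.

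It remains to check the last condition in the definition, namely $u[r] \in \al(w[j_{r-2}+1:j_{r-1}])$ for all $r \in [2:m]$ (with $j_0 = 0$). This is immediate: $w[j_{r-1}] = a = u[r]$ and $j_{r-1} \in [j_{r-2}+1 : j_{r-1}]$ since $j_{r-2} < j_{r-1}$, so $a$ occurs in the factor $w[j_{r-2}+1:j_{r-1}]$. Finally, $u = a^m$ is trivially a subsequence of $w$, since $a$ occurs $m$ times already in $w[1:i]$. Thus every clause of the definition of an $\mas$-prefix of length $m$ holds for $u$ with the witnessing positions $j_1 < \cdots < j_m$, which completes the argument.

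There is no serious obstacle here: the statement is a bookkeeping consequence of the definitions. The only point worth stating carefully is that the occurrences of $a$ inside $[1:i]$ coincide with the first $m$ occurrences of $a$ in all of $w$ — this is what forces the greedy (canonical) embedding of $a^m$ to land exactly on these positions and to finish at $i$, and it follows because $w[i] = a$ makes $i$ the $m$-th occurrence of $a$ from the left.
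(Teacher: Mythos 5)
Your proof is correct and is exactly the direct definitional verification the paper has in mind (the paper declares this lemma ``immediate'' and omits the argument): the occurrences of $a$ in $w[1:i]$ are the first $|w[1:i]|_a$ occurrences of $a$ in $w$, they form the canonical embedding of $a^{|w[1:i]|_a}$ ending at $i$, and the extra $\mas$-prefix condition holds since $w[j_{r-1}]=a$ lies in $w[j_{r-2}+1:j_{r-1}]$. No gaps; the handling of the degenerate case $m=1$ is vacuously covered as you implicitly note.
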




The computational model we use to state our algorithms is the standard unit-cost word RAM with logarithmic word-size $\omega$ (meaning that each memory word can hold $\omega$ bits). It is assumed that this model allows processing inputs of size $n$, where $\omega \geq \log n$; in other words, the size $n$ of the data never exceeds (but, in the worst case, is equal to) $2^\omega$. Intuitively, the size of the memory word is determined by the processor, and larger inputs require a stronger processor (which can, of course, deal with much smaller inputs as well). Indirect addressing and basic arithmetical operations on such memory words are assumed to work in constant time. Note that numbers with $\ell$ bits are represented in $\O(\ell/\omega )$ memory words, and working with them takes time proportional to the number of memory words on which they are represented. This is a standard computational model for the analysis of algorithms, defined in \cite{FredmanW90}. Our algorithms have strings as input, so we follow a standard stringology-assumption, namely that we work with an {\em integer alphabet}: whenever we are given a string of length $n$, we assume that it is over an alphabet $\Sigma=\{1,2,\ldots,\sigma\}$, with $|\Sigma|=\sigma\leq n$. For a more detailed general discussion on the integer alphabet model see, e.\,g.,~\cite{crochemore}.

\section{Enumeration Algorithms for Paths in DAGs} \label{sec:enumSkelDAG}

We begin by giving a series of definitions related to directed acyclic graphs, which are important in our algorithms. A directed graph is a structure $G=(V, E)$, where $V$ is the set of nodes (or vertices) and $E\subseteq V\times V$ is the set of edges. The direction of the edge $(x,y)\in E$ is said to be from $x$ to $y$, $x$ is said to be the origin of $(x,y)$, and $y$ its target; we also say that $y$ is a child of $x$. A (non-trivial) walk of length $\ell$ in the directed graph $G$ is a sequence of $\ell-1$ edges $(x_1,x_2), (x_2,x_3), \ldots, (x_{\ell-1},x_\ell)$, where $\ell\geq 2$; $x_1,\ldots,x_\ell$ are the nodes on that walk, and the walk is called a path if the respective walk contains only pairwise distinct node. An $x-y$-path in $G$ is a path connecting nodes $x$ and $y$. A directed acyclic graph (DAG, for short) $G$ is a directed graph that has no cycles (i.e., if there is an $x-y$-path, then there is no $y-x$-path in $G$); note that, for DAGs, the notions of path and walk coincide and there is only a finite set of paths between two given nodes. 

Let $G=(V, E)$ be a DAG. We assume that we have a function $de: V\rightarrow E\cup\{\bot\}$ which associates to every node $x\in V$ an edge $de(x)=(x,y)$, called the {\em default edge} of $x$, or sets $de(x)=\bot$, meaning that there is no default edge leaving node $x$. Thus, we have at most $|V|$ default edges in a DAG. A path consisting of only default edges is called {\em default path}; clearly, there exists at most one default path from node $x$ to node $y$. With respect to a function $de$, defining the default edges of $G$, an arbitrary $x-y$-path $\alpha $ in $G$ can be {\em encoded}, recursively, as follows:\looseness=-1
\begin{itemize}
\item if $\alpha$ is the path of length $0$, then its encoding is $\Phi(\alpha)=\varepsilon$, or 
\item if $\alpha$ is the (unique) default $x-y$-path, then its encoding is $\Phi(\alpha) = \defPath(x,y)$, or 
\item if $\alpha$ is the sequence $ \beta_1, (z,z'), \beta_2$, where $(z,z')$ is not the default edge $de(z)$, $\beta_1$ is an $x-z$-path, and $\beta_2$ is a $z'-y$-path, then its encoding is $\Phi(\alpha)= \Phi(\beta_1)(x,y)\Phi(\beta_2)$. 
\end{itemize}
We are interested in enumerating paths between two given nodes in a given DAG. While enumeration problems and enumeration algorithms can be defined in a more general setting (see, e.g., \cite{Wasa16}), here we focus on this particular setting, to keep the presentation simple. 

Let $G=(V, E)$ be a DAG, let $de$ be a function defining the default edges of $G$, let $x,y\in V$ be two nodes of $G$, and let $A$ be the set of $x-y$-paths. Assume $|A|=r$. An enumeration of $A$ is a sequence $(s_1, s_2, \ldots, s_r)$, where $s_i\in A$, for $i\in [r]$, and $s_i\neq s_j$, for all $i,j\in [r]$, $i\neq j$. 
In this paper, we are interested in the \emph{incremental enumeration} of the set of $x-y$-paths, with respect to $de$. In that setting, instead of having explicitly the sequence $(s_1, s_2, \ldots, s_r)$ of all the paths from $A$, we have a sequence consisting of the encoding (w.r.t. $de$) of the first enumerated object $s_1$ (namely, $\Phi(s_1)$), followed by edit scripts which tell us, for $i\in [2:r]$, how to obtain the path $s_i$, the $i^{th}$ path in the enumeration, from the encoding of $s_{i-1}$, the $(i-1)^{th}$ path in the enumeration. In this paper, these edit scripts are of the form: ``return on the previous path (namely, $s_{i-1}$) to node $z$, follow some $z-y$-path (say $\gamma$, written as its encoding $\Phi(\gamma)$)''; in particular, we need to have that $s_i$ is actually the path consisting of the prefix of $s_{i-1}$ leading from $x$ to $z$ followed by the path $\gamma$.  As such, we do not have the explicit sequence of enumerated paths, but we can keep track of how the enumerated paths change from one to the next, as we advance in the sequence $(s_1, s_2, \ldots, s_r)$ from left to right. 

The \emph{enumeration problems} considered in our setting are the following: given a DAG $G=(V, E)$, with $de$ a function defining the default edges of $G$, and two nodes $x,y\in V$ of $G$, (incrementally) enumerate all $x-y$-paths of $G$. An \emph{enumeration algorithm} solving such an enumeration problem is an algorithm ${\mathcal A}$ that produces an enumeration $(s_1, s_2, \ldots, s_r)$ of the $x-y$-paths in $G$. An \emph{incremental enumeration algorithm} for the same problem, is an algorithm ${\mathcal A}$ that produces an incremental enumeration of the $x-y$-paths. Additionally, we require that our incremental enumeration algorithms can, on demand, after outputting the edit scripts that transform $s_{i-1}$ into $s_i$, also output the element $s_i$ in linear time $\O(|s_i|)$. 

Let ${\mathcal A}$ be an (incremental) enumeration algorithm for some enumeration problem, as defined above. The \emph{preprocessing time} of ${\mathcal A}$ on input $G$ is the time used by the algorithm ${\mathcal A}$ before the first output. If $(e_1, e_2, \ldots, e_r)$ is the output of ${\mathcal A}$ on input $I$ (where $e_i$ is either $s_i$, the actual $i^{th}$ path of our enumeration written explicitly, or the $i^{th}$ output in the incremental enumeration), then, for every $i \in [r]$, the \emph{delay of $e_i$} is the time between the end of completely outputting the element $e_{i-1}$ and the end of completely outputting the next output element $e_i$. The \emph{delay} of ${\mathcal A}$ on some input $G$ is the maximum over all delays of any output element. For the purpose of this paper, the preprocessing time and the delay of the algorithm ${\mathcal A}$ are the maximum preprocessing time and maximum delay, respectively, over all possible inputs $G$ with at most $n$ nodes (viewed as a function of $n$).

We note that, in the case of standard enumeration algorithms, the best we can hope for is linear preprocessing and output-linear delay (i.e., the delay between outputting $e_i$ and $e_{i+1}$ is $\O(|e_{i+1}|)$). In the case of incremental enumeration algorithms, the best we can hope for is linear preprocessing and $\O(1)$ delay. Examples of both standard and incremental algorithms can be found in \cite{Wasa16}, with some recent examples of incremental enumeration algorithms, developed in frameworks similar to ours, being \cite{adamson2024enumerating,AmarilliM23}.

In our enumeration algorithms we also assume that the inputs are not given explicitly, but in a more succinct, implicit way. We thus introduce one of the main concepts for our approach.\looseness=-1
\begin{definition}
A directed acyclic graph (for short, DAG) $G=(V,E)$ is an $m$-\emph{skeleton DAG} if the following conditions hold:
\begin{itemize}
\item There is a surjective function $\text{level} : V \rightarrow [0:m]$ such that $(x,y)\in E \!\Rightarrow\! \text{level}(x) \leq \text{level}(y)$.\looseness=-1
\item $G$ has two distinguished nodes, the source $s$ and sink $f$, with $\text{level}(s) = 0$ and $\text{level}(f) = m$. Node $s$ (resp., $f$) is the single node with $\text{level}(s) = 0$ (resp., with $\text{level}(f) = m$). \looseness=-1
\item For $\ell\in [m-1]$, the set $V_\ell = \{v\in V\mid \text{level}(v) = \ell\}$ is totally ordered by an order $\rightedge{}{}{\ell}$, henceforth called the \emph{sibling relation}. Moreover, if $V_\ell = \{v_1,\dots,v_r\}$ such that $\rightedge{v_1}{v_2}{\ell}\rightedge{}{\dots}{\ell}\rightedge{}{v_r}{\ell}$ then there is an edge $(v_i,v_{i+1})\in E$ for all $i\in[r-1]$. There are no other edges $(v,v^\prime)\in E$ such that $\text{level}(v) = \text{level}(v^\prime)$. 
We define $\rightedget{}{}{\ell}$ as the reflexive-transitive closure of $\rightedge{}{}{\ell}$. For simplicity, we denote by $\rightedge{}{}{}$ (resp., $\rightedget{}{}{}$) the union of the relations $\rightedge{}{}{\ell}$ (resp., $\rightedget{}{}{\ell}$), for all levels $\ell$.
\item For all $v\in V\setminus\{s,f\}$ there is exactly one edge $(v,v^\prime)\in E$, $v^\prime\in V$, such that $\text{level}(v) < \text{level}(v^\prime)$. The source $s$ can have multiple outgoing edges, but for any two edges $(s,v),(s,v^\prime)\in E$ we have $\text{level}(v) \neq \text{level}(v^\prime)$.
\end{itemize}
\end{definition}
Clearly, for an $m$-{skeleton DAG} $G=(V,E)$ we have that $|E|\in \O(|V|)$. We can now define the directed acyclic graph encoded by an $m$-{skeleton DAG}. See Figure \ref{Fig:DAGs}.
\begin{definition}
An $m$-skeleton DAG $G=(V,E)$ encodes the DAG $D(G) = (V, E_1 \cup E_2)$, where $E_1 = \{(v,v')\in E \mid \text{level}(v)<\text{level}(v') \}$ and $E_2 = \{(v, v'') \mid (v,v') \in E_1, \rightedget{v'}{v''}{}\}$. 
\end{definition}
Note that an $m$-skeleton DAG $G=(V,E)$ is a succinct representation of $D(G)$, as $D(G)$ might have significantly more edges than $G$ (e.g., when all the nodes of each level of $D(G)$ are connected to all the nodes on the next level, we only have two outgoing edges for each node, other than the source, in the skeleton $G$). Note that not all DAGs can be defined by skeletons (because, e.g., in a skeleton-defined DAG, there can be no edge between two children of the same node), but all trees can be defined by skeletons, and the representation of trees by skeletons is very similar to the child-sibling representation of trees \cite{FredmanSST86}. In the case of a DAG $D(G)$ defined by an $m$-skeleton DAG $G=(V,E)$, with source $s$ and sink $f$, we assume that the default edge of each node $x\in V\setminus \{s,f\}$ in $D(G)$ is the single edge $(x,y)$ of $G$ (which is also an edge of $D(G)$) with $\text{level}(y)>\text{level}(x)$, and $s$ and $f$ do not have default edges. The incremental enumeration of $s-f$-paths in $D(G)$ is done w.r.t. this definition of default paths.\looseness=-1

\begin{figure}[tb]
\centering
        \scalebox{.8}{
    \begin{minipage}{.4\textwidth}
    \centering
        \begin{tikzpicture}
    \node (s) {$s$};
    \node (2) [below =0.5cm of s] {$v_{2}$};
    \node (1) [left=0.5cm of 2] {$v_{1}$};
    \node (3) [right=0.5cm of 2] {$v_{3}$};
    \node (4) [below=.5cm of 1] {$v_{4}$};
    \node (5) [below=.5cm  of 3] {$v_{5}$};
    \node (6) [below=.5cm  of 4] {$v_{6}$};
    \node (7) [below =.5cm of 2, right=.5cm  of 6] {$v_{7}$};
    \node (8) [below=.5cm of 5] {$v_{8}$};
    \node (f) [below =.5cm  of 7] {$f$};
    \draw[->] (s) -- (1);
    \draw[->] (s) -- (4);
    \draw[->] (1) -- (5);
    \draw[->] (2) -- (7);
    \draw[->] (3) -- (5);
    \draw[->] (4) -- (f);
    \draw[->] (5) -- (6);
    \draw[->] (6) -- (f);
    \draw[->] (7) -- (f);
    \draw[->] (8) -- (f);
    \draw[->, dashed] (1) -- (2);
    \draw[->, dashed] (2) -- (3);
    \draw[->, dashed] (4) -- (5);
    \draw[->, dashed] (6) -- (7);
    \draw[->, dashed] (7) -- (8);
\end{tikzpicture}\\(a) $4$-Skeleton DAG $G$
    \end{minipage}
    \begin{minipage}{.4\textwidth}
    \centering
\begin{tikzpicture}
    \node (s) {$s$};
    \node (2) [below =.5cm of s] {$v_{2}$};
    \node (1) [left=.5cm of 2] {$v_{1}$};
    \node (3) [ right=.5cm  of 2] {$v_{3}$};
    \node (4) [below=.5cm  of 1] {$v_{4}$};
    \node (5) [below=.5cm  of 3] {$v_{5}$};
    \node (6) [below=.5cm  of 4] {$v_{6}$};
    \node (7) [below =.5cm of 2, right=.5cm  of 6] {$v_{7}$};
    \node (8) [below=.5cm of 5] {$v_{8}$};
    \node (f) [below =.5cm of 7] {$f$};
    \draw[->] (s) -- (1);
    \draw[->] (s) -- (4);
    \draw[->] (1) -- (5);
    \draw[->] (2) -- (7);
    \draw[->] (3) -- (5);
    \draw[->] (4) -- (f);
    \draw[->] (5) -- (6);
    \draw[->] (6) -- (f);
    \draw[->] (7) -- (f);
    \draw[->] (8) -- (f);
    \draw[->] (s) -- (2);
    \draw[->] (s) -- (3);
    \draw[->] (s) -- (5);
    \draw[->] (2) -- (8);
    \draw[->] (5) -- (7);
    \draw[->] (5) -- (8);
\end{tikzpicture}\\(b) Encoded DAG $D(G)$
    \end{minipage} 
    }
    \caption{$4$-Skeleton DAG $G$ (a) and its encoded DAG $D(G)$ (b), where the dashed arrows in (a) represent, respectively, the orders $\rightedge{}{}{\ell}$, for each level $\ell\in [3]$.}
    \label{Fig:DAGs}
\end{figure}
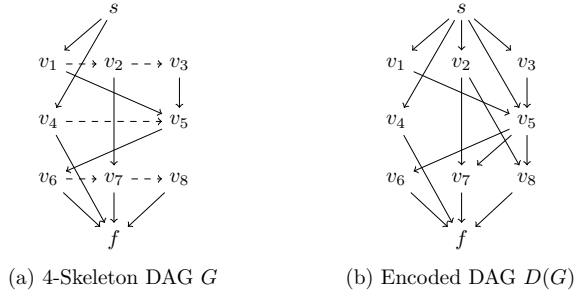

\begin{restatable}{theorem}{dagenumerationtheorem}\label{thm:DAG_enumeration}
Given an $m$-skeleton DAG $G=(V,E)$, with $|V|=n$, after an $\O(n)$ time preprocessing, we can incrementally enumerate the $s-f$-paths in $D(G)$ with $\O(1)$ delay. 
\end{restatable}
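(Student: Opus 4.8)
The plan is to describe a recursive enumeration procedure that walks through $D(G)$ implicitly, using the skeleton structure, and to maintain a stack representing the current path so that each step costs $\O(1)$. Throughout, I treat the current enumerated path $s_i$ as a sequence of nodes $x = u_0, u_1, \dots, u_t = f$; the crucial invariant is that between two consecutive nodes $u_j, u_{j+1}$ on the path the edge used in $D(G)$ is either the default edge (a level-increasing edge of $G$) or a ``shortcut'' edge $(u_j, u_{j+1})\in E_2$, which by definition means there is a level-increasing edge $(u_j, v')\in E_1$ with $v' \rightedget{}{}{} u_{j+1}$. So a node $u_{j+1}$ reached via a non-default edge can be reached, equivalently, by taking the unique level-increasing edge out of $u_j$ to some sibling-list, and then moving right along that sibling-list; storing $u_{j+1}$ together with the node $v'$ at the head of its sibling list (and the information whether the edge was default) gives enough to later advance.

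First I would describe the generation order. The natural choice is a depth-first, lexicographic-style traversal: from a node $u_j$, among all its children in $D(G)$ we first go to the default child, then, upon backtracking, successively to the siblings (in the $\rightedge{}{}{}$ order) of the default child's level-successors, i.e. we enumerate the children of $u_j$ in the order ``default child $v'$, then $v''$ with $v' \rightedge{}{}{}v''$, then the next sibling, etc.'', recursing fully into each before moving on. This yields an enumeration $(s_1, s_2, \dots, s_r)$ of all $x$-$f$-paths; $s_1$ is the all-default path from $x$ to $f$ (which exists because every non-source, non-sink node has exactly one level-increasing outgoing edge and levels strictly increase, so the default path terminates, and it must terminate at $f$ since $f$ is the unique sink). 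The key structural observation making the delay $\O(1)$ is: to pass from $s_{i-1}$ to $s_i$, we pop nodes off the path-stack until we reach the deepest node $z$ that still has an ``unused'' next child — concretely, the deepest $u_j$ whose current outgoing edge on $s_{i-1}$ can be advanced, meaning either it was the default edge and $v'$ (the level-successor) has a sibling, or it was a shortcut to $u_{j+1}$ and $u_{j+1}$ has a sibling $w$ with $u_{j+1}\rightedge{}{}{}w$. We then replace that edge by the edge to the next sibling, and extend by the all-default path from that new node down to $f$. The output edit script is exactly ``return to $z$, follow $\Phi(\gamma)$'' where $\gamma$ is this edge-to-the-next-sibling followed by a default path, whose encoding is the single non-default edge symbol followed by $\defPath(\cdot, f)$ — a constant number of symbols.

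The main obstacle — and the heart of the $\O(1)$-delay claim — is that the ``pop until we find an advanceable node'' step is not obviously constant time: a single transition could pop many nodes. This is handled by the standard amortization-against-a-potential argument made worst-case via a deamortization trick, or, more cleanly here, by the observation that nodes are pushed onto the stack only when we extend a path and each push is paid for by the output length of the \emph{previous} edit script (which included extending down a default path of that same length): more precisely, the total number of pop operations over the whole enumeration is bounded by the total number of push operations, which is $\O(\sum_i |s_i|)$, and since after emitting the edit script for $s_i$ we are additionally allowed $\O(|s_i|)$ time to output $s_i$ explicitly on demand, we can charge the pops to that budget; to get genuine $\O(1)$ \emph{worst-case} delay between consecutive edit scripts (independent of path lengths), we interleave the popping work with the work of the preceding steps, maintaining a small buffer — a routine deamortization since each edit script's own work is $\O(1)$ and only the ``catch-up'' popping varies. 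I would also need to verify the $\O(n)$ preprocessing: computing the level function, the sibling lists with next-sibling pointers, and, for each node, the head of its sibling list and a pointer to the all-default path / precomputing $\defPath$ endpoints, all of which are linear since $|E| = \O(|V|) = \O(n)$ for a skeleton DAG. Finally I would check correctness: that the procedure enumerates every $x$-$f$-path exactly once (by induction on path length, using that $D(G)$'s children of a node are in bijection with ``level-increasing edge then a sibling-suffix''), and that the emitted encodings and edit scripts are consistent with the definitions of $\Phi$ and of incremental enumeration given above.
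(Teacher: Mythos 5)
There is a genuine gap in the complexity argument, and it is precisely the point the theorem is about. Your stack stores the current path node by node, so a single transition must (i) push every node of the new all-default suffix down to $f$ and (ii) pop, one at a time, every node of the abandoned suffix while searching for the deepest advanceable node $z$. This work is $\Theta(\text{length of the default segments involved})$, while the emitted edit script has constant size (the default suffix is encoded as a single symbol $\defPath(\cdot,f)$). Consequently your amortization claim is false: take a skeleton DAG in which level $1$ consists of siblings $v_1,\dots,v_k$ reachable from $s$ and each $v_j$ starts a branch-free default path of length $m-1$ to $f$; there are $k$ paths to enumerate, every transition costs $\Theta(m)$, and every output is $\O(1)$, so the cost is $\Theta(m)$ \emph{amortized} per output and no deamortization or buffering can repair this --- there is no earlier slack to interleave the work into. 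The total work $\O(\sum_i |s_i|)$ that you compute is the right budget for output-linear (explicit) enumeration, but for incremental enumeration the outputs are constant-size edit scripts, so this bound does not give amortized $\O(1)$. Nor can you charge the popping to the ``output $s_i$ explicitly in $\O(|s_i|)$ time'' clause: that is an on-demand capability the algorithm must additionally support, not a time allowance granted at every step of the enumeration.

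What is missing is a representation in which both the push and the back-jump are genuinely constant time. The paper achieves this by (a) precomputing, for every node, the length $d(\cdot)$ of its default path to $f$ and a pointer $nb(\cdot)$ to the next branching node on that path, so that a whole maximal default segment is pushed as a \emph{single} path-object rather than node by node (each transition pushes at most two objects); (b) maintaining, besides the main stack, a second stack of pointers to exactly those path-objects that still have unexplored branches, so the deepest advanceable point is read off the top of that stack in $\O(1)$ worst-case time instead of being found by scanning/popping; and (c) implementing the stacks statically as arrays with a top index, so discarding an arbitrary number of entries is a single pointer update. Your DFS order, the correctness induction, and the $\O(n)$ preprocessing of levels and sibling lists are all fine and essentially match the paper, but without the default-segment compression and the auxiliary branch-stack the claimed $\O(1)$ delay does not hold.
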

\begin{proof}
\noindent\emph{$\S $ Preliminaries.}
As a first step, we note that the case when $m=1$ is trivial (we simply have at most a single $s-f$-path, the one corresponding to the edge connecting $s$ and $f$). So we will assume in the following that $m>1$. 

Our enumeration algorithm is reminiscent of the approach presented in \cite{adamson2024enumerating}, with a main difference being that in this work we enumerate all paths in a DAG, which connect two given nodes, and might have different lengths, instead of all walks of fixed length in an arbitrary directed graph; we use the same main idea (of using default paths to succinctly represent paths in graphs), but define our algorithm iteratively (to avoid the crucial, but somehow tedious, usage of tail recursion from ~\cite{adamson2024enumerating}) and use a different toolbox of data structures. 

Before presenting our approach, we need to define the way $G$ is stored. For each node $v\in V\setminus\{s\}$, if $(v,u)\in E$ is the single edge where $\text{level}(u)>\text{level}(v)$, then we store the value $down(v)=u$. For $s$, we denote by $down(s)$ the node $u$, where $(s,u)\in E$ and $\text{level}(u)<\text{level}(x)$, for all nodes $x\in V$ such that $(s,x)\in E$. For each level $\ell$, if $V_\ell = \{v_1,\dots,v_r\}$ with $\rightedge{v_1}{v_2}{\ell}\rightedge{}{\dots}{\ell}\rightedge{}{v_r}{\ell}$, then $V_\ell$ is implemented as an array $A_{\ell}$ with $r$ elements, where $A_\ell[i]=v_i$, and denote $link(v_i)=v_{i+1}$ for $i\in [r-1]$ and $link(v_r)=\uparrow$. For each node $v=A_\ell[i]$, on level $\ell$, we store the value $pos(v)=i$. Note that we can immediately check whether there is an edge $(v,u)$ in $D(G)$: either $u=down(v)$ or $u'=down(v),$ $level(u')=level(u),$ and $pos(u')<pos(u)$. 

Recall that, for each node $v$ in $V\setminus \{s,f\}$ the {\em default edge of $v$ in $D(G)$} was defined as $de(v)=(v,down(v))$. Note that, for technical reasons, $s$ has no default edge. 
The other edges of $D(G)$ are non-default edges. As introduced before, an $x-y$-default path is an $x-y$-path consisting of default edges only and there is at most one $x-y$-default path for any $x,y\in V\setminus\{s\}$; this path is encoded as $\defPath(x,y)$. However, since every node $x\in V\setminus\{s,f\}$ has exactly one default edge, then there is exactly one $x-f$-default path for any $x\in V\setminus \{s,f\}$, namely the path starting in $x$ and obtained by following the default edges (encoded $\defPath(x,f)$). 

A node $v$ is called {\em branching node} if there is at least one non-default edge leaving $v$. Clearly, a node $v$, with the default edge $(v,v')$, is a branching node if and only if $link(v')\neq \uparrow$.

Finally, we recall that any $x-y$-path in $D(G)$, for $x,y\in V$, will be encoded as an alternating concatenation of default paths (written as their encoding) and non-default edges (originating in branching nodes, written explicitly).

Before describing the enumeration algorithm, we present a series of preprocessing steps.

\medskip

\noindent\emph{$\S$ Preprocessing.} 
Firstly, for each node $v\in V\setminus \{s,f\}$, we define the value $d(v)$ as the length of the single $v-f$-default path. The values $d(v)$, for all $v\in V$, can be computed in $\O(|V|)$ time. We simply set $d(f)=0$, and then, for $\ell=m-1$ to $1$, we compute $d(v)$ for $v\in V$ such that $\text{level}(v)=\ell$ as $d(v')+1$, where $(v,v')\in E$ is the default edge of $v$. 

Secondly, for each node $v\in V\setminus \{s\}$, we can compute the branching node $nb(v)$ which is reachable from $v$ on the single $v-f$-default path such that there is no branching node between $v$ and $nb(v)$ on the $v-f$-default path (to support the intuition, note that $nb$ stands for next branching node). We set $nb(v)=\uparrow$ if there is no branching node on the $v-f$-default path, and $nb(v)=v$ if $v$ is a branching node. The values $nb(v)$, for all $v\in V$, can be computed in $\O(|V|)$ time, similarly to the values $d(\cdot)$. We set $nb(f)=\uparrow$, and then, for $\ell=m-1$ to $1$, for $v\in V$ such that $\text{level}(v)=\ell$, we set $nb(v)=v$ if $v$ is a branching node, or, otherwise, we set $nb(v)=nb(v')$, where $(v,v')\in E$ is the default edge of $v$ (i.e., $v'=down(v)$). 

Clearly, these preprocessing steps can be implemented in linear time $\O(n)$.

\medskip

\noindent\emph{$\S$ Data Structures.} In our enumeration algorithm, we will use the notion of path-object. This is a tuple $(v,v',v'',\ell,u,\alpha )$, with $v,v',v'',u\in V\cup\{\uparrow\}$, $\ell \in [m+1]$, and $\alpha \in \{0,1\}$. Such a path-object $(v,v',v'',\ell,u,\alpha)$ encodes either the (default or non-default) edge from $v$ to $u$ (if $\ell=1$) or the default path from $v$ to $u$ (if $\ell >1$). We have $v'\neq \uparrow$ if and only if the non-default edge $(v,v')$ is in $D(G)$ and this edge was not yet explored in the enumeration. We have $v''\neq \uparrow$ if and only if $v''$ is a branching node on the default path between $v$ and $u$, and the non-default edges originating in $v''$ were not yet explored in the enumeration. Finally, the bit $\alpha$ is set to $0$ if and only if $v'=v''=\uparrow$ (and it is set to $1$, otherwise); for simplicity, the bit $\alpha$ is called the marking of the respective path-object. 

In our enumeration algorithm, we maintain two stacks: one stack ${\mathcal S}$ of path-objects, and one stack ${\mathcal C}$ which simply stores pointers to the path-objects stored in ${\mathcal S}$ whose marking is set to $1$. Very importantly, for two path-objects $x_1$ and $x_2$ whose marking equals $1$, $x_1$ is above $x_2$ in ${\mathcal S}$ if and only if the pointer to $x_1$ is above the pointer to $x_2$ in ${\mathcal C}$. In other words, the order of the objects in ${\mathcal S}$, with marking equal to $1$, is identical to the order of the pointers to these objects in ${\mathcal C}$. Intuitively, ${\mathcal C}$ keeps track of the path-objects of ${\mathcal S}$ where there is still the possibility of branching and finding paths which we did not enumerate yet. 

Our stacks are implemented statically: the stack ${\mathcal X}$, with ${\mathcal X}\in \{{\mathcal {S,C}}\}$, is an array of length $m+1$, and with a special position $top({\mathcal X})$, indicating the top of the respective stack (with $top({\mathcal X})=0$ if and only if the respective stack is empty). The typical operations are pop (which means return ${\mathcal X}[top({\mathcal X})]$ and set $top({\mathcal X})\gets top({\mathcal X})-1$) and push element $x$ (which means $top({\mathcal X})\gets top({\mathcal X})+1, {\mathcal X}[top({\mathcal X})]=x$). We will assume we can also decrease $top({\mathcal X})$ to some value $i$ (which would be equivalent to popping more than one element in $\O(1)$). Note that in all the cases when we pop elements, we do not explicitly remove them from the stack; the memory where they were stored will simply be overwritten. 

\medskip

\noindent\emph{$\S$ The Enumeration Algorithm.}
Let $(s,v)$ be one of the edges leaving $s$. We will explain how to incrementally enumerate all paths starting with $(s,v)$, with constant delay. The process is then repeated for every other edge of $G$ leaving $s$. 

Initially, the stacks ${\mathcal S}$ and ${\mathcal C}$ are empty. Firstly, we push in the stack ${\mathcal S}$ the path-objects $x_1=(s,link(v),\uparrow,1,v,\alpha)$, which encodes the edge $(s,v)$ (with $\alpha=0$ if and only if $link(v)= \uparrow$), and $x_2=(v,link(down(v)),nb(down(v)),d(v) ,f,\alpha')$ (where $\alpha'=0$ if and only if $link(down(v))=nb(down(v))=\uparrow$), which encodes the default path from $v$ to $f$. If $\alpha=1$, we insert in ${\mathcal C}$ a pointer to $x_2$. The encoding of the first path in our enumeration is now output: edge $(s,v)$, $\defPath(v,f)$ (i.e., the default path from $v$ to $f$). Then, we move on to the enumeration of the second path.

Assume now that we have enumerated $t\geq 1$ paths, and we want to enumerate path $t+1$. The procedure is the following: set the top of ${\mathcal S}$ to be the object pointed by the top of ${\mathcal C}$. Let $x=(v,v',v'',\ell,u,1)$ be the respective path-object (the current top of ${\mathcal S}$, as pointed by the top of ${\mathcal C}$). 

case (A): If $v''\neq \uparrow$, we check if $v'''=nb(down(v''))$ is still on the default path from $v$ to $u$ (i.e., we reach $u$ from $v'''$ by following the default path originating in the latter node, which is equivalent to $d(v''')\geq d(u)$). If yes, we update $x=(v,v',v''',\ell,u,1)$. If not, we update $x=(v,v',\uparrow,\ell,u,\alpha)$, where $\alpha=1$ if and only if $v'\neq \uparrow$; if $\alpha=0$, we remove the pointer to $x$ from ${\mathcal C}$. Then, with $u'=link(down(v''))$, we push in ${\mathcal S}$, in this order, the path-object $y=(v'',link(u'),\uparrow,1,u',\alpha')$, where $\alpha'=0$ if and only if $link(u')= \uparrow$, and the path-object $z=(u',link(down(u')),nb(down(u')),d(u'),f, \alpha'')$, where $\alpha''=0$ if and only if $link(down(u'))=nb(down(u'))=\uparrow$. We also push in ${\mathcal C}$, in this order, a pointer to $y$, if $\alpha'=1$, and a pointer to $z$, if $\alpha''=1$. Next, we output the edit-script showing how path $t$ is transformed in path $t+1$: return on previous path to $v$, $\defPath(v,v'')$ (i.e., default path to $v''$), non-default edge $(v'',u')$, $\defPath(u',f)$ (i.e., default path from $u'$ to $f$). 

case (B): If $v''= \uparrow$, we update $x$ to be equal to $x=(v,link(v'),\uparrow,\ell,u,\alpha)$, where $\alpha=0$ if and only if $link(v')=\uparrow$. If $\alpha=0$, we remove the pointer to $x$ from ${\mathcal C}$. Then, we push in ${\mathcal S}$ the path-object $y=(v',link(down(v')),nb(down(v')),d(v'),f, \alpha')$, where $\alpha'=0$ if and only if $link(down(v'))=nb(down(v'))=\uparrow$. We also push in ${\mathcal C}$ a pointer to $y$, if $\alpha'=1$. In this case, we output the edit-script showing how path $t$ is transformed in path $t+1$: return on previous path to $v$, non-default edge $(v,v')$, $\defPath(v',f)$ (i.e., default path from $v'$ to $f$). 

Note that the stack ${\mathcal S}$ also allows us to explicitly output the currently enumerated path in output-linear time. Assume that the first components of the path-objects stored in ${\mathcal S}$ from position $1$ to position $top({\mathcal S})$ are, in order, $s, v_1, \ldots, v_k$; let $v_{k+1}=f$. Then, we output first the edge $(s,v_1)$. Then, for $i\in [k]$ if $(v_i,v_{i+1})$ is an edge (this can be checked in $\O(1)$ time, as explained above), we output the respective edge; otherwise, we output, in order, the edges on the default path connecting $v_i$ to $v_{i+1}$. 

When the stack ${\mathcal C}$ is empty, we have finished enumerating all $s-f$-paths, which start with $(s,v)$, in $D(G)$. We restart the process for another edge leaving $s$, which was not considered yet, if such an edge exists.

\medskip

\noindent\emph{$\S$ Correctness and Complexity.}
An edge $(v,u)$ is said to explicitly appear on the stack ${\mathcal S}$ when either (A) the path-objects $y=(v,link(u),\uparrow,1,u,\alpha)$, with $\alpha=0$ if and only if $link(u)=\uparrow$, and $z = (u, link(down(u)), nb(down(u)), d(u), f, \alpha')$, with $\alpha'=0$ if and only if $nb(down(u))=link(down(u))=\uparrow$, are pushed in ${\mathcal S}$ or when (B) a path-object $x=(v,u,\uparrow ,\ell,z,1)$, which was on top of the stack, is updated to $y=(v,link(u),\uparrow,\ell,z,\alpha)$, with $\alpha=0$ if and only if $link(u)=\uparrow$, and $z=(u,link(down(u)),nb(down(u)),d(u),f, \alpha')$, with $\alpha'=0$ if and only if $nb(down(u))=link(down(u))=\uparrow$, is pushed in the stack. The time of the explicit appearance of $(v,u)$ in the stack is when (A) or (B) happens.

The insertion-prefix $p_{v,u}$ of the edge $(v,u)$ is the path associated to the path-objects stored in ${\mathcal S}$, at the time $(v,u)$ explicitly appears, and is defined as the path corresponding to the nodes appearing below (and including) the path-object $z$, at the moment of its appearance, in order from bottom to the top of that stack. This path is defined by writing the paths connecting consecutive nodes on the stack. More precisely, if $q$ and $q'$ are nodes appearing in consecutive path-objects on the stack ($q$ occurring below $q'$), then $(q,q')$ can either be the (default or non-default) edge of $D(G)$ connecting $q$ to $q'$, or the part of the $q-f$-default path starting in $q$ and ending with $q'$. 

The edge $(v,u)$ is said to be removed from ${\mathcal S}$ when one of the following happens: another edge $(v,u')$ explicitly appears on the stack, or the top of ${\mathcal C}$ is updated and points (for the first time after $(v,u)$ explicitly appeared in ${\mathcal S}$) to an element below $y$. 

It can be shown by induction on $d(u)$ that between the step when an edge $(v,u)$ appears in the stack, and the step when it is removed, all paths starting with $p_{v,u}$ followed by $u-f$-paths are incrementally enumerated.

The base case is $d(u)=1$ (as $f$ is the only node with $d(\cdot)=0$, and no edge $(v,f)$ can appear explicitly in the stack), and it is immediate. Basically, once $(v,u)$ appears, we enumerate the single $u-f$-path, corresponding to the single edge leaving $u$ (towards $f$), preceded by the path corresponding to $p_{v,u}$.

Let us now assume that the statement holds whenever $d(u)\leq k$, and show that it also holds for $d(u)=k+1$. Assume, therefore, that an edge $(v,u)$ appears in the stack, with $d(u)=k+1$. The first path that we enumerate is that corresponding to $p_{v,u}$ followed by the default path from $u$ to $f$. Then, we find the first branching node $v'$ on this default path from $u$ to $f$, and explore the edge $(v',v'')$, where $v''=link(down(v'))$; this means that the respective edge explicitly appears in the stack. By induction, all paths corresponding to $p_{v',v''}$ (i.e., $p_{v,u}$, followed by the path from $u$ to $v'$, and the edge from $v'$ to $v''$), completed by $v''-f$-paths, are enumerated now. This process is repeated for all branching nodes on the path from $u$ to $f$. After all these are enumerated, the edge $(u,u')$ is inserted, where $u'=link(down(u))$ (if such an edge exists). By induction, all paths corresponding to $p_{v,u}$, followed by the edge $(u,u')$, and then all $u'-f$-paths are enumerated. The process is now repeated for all nodes $u''\neq u'$ such that $\rightedget{u'}{u''}{\ell}$, with $\ell=level(u')=level(u'')$. Putting all these together, it follows that the statement holds for $u$ (all paths corresponding to $p_{v,u}$, followed by $u-f$-paths are correctly enumerated). 

Based on the above, between the point when an edge $(s,v)$ appears on the stack, and the point it is removed, we incrementally enumerate all paths starting with $(s,v)$ followed by $v-f$-paths. So, our algorithm works correctly.

As for the complexity of our algorithm: the preprocessing clearly takes linear time. Then, the time between completely producing the outputs corresponding to two consecutive paths is also, clearly, constant: we use ${\mathcal C}$ to determine and move to the new top of ${\mathcal S}$, in $\O(1)$ time; then we update the top element of ${\mathcal S}$ in $\O(1)$ time, and insert at most two new elements on this stack; then we produce the output, which consists of a sequence of at most three modifications of the previous path. In conclusion, everything still takes $\O(1)$ time. This concludes the proof that our incremental enumeration algorithm works with $\O(1)$ delay, after a linear time preprocessing.
\end{proof}

Worth noting, our incremental enumeration algorithm can be easily transformed to also rank and unrank paths in graphs, just as for the algorithm of \cite{adamson2024enumerating}. Moreover, it also can be used to count the $s-f$-paths in $D(G)$; this can be done in $\O(|V|)$ steps by using a bottom-up strategy computing the number of $v-f$-paths originating in every node, and also computing, for each node $u$, the number of $x-f$-paths originating in nodes $\rightedget{u}{x}{}$ with $level(u)=level(x)$ (as soon as we computed the number of paths originating in every node of that level). 
Such applications are not in the focus of this paper; we only mention them as they support the usefulness of our approach beyond the direct applications in the enumeration of $\mas$s and $\sas$s. \looseness=-1
One obtains the following result immediately.
\begin{restatable}{corollary}{dagstandardenum}\label{thm:DAG_standard_enumeration}
Given an $m$-skeleton DAG $G=(V,E)$, we can preprocess it in $\O(|V|)$ time and enumerate the $s-f$-paths in $D(G)$ with output-linear delay. 
\end{restatable}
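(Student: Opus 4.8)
The plan is to invoke the incremental enumeration algorithm of Theorem~\ref{thm:DAG_enumeration} essentially verbatim, and to turn it into a standard enumeration algorithm by emitting, after each of its incremental steps, the current $s$--$f$-path written out in full. The preprocessing is exactly the one used in Theorem~\ref{thm:DAG_enumeration}, so it still runs in $\O(|V|)$ time, and nothing new is required there.

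For the output, I would rely on the observation already recorded in the proof of Theorem~\ref{thm:DAG_enumeration}: at any moment, if $s, v_1, \dots, v_k$ are the first components of the path-objects stored in ${\mathcal S}$ from the bottom to $top({\mathcal S})$, and $v_{k+1}=f$, then the path currently enumerated is obtained by concatenating, for $i\in[0:k]$ (with $v_0=s$), either the edge $(v_i,v_{i+1})$ of $D(G)$ whenever such an edge exists -- which can be tested in $\O(1)$ time via $down$ and $pos$, as explained there -- or else the segment of the $v_i$--$f$-default path from $v_i$ to $v_{i+1}$, whose edges are recovered one by one by repeatedly following $down$. Thus, after the incremental step that produces the edit-script transforming the $t$-th enumerated path into the $(t+1)$-th one, we additionally traverse ${\mathcal S}$ in this way and output the $(t+1)$-th path $s_{t+1}$ explicitly; this traversal takes $\O(|s_{t+1}|)$ time, since each path-object on the stack contributes at least one node of $s_{t+1}$ and each default-path segment is expanded in time proportional to its number of edges.

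Putting the pieces together gives the claim. The preprocessing is $\O(|V|)$. Between completing the output of $s_t$ and completing the output of $s_{t+1}$, the algorithm performs one incremental step of the algorithm of Theorem~\ref{thm:DAG_enumeration}, costing $\O(1)$ time, followed by the reconstruction of $s_{t+1}$ from ${\mathcal S}$, costing $\O(|s_{t+1}|)$ time; hence the delay is $\O(|s_{t+1}|)$, i.e., output-linear. I do not expect a genuine obstacle here: the only point that needs care is verifying that the reconstruction never spends more than $\O(|s_{t+1}|)$ time -- in particular that it does not re-scan portions of ${\mathcal S}$ unrelated to the current path and that the total length of the expanded default-path segments telescopes to $\O(|s_{t+1}|)$ -- but both facts follow directly from the structure of ${\mathcal S}$ maintained in the proof of Theorem~\ref{thm:DAG_enumeration}.
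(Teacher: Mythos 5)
Your proposal is correct and matches the paper's argument: the paper likewise derives the corollary by running the incremental algorithm of Theorem~\ref{thm:DAG_enumeration} and, at each step, explicitly outputting the current path from the stack ${\mathcal S}$ in time linear in its length, exactly as you describe.
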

\begin{proof}
By the definition of incremental enumeration algorithms, we can output explicitly the currently enumerated object, in linear time w.r.t. its length, at every step of the enumeration. In the proof of Theorem \ref{thm:DAG_enumeration}, we have explained how the currently enumerated path can be obtained, in linear time w.r.t. its length, from the maintained data structures. 
\end{proof}

\section{Enumerating Shortest Absent Subsequences Optimally}\label{sec:SASMAS}


We firstly aim to apply Theorem \ref{thm:DAG_enumeration} for the enumeration of $\sas(w)$.
\begin{restatable}{theorem}{sasskeletontheorem}\label{thm:SAS_skeleton}
    For a string $w\in \Sigma^n$, with $\iota(w)=k$, we can construct in $\O(n)$ time a $(k+2)$-skeleton DAG $G$, with $\O(n)$ nodes, source $s$ and sink $f$, such that $\sas(w)$ can be bijectively mapped to the $s-f$-paths of $D(G)$.
\end{restatable}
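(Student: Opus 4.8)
The plan is to build the skeleton DAG so that each level $\ell \in [k]$ corresponds to the choice of an embedding position inside the $\ell$-th arch $\ar_w(\ell)$, level $k+1$ corresponds to the choice of the final, unembeddable letter, and levels $0$ and $k+2$ host the source $s$ and sink $f$ respectively. Concretely, for each $\ell\in[k]$, the nodes on level $\ell$ should be exactly those positions $i\in\ar_w(\ell)$ such that $i$ can be the $\ell$-th position of the canonical embedding of some $\sas$; by the property of the $\dist[\cdot]$ array recalled in the preliminaries, these are precisely the positions $i\in\ar_w(\ell)$ with $\dist[i]=k-\ell+2$ that are the \emph{first} occurrence of $w[i]$ in $\ar_w(\ell)$. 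The sibling order $\rightedge{}{}{\ell}$ on level $\ell$ is the natural left-to-right order of these positions inside $w$. For level $k+1$ the nodes are the letters $a\in\Sigma$ (or rather, representatives of them), ordered arbitrarily, and such a node $a$ will be reachable exactly when $a$ does not occur after the chosen position in $\ar_w(k)$, i.e.\ when the $\sas$ can be completed by appending $a$.

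Next I would specify the edges so that, after taking the sibling-closure in $D(G)$, an $s$–$f$ path spells out exactly (the canonical embedding of, plus the last letter of) one $\sas$. The key is that from a node $i$ on level $\ell<k$ the canonical embedding forces the $(\ell{+}1)$-th position to be $\nextpos[w[i'],\,i+1]$-type data restricted to the next arch; but since in a skeleton DAG each non-source node has a \emph{single} edge going up a level, I will let $down(i)$ point to the \emph{leftmost} admissible level-$(\ell+1)$ node reachable from $i$, and rely on the sibling edges of $D(G)$ to make all the other admissible successors reachable. Here "admissible" means: a first-occurrence position $j$ in $\ar_w(\ell+1)$ with $\dist[j]=k-\ell+1$ such that appending $w[j]$ is consistent with having $w[i]$ at the previous step — which, because every arch is $1$-universal and we take first occurrences, turns out to impose no constraint linking consecutive arches beyond the canonical-embedding bookkeeping, so the set of admissible level-$(\ell+1)$ successors of $i$ is an \emph{interval} (a suffix, in fact) of the sibling order on level $\ell+1$. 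That interval structure is exactly what a skeleton DAG can encode: $down(i)$ is the left end of the interval, and following sibling edges walks through the rest. The same reasoning gives the edges from level $k$ to level $k+1$ (admissible last letters form a down-set/up-set in the chosen order, so again an interval — I would choose the order on level $k+1$ to make it a suffix) and the edges from $s$ to level $1$ and from level $k+1$ to $f$.

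Then I would verify the bijection and the complexity. For the bijection: given an $s$–$f$ path, reading off the first components of the level-$1,\dots,k$ nodes gives positions $i_1<\dots<i_k$ with $i_\ell\in\ar_w(\ell)$ each a first occurrence with the right $\dist$ value, which by Theorem~\ref{thm:mas} / the $\sas$ characterisation recalled above is the canonical embedding of an $\sas$-prefix, and the level-$(k+1)$ node gives a letter not occurring after $i_k$, so the concatenation is an $\sas$; conversely the canonical embedding of any $\sas$ yields such a path, and distinct $\sas$s give distinct paths because the canonical embedding together with the last letter determines the $\sas$. For the complexity: using $\dist[\cdot]$, $\nextpos[\cdot,\cdot]$, $\firstPosArch[\cdot,\cdot]$, $\lastPosArch[\cdot,\cdot]$ (all computable in $\O(n)$ time), together with $\nextArray,\prevArray$ to detect first occurrences within an arch, the node sets of all levels are identified in total $\O(n)$ time, there are $\O(n)$ nodes and $\O(n)$ skeleton edges, and each $down(\cdot)$ pointer — the left end of the admissible interval on the next level — can be computed in amortised constant time by scanning levels left to right.

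The main obstacle I expect is proving that the set of admissible successors of a level-$\ell$ node really is a contiguous suffix (interval) in the sibling order of level $\ell+1$ — i.e.\ that the monotonicity needed for a skeleton representation actually holds. This hinges on a careful analysis of how the canonical embedding interacts with the $\dist[\cdot]$ values and with the "first occurrence in the arch" condition across consecutive arches: one must show that if position $j$ on level $\ell+1$ is a valid successor of $i$ and $j'$ is to the right of $j$ on that level, then $j'$ is also a valid successor of $i$. I would prove this by using that each arch is $1$-universal, that $\dist[j]=k-\ell+1$ means precisely that a full suffix-embedding plus one extra unembeddable letter exists starting at $j$, and that being "first occurrence in $\ar_w(\ell+1)$" is a property of $j$ alone, not of $i$; the only genuine link between $i$ and $j$ is $i<j$, and since all admissible $j$'s lie in the same arch $\ar_w(\ell+1)$ which comes entirely after $\ar_w(\ell)$, this inequality is automatic. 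Getting all these pieces to line up cleanly — and in particular handling the boundary levels ($s$ to level $1$, and level $k$ to level $k+1$ to $f$) with the same interval property — is where the real work lies, but no step should require more than the data structures already recalled from \cite{KoscheKMS22}.
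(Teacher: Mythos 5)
Your overall architecture (levels $=$ arches, level-$\ell$ nodes $=$ positions $i\in\ar_w(\ell)$ with $i=\firstPosArch[\ell,w[i]]$ and $\dist[i]=k-\ell+2$, a final level of letters, bijection via canonical embeddings) is exactly the paper's, but there is a genuine gap at precisely the point you flag as ``the main obstacle'', and your proposed resolution of it is wrong. The admissibility condition linking a level-$\ell$ node $i$ to a level-$(\ell+1)$ node $j$ is \emph{not} just $i<j$: since the path must spell the canonical embedding of an $\sas$, you need $j=\nextpos[w[j],i+1]$, which (because $j$ is the first occurrence of $w[j]$ in $\ar_w(\ell+1)$) is equivalent to $\lastPosArch[\ell,w[j]]\leq i$, i.e.\ the letter $w[j]$ must not reoccur in $\ar_w(\ell)$ strictly after $i$. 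Your claim that, beyond $i<j$, admissibility ``imposes no constraint linking consecutive arches'' is therefore false, and dropping the constraint breaks the bijection: a path using such a forbidden pair reads off positions that are not the canonical embedding of the spelled string, the spelled string need not be absent at all (embedding that letter at its earlier occurrence leaves an entire extra arch available), and strings that are $\sas$s would be spelled by more than one path.

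Because the true child set of $i$ is $C(i)=\{j\in F_{\ell+1}\mid \lastPosArch[\ell,w[j]]\leq i\}$, it is in general not a suffix (nor even a contiguous block) of level $\ell+1$ under the natural left-to-right position order you propose. Schematically, if $\ar_w(\ell)$ ends with $\cdots 2\,1\,3$ and $\ar_w(\ell+1)$ begins with $2\,3\,1\cdots$, then from the level-$\ell$ node at that occurrence of $2$ the only admissible successor carries letter $2$, which is the \emph{leftmost} node of level $\ell+1$ in position order; taking $down$ to it and closing under the sibling edges of $D(G)$ (which is forced by the skeleton mechanism) would wrongly add the nodes for $3$ and $1$ as children. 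The missing idea, which is the crux of the paper's proof, is to order level $\ell+1$ not by position but by \emph{decreasing} $\lastPosArch[\ell,w[\cdot]]$, i.e.\ by the reversed order of the sorted list $G_\ell$ of last occurrences in arch $\ell$: with that sibling order $C(i)$ is a suffix by definition, one gets the monotonicity $C(i)\subseteq C(i')$ for $i<i'$ on the same level, and $down(i)$ (the largest $d\in G_\ell$ with $d\leq i$, mapped to $\firstPosArch[\ell+1,w[d]]$) is computed by a single merge-scan of the sorted lists $F_\ell$ and $G_\ell$, giving the $\O(n)$ bound. The same device is needed from level $k$ to level $k+1$, where additionally the completing letter must be absent from all of $w[i+1:n]$, hence also from $\rest(w)$, not merely from the remainder of $\ar_w(k)$ as you state. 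So the construction cannot be completed as you describe it; with the corrected admissibility condition and the $G_\ell$-based sibling orders it becomes exactly the paper's construction.
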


\begin{proof}
\noindent \emph{$ \S$ Preliminaries.}
We will make use of the array $\dist[\cdot]$, with $n$ elements, as well as the arrays  $\firstPosArch[\cdot,\cdot]$ and $\lastPosArch[\cdot,\cdot]$ introduced in the preliminaries. For simplicity of the exposure, let $F_\ell=\{i\in \ar_w(\ell)\mid i=\firstPosArch[\ell,w[i]]$ and $\dist[i]=k-\ell+2\}$, and let $F=\bigcup_{\ell=1}^k F_\ell$. 
Recall that, for $\ell\in [k]$, if $i\in \ar_w(\ell)$ then there exists an $\sas$ $v$, with $|v|=k+1$, of $w$ such that $i$ is part of the canonical embedding of $v[1:k]$ in $w$ if and only if $i\in F_\ell$. 

Also, for simplicity of exposure, we use the notation $w[n+a]=a$, for all letters $a\in [\sigma]=\Sigma$.

Let $V=\{s,f\} \cup F \cup \{n+i\mid i\in [\sigma]\}$. Let $level:V\rightarrow (\{0\}\cup [k+2])$ be a function such that $level(s)=0$, $level(f)=k+2$, $level(n+i)=k+1$ for all $i\in \Sigma$, and, for $i\in F_\ell$, $level(i)=\ell$. Now, let $E = \{(s,i)\mid i\in F, level(i)=1\} \cup \{(n+i,f)\mid i\in [\sigma]\} \cup \{(i,j)\mid i,j\in F, level(i)=level(j)-1, \nextpos[w[j],i+1]=j\} \cup \{(i,n+j)\mid i\in V, level(i)=k+1, j\notin \al(w[i+1:n])\}$. 

Note that, for $\ell\in [k]$ and $i\in F_{\ell}$, $j\in F_{\ell+1}$, we have that $(i,j)\in E$ if and only if $\lastPosArch[\ell, w[j]]\leq i$. 

We define the DAG $K=(V,E)$, and claim that there is a bijection between $s-f$-paths in $K$ and $\sas(w)$. Let us first consider an $\sas$ $v$ of $w$ and let $i_1 < \ldots <i_k$ be the canonical embedding of $v[1:k]$ (recall that $|v|=k+1$, and also that $i_j\in \ar_w(j)$ for all $j\in [k]$). Then, the following hold:
\begin{itemize}
    \item $i_j\in F_j$ for all $j\in [k]$, as $i_j$ is part of a canonical embedding of an $\sas$, and $i_j\in \ar_w(j)$ (so, $level(i_j)=j$). Thus, $i_1\in F_1$ and $(s,i_1)\in E$.
    \item $i_{j}=\nextpos[w[i_j],i_{j-1}+1]$, for all $j\in [k]$, with $i_0=0$, as $i_1<\ldots<i_k$ is the canonical embedding of $v[1:k]$. Thus, $(i_j,i_{j+1})\in E$, for all $j\in [k-1]$. 
    \item $v[k+1]\notin \al(w[i_k+1:n])$. Thus, $(i_k,n+v[k+1])\in E$. 
\end{itemize}
Therefore, we can map the $\sas$ $v$ to a path in the DAG $K$, namely the one corresponding to the nodes $s,i_1,\ldots,i_k,n+v[k+1],f$. This mapping is clearly injective. 

Now consider an $s-f$-path in $K$. Assume that the nodes on this path are $s,i_1,\ldots,i_k,n+v[k+1],f$. This can be injectively mapped to the string $v=w[i_1]w[i_2]\cdots w[i_k]v[k+1]$. By the definition of $E$, we have that $i_j=\nextpos[w[i_j],i_{j-1}+1]$, so $i_1<\ldots <i_k$ is the canonical embedding of $v[1:k]$, and $v[k+1]\notin \al(w[i_k+1:n])$. Therefore, $v$ is an $\sas$ of $w$. 

So, at this point, we have a DAG $K$, with two nodes $s$ and $f$, whose $s-f$-paths correspond bijectively to $\sas(w)$. In the following, we show that we can construct in linear time $\O(n)$ a $(k+2)$-skeleton DAG $G$ such that $D(G)=K$. 

We begin by noting that there are at most $\sigma$ nodes $i\in V$ such that $level(i)=\ell$, for all $\ell \in  \{0\}\cup [k+2]$. In other words, each level of $K$ contains at most $\sigma$ nodes. As $n=|w|\geq \iota(w) \sigma = k\sigma$, we get that the number of nodes of $K$ is $\O(n)$. Let us first define, for each $i\in F$, the set $C(i)=\{j\mid (i,j)\in E\}$. We have that, if $level(i)=\ell<k$, then $C(i)=\{j\mid j\in F_{\ell+1}, \lastPosArch[\ell,w[j]]\leq i\}$, and if $level(i)=k$, then $C(i) = \{n+a\mid a\in \Sigma\setminus\al(r(w)), \lastPosArch[k,a]\leq i\}$. We note that if $level(i)=level(i')$ and $i<i'$ then $C(i)\subseteq C(i')$; indeed, if $level(i)=level(i')=\ell$, $j=\nextpos[w[j],i+1]$ for some $j$ with $level(j)=\ell+1$, then $j=\nextpos[w[j],i'+1]$.

\medskip

\noindent \emph{$ \S$ Skeleton Construction.} 
The algorithm constructing the $(k+2)$-skeleton DAG $G$ starts with a linear time preprocessing phase in which we construct the arch factorisation of $w$, as well as the arrays $\dist[\cdot]$, $\firstPosArch[\cdot,\cdot]$, and $\lastPosArch[\cdot,\cdot]$. Further, we go through the letters of each arch $\ell$, for $\ell \in [k]$, and mark the positions $i$ such that $i=\firstPosArch[\ell,w[i]]$ and $\dist[i]=k-\ell+2$. These marked positions are the elements of $F$, and we can extract the increasingly sorted list $F_\ell$ of the positions of $F$ which are part of $\ar_w(\ell)$, for all $\ell\in [k]$. Similarly, we go through the letters of each arch $\ell$, for $\ell \in [k]$, and mark (with a different marking as before) the positions $i$ such that $i=\lastPosArch[\ell,w[i]]$. We can now extract the increasingly sorted list $G_\ell$ of the positions $\lastPosArch[\ell,a]$, for $a\in \Sigma$. 

When $\ell<k$, we remove from $G_\ell$ the positions $i$ such that $\dist[\firstPosArch[\ell+1,w[i]]]\neq k-(\ell+1)+2$. When $\ell=k$, we remove from $G_\ell$ the positions $i$ such that $w[i]$ occurs in $r(w)$ (the rest of $w$). 

We can now define the $(k+2)$-skeleton DAG $G=(V,E')$. The set of nodes is $V$, and the $level$ function is the one defined above for the graph $K$. By definition, the nodes on level $\ell$ are the elements of the set $F_\ell$. 

We further need to define the order $\rightedge{}{}{\ell}$ for each level $\ell\in [k+1]$ (and, as such, the edges between the nodes on the same level). On level $1$, $\rightedge{}{}{1}$ coincides with the inverse natural order on integers (i.e., $\rightedge{i}{j}{1}$ if and only if $i > j$); note that the order of these nodes does not really play any role in our algorithms. On level $\ell$, for $\ell\in [2:k+1]$, the ordering is the one induced by the reversed order of the elements in $G_{\ell-1}$: for $r,q\in F_{\ell}$, $\rightedge{r}{q}{\ell}$ (or, in other words, there is a path from $r$ to $q$ on level $\ell$ of $G$) if and only if $\lastPosArch[\ell-1,w[r]]>\lastPosArch[\ell-1,w[q]]$. Equivalently, for $\ell\leq k$, we have $\rightedge{\firstPosArch[\ell,w[r]]}{\firstPosArch[\ell,w[q]]}{\ell}$ if and only if $\lastPosArch[\ell-1,w[r]]$ comes after $\lastPosArch[\ell-1,w[q]]$ in $G_{\ell-1}$. The ordering on each level induces the edges of $E'$ between nodes with the same level.

Before defining the edges of $E'$ connecting nodes on different levels, recall that, for some $i\in F_\ell$, $C(i)=\{j\in F_{\ell+1} \mid \lastPosArch[\ell,w[j]] \leq i\}$. Thus, for node $i$ on level $\ell\in [k]$, the single edge $(i,j)$ with $level(j)>level(i)$ is defined as follows: let $j' =\max \{d\in G_{\ell}\mid i\geq d\}$ and then set $j=\firstPosArch[\ell+1,w[j']]$ (if $\ell<k$) or $j=n+w[j']$ (if $\ell=k$). Moreover, there is an edge from $s$ to the smallest node $\min\{v\in F_1\}$ w.r.t. $\rightedge{}{}{1}$ on level $1$, and there is one edge from each node on level $k+1$ to $f$. 

At this point it is important to note that for some $(i,j)\in E'$, where $level(i)<level(j)$, we have that $r\in C(i)$ if and only if $\rightedge{j}{r}{\ell+1}$ (that is, $r$ is accessible from $j$ on level $level(j)$). It is now immediate that $G$ is a $(k+2)$-skeleton DAG, and that $D(G)=K$. 

\medskip

\noindent \emph{$ \S$ Complexity.} All components of $G$ can be computed in linear time in a straightforward manner, except the set of edges connecting nodes on different levels. 

These edges are computed by the following procedure. For each $\ell \in [k]$, let $d$ be the first element in $G_\ell$ and then, for each node of $i\in F_\ell$ (recall that $F_\ell$ is a list sorted increasingly) execute the following process. Go through $G_\ell$, starting from element $d$, until we reach an element $j''>i$; set $d=j''$, let $j'$ be the predecessor of $j''$ in $G_\ell$, and define the edge $(i,\firstPosArch[\ell+1,w[j']])$ (if $\ell+1\leq k$) or the edge $(i, n + w[j'])$ (if $\ell + 1= k+1$); then repeat this for the next $i\in F_\ell$. 

This process takes $\O(|F_\ell|+|G_\ell|)$ time to compute the edges going from level $\ell$ to level $\ell+1$. Summing up over all levels, we obtain that, overall, computing the set of edges connecting nodes on different levels of $G$ takes linear time $\O(n)$. 

So, in conclusion, $G$ can be computed in linear time from $w$, and the $s-f$-paths of $D(G)$ correspond bijectively to the $\sas(w)$. 

This concludes our proof.
\end{proof}

Two exemplifying figures (Figures~\ref{fig:massas1}, \ref{fig:massas2}) as well as pseudocode implementing its construction (Algorithm~\ref{SAS:algo:main}) are given in Appendix~\ref{sec:figures}. By Theorems \ref{thm:DAG_enumeration},  \ref{thm:DAG_standard_enumeration}, and \ref{thm:SAS_skeleton}, we obtain the following results:\looseness=-1
\begin{restatable}{theorem}{sasenumtheorem}\label{thm:SAS_enumeration}
    Given a string $w\in \Sigma^n$, after an $\O(n)$ time preprocessing, we can incrementally enumerate $\sas(w)$ with $\O(1)$ delay.
\end{restatable}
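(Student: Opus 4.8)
The plan is to compose the two structural results already established: Theorem~\ref{thm:SAS_skeleton}, which builds in $\O(n)$ time a $(k+2)$-skeleton DAG $G$ with $\O(n)$ nodes whose $s$--$f$-paths in $D(G)$ are in bijection with $\sas(w)$, and Theorem~\ref{thm:DAG_enumeration}, which incrementally enumerates the $s$--$f$-paths of any skeleton-encoded DAG with $\O(1)$ delay after linear-time preprocessing. Thus the proof is essentially a gluing argument: run the construction of Theorem~\ref{thm:SAS_skeleton}, feed $G$ to the algorithm of Theorem~\ref{thm:DAG_enumeration}, and translate each enumerated path back to an element of $\sas(w)$.

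The first step is to invoke Theorem~\ref{thm:SAS_skeleton} to obtain, in $\O(n)$ time, the $(k+2)$-skeleton DAG $G=(V,E)$ with $|V|\in\O(n)$, together with the explicit bijection $\phi$ from $s$--$f$-paths of $D(G)$ to $\sas(w)$: a path through nodes $s, i_1, \dots, i_k, n+a, f$ maps to the string $w[i_1]\cdots w[i_k]\,a$. Since $|V|\in\O(n)$, the $\O(|V|)$ preprocessing of Theorem~\ref{thm:DAG_enumeration} is $\O(n)$, so the combined preprocessing stays linear. The second step is to run the incremental enumeration of the $s$--$f$-paths of $D(G)$ from Theorem~\ref{thm:DAG_enumeration}; this already yields an incremental enumeration with $\O(1)$ delay at the level of \emph{paths}, where each output is the initial encoding $\Phi(s_1)$ followed by edit-scripts of the form ``return to node $z$, follow $\Phi(\gamma)$.''

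The third step is to carry the enumeration of paths over to an enumeration of strings via $\phi$. Here one observes that $\phi$ is a \emph{prefix-preserving} correspondence in the relevant sense: a prefix of a path ending at node $i_r$ on level $r$ corresponds to the prefix $w[i_1]\cdots w[i_r]$ of the associated $\sas$, and the letter $w[i_r]$ is determined solely by the node $i_r$ (it is $w[i]$ when the node is the position $i\in F$, and $a$ when the node is $n+a$). Consequently an edit-script for paths — ``return to node $z$ on level $\ell$, then follow the path $\gamma$'' — translates directly into an edit-script for strings: ``keep the length-$(\ell{-}1)$ prefix of the previous string, then append the letters read off the nodes of $\gamma$.'' Since a default edge in $G$ corresponds to following the unique canonical continuation, the string-letters along a default path $\defPath(x,y)$ are likewise recoverable in $\O(1)$ amortised time, exactly as the algorithm of Theorem~\ref{thm:DAG_enumeration} already outputs them; and the final node $n+a$ of each path contributes the last letter $a$ of the $\sas$. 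Thus each incremental output describing how path $s_t$ becomes $s_{t+1}$ becomes, with no overhead, an incremental output describing how the $\sas$ $\phi(s_t)$ becomes $\phi(s_{t+1})$, and on demand the current $\sas$ is output explicitly in $\O(|{\sas}|)$ time by reading the node sequence off the stack $\mathcal S$, just as Theorem~\ref{thm:DAG_enumeration} outputs the explicit path. Because $\phi$ is a bijection, every $\sas$ is output exactly once, and the delay remains $\O(1)$.

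I do not expect a genuine obstacle: the only point requiring care is making explicit that the path-to-string translation is purely local — each node of $G$ carries a single, $\O(1)$-computable letter, and the skeleton's default edges correspond to canonical embedding steps — so that no work beyond that already performed by the algorithm of Theorem~\ref{thm:DAG_enumeration} is needed to emit the string-level edit-scripts. This is the step I would spell out most carefully; everything else is a direct appeal to the two cited theorems.
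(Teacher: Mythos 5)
Your proposal is correct and matches the paper's argument exactly: the paper obtains this theorem directly by composing Theorem~\ref{thm:SAS_skeleton} (linear-time construction of the $(k+2)$-skeleton DAG with the path--$\sas$ bijection) with Theorem~\ref{thm:DAG_enumeration} (linear preprocessing, $\O(1)$-delay incremental enumeration of $s$--$f$-paths). Your additional remark that the path-to-string translation is purely local is the right point to make explicit, and it is consistent with how the paper treats the bijection.
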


\begin{corollary}\label{cor:SAS_standard_enumeration}
    Given a string $w\in \Sigma^n$, after an $\O(n)$ time preprocessing, we can enumerate $\sas(w)$ with output-linear delay.
\end{corollary}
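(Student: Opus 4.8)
The plan is to obtain Theorem~\ref{thm:SAS_enumeration} by plugging Theorem~\ref{thm:SAS_skeleton} into Theorem~\ref{thm:DAG_enumeration}. Given $w\in\Sigma^n$ with $\iota(w)=k$, I would first run the construction of Theorem~\ref{thm:SAS_skeleton}, obtaining in $\O(n)$ time the $(k+2)$-skeleton DAG $G=(V,E)$ with $|V|=\O(n)$, source $s$, sink $f$, and the bijection $\phi$ between the $s-f$-paths of $D(G)$ and $\sas(w)$. I would then apply the incremental enumeration algorithm of Theorem~\ref{thm:DAG_enumeration} to $G$: after a further $\O(n)$ preprocessing it enumerates the $s-f$-paths of $D(G)$ with $\O(1)$ delay, outputting the encoding $\Phi(s_1)$ of the first path and then, for each subsequent path, a constant-size edit script ``return on the previous path to $z$, follow the $z-f$-path $\gamma$'' (written via its encoding). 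The preprocessing budgets add up to $\O(n)$, so the only remaining task is to transfer this path-enumeration to an enumeration of $\sas(w)$ along $\phi$ without increasing the delay.

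To do that, the observation I would make precise is that $\phi$ is \emph{level-local}: by the proof of Theorem~\ref{thm:SAS_skeleton}, an $s-f$-path visits nodes $s,i_1,\dots,i_k,n+v[k+1],f$ with $level(i_j)=j$ and $level(n+v[k+1])=k+1$, and $\phi$ maps it to $v=w[i_1]\cdots w[i_k]\,v[k+1]$ (using the convention $w[n+a]=a$). Hence the node lying on level $j$ of a path determines exactly the $j$-th letter of the associated $\sas$, and nothing else about $\phi$'s value depends on that node. From this I would read off: the default $s-f$-path corresponds to a fixed $\sas$ whose $j$-th letter is obtained by following $down(\cdot)$-edges; a segment $\defPath(x,y)$ occurring in a path-encoding corresponds to the (thereby uniquely determined) contiguous block of letters in positions $level(x),\dots,level(y)$ of the $\sas$; a non-default edge $(x,y)$ corresponds to the single letter in position $level(y)$; and an edit script ``return to $z$, follow $\gamma$'' becomes ``keep the first $level(z)$ letters of the previous $\sas$ and continue with the letters encoded by $\Phi(\gamma)$''. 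Thus each $\sas$-side encoding and edit script has the same constant number of pieces as its path-side counterpart and is produced with $\O(1)$ delay; and on demand the explicit current $\sas$ is produced by first producing the explicit current path in time linear in its length (as guaranteed by Theorem~\ref{thm:DAG_enumeration}) and then replacing each node by its letter, i.e.\ in $\O(k+1)=\O(|v|)$ time. Corollary~\ref{cor:SAS_standard_enumeration} then follows from this together with Corollary~\ref{thm:DAG_standard_enumeration} in exactly the way the latter follows from Theorem~\ref{thm:DAG_enumeration}.

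I do not anticipate a genuine obstacle, since all the real work lives in Theorems~\ref{thm:SAS_skeleton} and~\ref{thm:DAG_enumeration}; the only point that deserves an explicit sentence is that the $\defPath$-segments of a path really do correspond to fully reconstructible blocks of letters of the $\sas$, which is immediate from level-locality because along a default path each node is $down(\cdot)$ of its predecessor and so each of those letters is forced.
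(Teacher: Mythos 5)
Your proposal is correct and matches the paper's route: the paper likewise obtains this corollary by feeding the $(k+2)$-skeleton DAG of Theorem~\ref{thm:SAS_skeleton} into Theorems~\ref{thm:DAG_enumeration}/\ref{thm:DAG_standard_enumeration} and using the on-demand explicit output of the current path, which is translated letter-by-letter (node on level $j$ gives the $j$-th letter) into the corresponding $\sas$ in output-linear time. The level-locality discussion is extra detail beyond what the corollary needs, but it is accurate and consistent with the paper's bijection.
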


\section{Enumerating Minimal Absent Subsequences Optimally}\label{sec:SASMAS2}

Firstly, we show that the strategy used to enumerate $\sas(w)$ can be used to enumerate $\mas(w)$, and obtain a result similar to, but not as efficient as, Theorem \ref{thm:SAS_enumeration}.

\begin{restatable}{theorem}{masskeletontheorem}\label{thm:MAS_skeleton}
    For a string $w\in \Sigma^n$, over an alphabet of size $\sigma$, we can construct in $\O(n\sigma)$ time an $(n+1)$-skeleton DAG $G$, with $\O(n\sigma)$ nodes, source $s$, sink $f$, such that $\mas(w)$ can be bijectively mapped to the set of $s-f$-paths of $D(G)$.
\end{restatable}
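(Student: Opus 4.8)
The plan is to mimic the construction used for $\sas(w)$ in Theorem \ref{thm:SAS_skeleton}, but now guided by the combinatorial characterisation of $\mas$s given in Theorem \ref{thm:mas} rather than by the $\sas$-characterisation. The key idea is that an $\mas$ of length $m+1$ corresponds, by Theorem \ref{thm:mas}, to a strictly increasing sequence of positions $0=i_0<i_1<\dots<i_m<i_{m+1}=n+1$ satisfying conditions (1)--(4); since the $i_r$ are positions of $w$ (plus the two artificial endpoints), an $\mas$-prefix of length $r$ naturally sits at ``level $r$'' of a DAG, and since $r$ can be as large as $n$ (e.g. $1^{n}$ has the $\mas$ $1^{n+1}$), we need $n+1$ levels, with level $0$ holding the source $s$ and level $n+1$ holding the sink $f$.

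The concrete construction: the node set $V$ will consist of $s$, of $f$, and, for each $r\in[n]$, of a level-$r$ copy of every position $i\in[n]$ that can serve as the $r$-th embedding position of some $\mas$-prefix of length $r$ (i.e. $i$ is the canonical-embedding endpoint of some $\mas$-prefix of length $r$); this is why we may have $\O(n\sigma)$ nodes — each of the $n$ positions can appear on up to $\sigma$ levels, since an $\mas$-prefix of length $r$ that ends at $i$ forces $r\le |w[1:i]|_{\,w[i]}$ is false in general, but the relevant bound is that the number of distinct letters seen constrains things so that the total node count is $\O(n\sigma)$; this counting argument must be done carefully. Edges between level $r-1$ and level $r$ will connect a node $i_{r-1}$ to a node $i_r$ precisely when the canonical-embedding condition $i_r=\nextpos[w[i_r],i_{r-1}+1]$ holds (condition 3 of Theorem \ref{thm:mas}, via the canonical embedding) together with the ``reappearance'' condition (4), namely $w[i_r]\in\al(w[i_{r-2}+1:i_{r-1}])$ — and here the subtlety is that condition (4) depends on $i_{r-2}$, not just on $i_{r-1}$, so a single level-$r$ node cannot in general encode enough state; the standard fix is to let a node on level $r$ record the pair $(i_{r-1},i_r)$, or equivalently to index level-$r$ nodes by the position $i_r$ together with the letter-class $w[i_{r-1}]$, which is exactly where the extra factor of $\sigma$ in the node count comes from. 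Edges into $f$ (from a level-$r$ node representing the pair ending at $i_r$) encode the choice of the last, absent letter $v[m+1]$ satisfying conditions (1)--(3) at the final step, i.e. $v[m+1]\notin\al(w[i_r+1:n])$ while $v[m+1]\in\al(w[i_{r-1}+1:i_r])$, using $\nextArray[\cdot]$, $\prevArray[\cdot]$ and the $\firstPosArch/\lastPosArch$-style data structures to test membership; these can be precomputed in $\O(n\sigma)$ time (note $\nextpos[\cdot,\cdot]$ itself costs $\O(n\sigma)$, which is where the preprocessing bound comes from). After defining $V$, the $level$ function, the sibling orders $\rightedge{}{}{\ell}$ on each level (sorted so that the single ``down'' edge of a node points to the correct extreme sibling, exactly as in the $\sas$ proof, exploiting a monotonicity/nesting property $C(i)\subseteq C(i')$ analogous to the one used there), and the down-edges, one checks that $G$ is an $(n+1)$-skeleton DAG and that $D(G)$ has the property that $s$--$f$-paths are in bijection with $\mas(w)$; the bijection and its injectivity are verified in both directions just as in Theorem \ref{thm:SAS_skeleton}, reading off the string $w[i_1]\cdots w[i_m]v[m+1]$ from the nodes of a path and invoking Theorem \ref{thm:mas} to certify membership in $\mas(w)$, and conversely mapping an $\mas$ to the path through its canonical-embedding positions.

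The main obstacle I expect is handling condition (4) of Theorem \ref{thm:mas} — the requirement $v[r]\in\al(w[i_{r-2}+1:i_{r-1}])$ that looks two steps back — within the skeleton-DAG formalism, since a skeleton node has essentially one unit of ``memory'' (its down-edge) and the sibling structure only lets it reach a contiguous range of same-level siblings. Making this work forces the level-$r$ nodes to carry the extra coordinate (the previous position or its letter), which both inflates the node/edge count to $\O(n\sigma)$ and requires re-checking that the nesting property needed for the skeleton's down-edges still holds in this enriched state space; verifying that property, and verifying that all the auxiliary arrays indexed by (level, position, letter) can still be built in $\O(n\sigma)$ time and that the skeleton edges can be enumerated within that budget by a sweep analogous to the $G_\ell$-sweep in the previous proof, is the technically delicate part. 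Everything else — the $level$ function, the bijection argument, and the overall complexity accounting — follows the template of Theorem \ref{thm:SAS_skeleton} closely.
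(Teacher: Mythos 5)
You correctly isolate the real difficulty (condition (4) of Theorem \ref{thm:mas} looks two steps back, so a node needs one extra coordinate worth of memory, and that extra coordinate is where the factor $\sigma$ should come from), but the concrete construction you sketch does not close the gap, and it differs from the paper's in a way that breaks both the size bound and the skeleton structure. First, you level the DAG by the length $r$ of the $\mas$-prefix and place on level $r$ a copy of every position (enriched by the previous position or its letter). But the same state ``previous position $j$, current position $i$'' is in general the endpoint of $\mas$-prefixes of several different lengths, so levelling by prefix length is not even a well-defined function on these states; you are forced to duplicate them across levels, and then nothing bounds the node count by $\O(n\sigma)$ --- as you yourself concede (``this counting argument must be done carefully'', ``is false in general''). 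Second, the cheaper enrichment you offer, indexing a node by the current position $i_r$ together with the \emph{letter} $w[i_{r-1}]$, is insufficient: condition (4) needs the set $\al(w[i_{r-1}+1:i_r])$, and the letter of $i_{r-1}$ together with $i_r$ does not determine $i_{r-1}$ (several occurrences of that letter may lie between $\prevpos$ of $w[i_r]$ and $i_r$), so with this state space the path--$\mas$ correspondence is no longer a bijection. Third, the property that makes a skeleton representation possible at all --- that the set of admissible successors of a node is a \emph{contiguous suffix} in the sibling order of the target level --- is exactly the delicate point you defer (``requires re-checking that the nesting property \dots still holds''), and it is not automatic in your enriched state space.

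The paper resolves all three points with a different choice of state and level: the level of a node is the \emph{position} $i$ in $w$ (positions strictly increase along an embedding, so this is well defined and gives an $(n+1)$-skeleton), and a node on level $i$ is a pair $(i, j)$ where $j\le i$ is the last occurrence up to $i$ of some letter of $\al(w[1:i])$; semantically the node records the \emph{next} letter of the $\mas$ (namely $w[j]$) rather than the previous position. There are at most $\sigma$ such $j$ per position, giving $\O(n\sigma)$ nodes directly. Condition (4) is then enforced at the transition: from $(i,j)$ one moves to level $k=\nextpos[w[j],i+1]$, and the admissible targets there are exactly those pairs $(k,j')$ with $j'>i$, which form a suffix of the increasingly sorted list of candidates at level $k$ --- this is precisely the contiguity needed for the single down-edge plus sibling chain, and it is what your sketch leaves unverified. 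Finally, the $\O(n\sigma)$ construction time rests on computing these sorted lists incrementally (they change by at most two elements per position) and on a monotone pointer per level when laying down the down-edges; none of this is recoverable from your description as written. So the proposal, as it stands, has a genuine gap rather than being an alternative proof.
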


\begin{proof}
\noindent\emph{$\S $ Preliminaries.}
    Before constructing the skeleton graph $G$, we introduce for each position $i\in[n]$ of $w$ an ordered set $P_i = \{j_{i,1} < j_{i,2}< \ldots < j_{i,n_i}\}$, where $n_i = \abs{\text{alph}(w[1:i])}$, such that for each $\ell\in[n_i]$ holds that $j_{i,\ell}\in P_i$ if and only if $w[j_{i,\ell}]\in\text{alph}(w[j_{i,\ell}:i])$ and $w[j_{i,\ell}]\notin\text{alph}(w[j_{i,\ell}+1:i])$, i.e., $w[j_{i,\ell}:i]$ is the smallest substring ending in position $i$ that contains letter $w[j_{i,\ell}]$. 
 
\medskip

\noindent\emph{$\S $ Construction of the skeleton graph.}
    We will now define the skeleton graph $G = (V,E)$. We create a set of nodes $V = \{s,f\}\cup\{ V_i \mid i\in[n] \}$, with $ V_i = \{ v_{i}^{\ell} \mid \ell\in[n_i]\}$ corresponding to positions from $P_i$ ($ v_{i}^{\ell}$ corresponds to the pair $(i,j_{i,\ell})$). We construct the edges $(v,v') \in E$, as follows: 
    \begin{itemize}
        \item There is an edge $(s,v_{i}^{1})$, if $i = \nextpos[a,1]$. 
        \item For $i\in[n]$ and $\ell\in[n_i]$ there is an edge $(v_{i}^{\ell},v_{k}^{\ell^\prime})$, if $k = \nextpos[w[j_{i,\ell}],i+1] \leq n$ and $\ell^\prime = \arg\min_{t}\{j_{k,t}\in P_k \mid j_{k,t} > i\}$.
        \item For $i\in[n]$ and $\ell\in[n_i]$ we have an edge $(v_{i}^{\ell},f)$, if $w[j_{i,\ell}]\notin\al(w[i+1:n])$. 
        \item For all $i\in[n]$ and $\ell\in[n_i-1]$ there is an edge $(v_{i}^{\ell}, v_{i}^{\ell+1})$.
    \end{itemize}
    
    We define $\text{level}(s) = 0, \text{level}(f) = n+1$, and for all $i\in[n],\ \ell\in[n_i],$ we set $\text{level}( v_{i}^{\ell} ) = i$.
    The set $P_i = \{j_{i,1} < \ldots < j_{i,n_i}\}$ trivially induces an ordering $\rightedge{}{}{i}$ on the nodes $v_{i}^{\ell}$ such that $\rightedge{v_{i}^{1}}{v_{i}^{2}}{i}\rightedge{}{\ldots}{i}\rightedge{}{v_{i}^{n_i}}{i}$ for all $i\in[n]$. 

    By the definition of $V_i$ the number of nodes in each set $V_i$ is upper bounded by $\abs{V_i} \leq \sigma$. This bound on each of the sets yields the overall upper bound $\abs{V} \leq n\sigma + 2\in\O(n\sigma)$ on the full set of nodes. Further, every node, except for $s$ and $f$, has at most two outgoing edges and the number of edges originating in $s$ is also bounded by $\sigma$. Therefore, $\abs{E}\in\O(n\sigma)$ as well.  

\medskip

\noindent\emph{$\S $ Bijection.}
    We claim that the strings in $\mas(w)$ correspond bijectively to the $s-f$-paths in $D(G)$. 

    Let $u$ be an MAS of $w$, with $\abs{u} = m+1$. We recall that by Theorem~\ref{thm:mas}, there exist positions $0 = i_0 < i_1 < \ldots < i_m < i_{m+1} = n+1$ such that the following four conditions are satisfied: \begin{enumerate}
        \item $u = w[i_1]\cdots w[i_m]u[m+1]$,
        \item $u[1] \notin \al(w[1:i_1-1])$,
        \item $u[k] \notin \al(w[i_{k-1}+1:i_k-1])$ for all $k\in[2:m+1]$,
        \item $u[k] \in \al(w[i_{k-2}+1:i_{k-1}])$ for all $k\in[2:m+1]$.
    \end{enumerate}
    
    We will show by induction that $u$ corresponds to a path $s, v_{i_1}^{\ell_1}, v_{i_2}^{\ell_2}, \ldots, v_{i_m}^{\ell_m}, f$. Since $i_1 = \nextpos[u[1], 1]$ and $w[i_2]\in w[1:i_1]$, there exists an $\ell_1\in [n_{i_1}]$ such that $w[j_{i_1,\ell_1}] = w[i_2]$, and thus there is an edge $(s, v_{i_1}^{\ell_1})$. 

    Assume now that the path $s, v_{i_1}^{\ell_1}, v_{i_2}^{\ell_2}, \ldots, v_{i_k}^{\ell_k}$ corresponds to the MAS-prefix $w[i_1]\cdots w[i_k]$ for $k\in [m-1]$ and $w[j_{i_k,\ell_k}] = w[i_{k+1}]=u[k+1]$, then the path $s, v_{i_1}^{\ell_1}, \ldots, v_{i_{k+1}}^{\ell_{k+1}}$ corresponds to the MAS-prefix $w[i_1]\cdots w[i_{k+1}]$ as follows. According to the induction hypothesis, $w[j_{i_k,\ell_k}] = w[i_{k+1}]$ and, according to the definition of $D(G)$, every edge originating in $v_{i_k}^{\ell_k}$ is of the form $(v_{i_k}^{\ell_k}, v_{i_{k+1}}^\ell)$, where $\ell\in\{t \in [n_{i_{k+1}}] \mid j_{i_{k+1}, t} > i_k\}$, since $i_{k+1} = \nextpos[w[i_{k+1}], i_k+1]$. By Theorem~\ref{thm:mas}, $u[k+2] \in w[i_k+1 : i_{k+1}]$, and, according to the definition of $P_{i_{k+1}}$, there exists an $\ell^\prime\in[n_{i_{k+1}}]$ such that $j_{i_{k+1},\ell^\prime} > i_{k}$ and $\al(w[i_k+1:i_{k+1}]) = \al(w[j_{i_{k+1,\ell^\prime}} : i_{k+1}])$. Therefore, we can extend the path $s, v_{i_1}^{\ell_1}, \ldots, v_{i_k}^{\ell_k}$ by node $v_{i_{k+1}}^{\ell_{k+1}}$ for $\ell_{k+1} \in [\ell^\prime : n_{i_{k+1}}]$ such that $w[j_{i_{k+1}, \ell_{k+1}}] = u[k+2]$. 

    Now, if the path $s, v_{i_1}^{\ell_1}, \ldots, v_{i_m}^{\ell_m}$ corresponds to the MAS-prefix $w[i_1]\cdots w[i_m]$ and $w[j_{i_m,\ell_m}] = u[m+1]$, then we can extend this path corresponding to the MAS $u$ by appending the sink $f$: since $u$ is an MAS, $\nextpos[w[j_{i_m,\ell_m}], i_m+1] = \nextpos[u[m+1], i_m+1] = n+1$ holds and thus, the only edge originating in $v_{i_m}^{\ell_m}$ is the edge $(v_{i_m}^{\ell_m}, f)$. We can therefore extend the path $s, v_{i_1}^{\ell_1}, \ldots, v_{i_m}^{\ell_m}$ using this edge, which yields a complete $s-f$-path in $D(G)$. Thus, every MAS has a corresponding $s-f$-path in $D(G)$. 

    For the reverse implication, consider a path from $s$ to $f$ in $D(G)$, namely $s, v_{i_1}^{\ell_1}, v_{i_2}^{\ell_2},\ldots, v_{i_m}^{\ell_m}, f$, where $w[j_{i_k,\ell_k}] = w[i_{k+1}]$ holds for all $k\in[m-1]$. We claim that this path corresponds to the MAS $u = w[i_1]w[i_2]\ldots w[i_m]b$, where $b = w[j_{i_m,\ell_m}]$. 
    According to the definition of $D(G)$, there is an edge $(s, v_{i_1}^{\ell_1})$ if and only if $i_1 = \nextpos[a, 1]$ for some $a=w[i_1]\in\Sigma$. Further, since $(v_{i_k}^{\ell_k}, v_{i_{k+1}}^{\ell_{k+1}})$ is a valid edge if and only if $i_{k+1} = \nextpos[w[j_{i_k,\ell_k}], i_k+1]$ and $j_{i_{k+1},\ell_{k+1}} > i_k$ for all $k\in[m-1]$, $w[i_{k+1}] = w[j_{i_k,\ell_k}] \in \al(w[j_{i_k,\ell_k}: i_k]) \subseteq \al(w[i_{k-1}+1 : i_k])$ and $w[i_{k+1}] \notin \al(w[i_k+1:i_{k+1}-1])$ hold and therefore, conditions 3 and 4 hold for all $k\in[m-1]$. By the same argument, conditions 3 and 4 also hold for $u[m+1] = b$, as $b\notin \al(w[i_m+1:n])$ (since there is an edge $(v_{i_m}^{\ell_m}, f)$) and $b = w[j_{i_m,\ell_m}] \in \al(w[j_{i_m,\ell_m}:i_m]) \subseteq \al(w[i_{m-1} : i_m])$ hold. Therefore, $u$ fulfills the four conditions 1 - 4 and every $s-f$-path in $D(G)$ corresponds to an MAS of $w$.

\medskip

\noindent\emph{$\S $ Complexity.}
    We will first show that the sets $\{P_1, \dots, P_n\}$ can be precomputed in $\mathcal O(n\sigma)$ time and space. 

    An important initial observation is that the sets $P_i = \{j_{i,1},\dots, j_{i,n_i}\}$ and $P_{i+1} = \{j_{i+1,1}, \dots, j_{i+1, n_{i+1}}\}$ differ in at most two elements $j_{i,\ell}$ and $j_{i+1,\ell^\prime}$. Namely, $P_{i+1}$ must contain $i+1 = j_{i+1,n_{i+1}}$ (since $w[i+1:i+1]$ is the smallest interval containing $w[i+1]$) and $P_i$ can not contain $i+1$. 
    If position $i+1$ is the first occurrence of letter $w[i+1]$, then $n_i < n_{i+1}$ and $P_i \subsetneq P_{i+1} = P_i \cup \{i+1\}$. 
    Otherwise, $P_i$ contains an element $j_{i,\ell}$ such that $w[i+1] = w[j_{i,\ell}]$. Since $\text{alph}(w[j_{i,\ell} : i+1]) \supseteq \text{alph}(w[i+1:i+1]) = w[i+1]$, $w[j_{i,\ell}:i+1]$ is not the smallest interval ending in $i+1$ and containing $w[i+1]$, thus, $j_{i,\ell}\notin P_{i+1}$. 
    All other elements $j\in P_i$ are contained in $P_{i+1}$ as well. 
        
    As illustrated in Algorithm~\ref{MAS:algo:KM}, we can maintain a doubly linked list of the latest occurrence of each letter $a\in\Sigma$ up to position $i\in[n]$. In each step, we update letter $w[i]$ and copy the contents of the doubly linked list into row $i$ of an $n\times \sigma$ matrix $\KM$. 
    Since each update uses a constant number of steps and copying takes $\mathcal O(\sigma)$ time, we achieve an overall runtime of $\mathcal O(n\sigma)$ while using data structures of size $\mathcal O(n\sigma)$.  

    Using the sets $\{P_1,\dots,P_n\}$ we can now show that the computation of the edges $(v_i^\ell, v_k^{\ell^\prime})$ and $(v_i^\ell, f)$ for all $i\in[n]$ and $\ell\in[n_i]$, where $k = \nextpos[w[j_{i,\ell}],i+1]$ and $\ell^\prime = \min\{t\in[n_k]\mid j_{k,t} > i\}$ (if $k \leq n$), can be done in $\mathcal O(n\sigma)$ time as well. 

    We observe that, given two nodes $v_{x}^{y}$ and $v_{x^\prime}^{y^\prime}$, $j_{x,y}\in P_x, j_{x^\prime,y^\prime}\in P_{x^\prime}$, such that $w[j_{x,y}] = w[j_{x^\prime,y^\prime}]$, $x^\prime < x$, and $\nextpos[w[j_{x,y}],x+1] = \nextpos[w[j_{x^\prime,y^\prime}], x^\prime+1] = i$ (i.e., $v_x^y$ and $v_{x^\prime}^{y^\prime}$ both lead to a node on level $i$), then $j = \min\{p\in P_i\mid p > x\}$ is at least as large as $j^\prime = \min\{p\in P_i\mid p > x^\prime\}$. Thus, we can restrict the search for $j$ to the set $\{p\in P_i \mid p \geq j^\prime\} \subseteq P_i$. 

    We can therefore iteratively build the graph $G$ by computing all outgoing edges of each level $i\in[n]$, from the smallest to the largest level.
    To achieve this, we maintain information about the set restrictions mentioned above using an array $S$ of length $n$. Intuitively, $S[k]$ stores, for each value $k\in[n]$, the index $\ell^\prime$, such that $(v_i^\ell, v_k^{\ell^\prime})$ is the edge to level $k$ that was last added, or $1$, if no edge to level $k$ has been added to $G$ yet; note that $S[k] = 1$ is also possible if $i < j_{k,1}$.
    We initialize the values of $S$ to be $1$. 
    
    In each step, we obtain the outgoing edge of $v_{i}^{\ell}$ as follows.
    If $w[j_{i,\ell}]\notin \text{alph}(w[i+1:n])$, we simply add $(v_i^\ell, f)$ to the edge set. Assume now that $k = \nextpos[w[j_{i,\ell}], i+1] \leq n$ and set $s = S[k]$. 
    If $j_{k,s} > i$, we already found the smallest possible $j = j_{k,s} > i$ and add the resulting edge $(v_{i}^{\ell},v_{k}^{s})$.
    Otherwise, we iteratively search for $\ell^\prime\in[s+1:n_k]$ such that $j_{k,\ell^\prime} > i$ and $\ell^\prime$ is minimal,
    and add the resulting edge $(v_{i}^{\ell},v_{k}^{\ell^\prime})$. 
    We then set $S[k] = \ell^\prime$ and continue with the next node $v_i^{\ell+1}$ (or $v_{i+1}^1$, if $\ell=n_i$). 

    The value of $S[i]$, for $i\in[n]$, is updated at most $n_i-1 < \sigma$ times, where each update from value $s$ to $s^\prime$ takes $\mathcal O(\abs{s^\prime-s})$ time. 
    Therefore, updating all values $S[i]$ during the computation of the $\mathcal O(n\sigma)$ edges takes $\mathcal O(n\sigma)$ time. 
    Additionally, in each step we query the value of $S[i]$, which takes constant time, and use $\nextpos[a,i]$, which can be done in constant time after $\mathcal O(n\sigma)$ time preprocessing. 
    Thus, the computation of $(v_{i}^{\ell},v_{k}^{\ell^\prime})$ for all $i\in[n],~ \ell\in[n_i],~ k = \nextpos[w[j_{i,\ell}], i+1],~ \ell^\prime = \arg\min_{t}\{j_{k,t}\in P_k\mid j_{k,t} > i\}$, and therefore also the complete construction of the $(n+1)$-skeleton DAG $G$, can be done in $\mathcal O(n\sigma)$ time and space, which concludes the proof of our statement. 
\end{proof}

Two exemplifying figures (Figures~\ref{fig:massas1}, \ref{fig:massas2}) as well as pseudocodes implementing the computation of the sets $P_i$ (Algorithm~\ref{MAS:algo:KM}) and the construction of the skeleton-DAG (Algorithm~\ref{MAS:algo:main}) are given in Appendix~\ref{sec:figures}.
Now, by Theorems \ref{thm:DAG_enumeration} and \ref{thm:DAG_standard_enumeration}, we obtain the following results: 

\begin{restatable}{theorem}{masenumtheorem}\label{thm:MAS_enumeration}
    Given a string $w\in \Sigma^n$, over an alphabet of size $\sigma$, after an $\O(n\sigma)$ time preprocessing, we can incrementally enumerate $\mas(w)$ with $\O(1)$ delay.\looseness=-1
\end{restatable}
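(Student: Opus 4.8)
The plan is to combine the two results we already have in hand: Theorem~\ref{thm:MAS_skeleton} gives us, in $\O(n\sigma)$ time, an $(n+1)$-skeleton DAG $G=(V,E)$ with $|V|\in\O(n\sigma)$ nodes, whose encoded DAG $D(G)$ has its $s$--$f$-paths in bijection with $\mas(w)$; and Theorem~\ref{thm:DAG_enumeration} says that for any $m$-skeleton DAG with $N$ nodes we can, after $\O(N)$ preprocessing, incrementally enumerate the $s$--$f$-paths of the encoded DAG with $\O(1)$ delay. So first I would construct $G$ via Theorem~\ref{thm:MAS_skeleton}; this costs $\O(n\sigma)$ time and produces a structure of size $\O(n\sigma)$. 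Then I would feed $G$ into the enumeration procedure of Theorem~\ref{thm:DAG_enumeration}, whose preprocessing is linear in the number of nodes of $G$, i.e.\ $\O(n\sigma)$, and which then incrementally enumerates all $s$--$f$-paths of $D(G)$ with $\O(1)$ delay.

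The only thing that needs a word of justification is that the bijection of Theorem~\ref{thm:MAS_skeleton} lets us actually read off a $\mas$ from each enumerated path (and from each incremental edit-script) without disturbing the delay bound. Here I would recall that the incremental enumeration in Theorem~\ref{thm:DAG_enumeration} outputs, for each new path, an edit-script of the form ``return to node $z$, then follow a $z$--$f$-path $\gamma$ (written as its encoding)'', and that on demand it can also output the whole current path explicitly in time linear in its length. Under the bijection, a node $v_i^\ell$ on level $i$ contributes the letter $w[i]$ to the corresponding $\mas$ (and the final step into $f$ contributes the last letter $b=w[j_{i_m,\ell_m}]$), so translating a path into its $\mas$, or an edit-script on paths into an edit-script on $\mas$s, is a direct relabelling that costs $\O(1)$ per edit operation and $\O(|u|)$ for a full string $u$. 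Thus the $\O(1)$-delay incremental enumeration of paths transfers verbatim to an $\O(1)$-delay incremental enumeration of $\mas(w)$.

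Putting the pieces together: the total preprocessing is $\O(n\sigma)$ (dominated by the construction of $G$ in Theorem~\ref{thm:MAS_skeleton}, the linear-in-$|V|$ preprocessing of Theorem~\ref{thm:DAG_enumeration} being $\O(n\sigma)$ as well), and the delay is $\O(1)$, which is exactly the claimed bound. I do not anticipate a genuine obstacle here, since both ingredients are already proved; the one point to be careful about is bookkeeping — making explicit that the level of a node in $G$ records the position index $i$, hence the letter $w[i]$, so that the path-to-string translation is immediate and incurs no extra cost — and noting that, for the corner case $k=n$ handled by the last edge into $f$, the final letter is recovered from $j_{i_m,\ell_m}$. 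With that remark in place, the statement follows directly from Theorems~\ref{thm:MAS_skeleton} and~\ref{thm:DAG_enumeration}.
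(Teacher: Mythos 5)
Your proposal is correct and matches the paper's own argument: the paper derives Theorem~\ref{thm:MAS_enumeration} directly by applying the incremental enumeration of Theorem~\ref{thm:DAG_enumeration} (with preprocessing linear in the $\O(n\sigma)$ nodes of the skeleton) to the skeleton DAG of Theorem~\ref{thm:MAS_skeleton}, exactly as you do. Your extra remark on reading off each $\mas$ from the path/edit-script via the level-to-letter correspondence is a harmless elaboration of the bijection already established in Theorem~\ref{thm:MAS_skeleton}.
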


\begin{corollary}\label{cor:MAS_standard_enumeration}
    Given a string $w \in \Sigma^ n$, over an alphabet of size $\sigma$, after an $\O(n\sigma)$ time preprocessing, we can enumerate $\mas(w)$ with output-linear delay.
\end{corollary}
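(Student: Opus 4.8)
The plan is to compose the reduction of Theorem~\ref{thm:MAS_skeleton} with the generic enumeration machinery of Theorem~\ref{thm:DAG_enumeration}. First I would invoke Theorem~\ref{thm:MAS_skeleton} to build, in $\O(n\sigma)$ time, the $(n+1)$-skeleton DAG $G=(V,E)$ with $|V|\in\O(n\sigma)$ whose encoded DAG $D(G)$ has its $s-f$-paths in bijection with $\mas(w)$; then I would run the incremental enumeration algorithm of Theorem~\ref{thm:DAG_enumeration} on $G$, whose preprocessing is linear in the number of nodes of $G$, hence $\O(n\sigma)$, and whose delay is $\O(1)$. The only genuine work is to verify that the output of the DAG algorithm, which is phrased in terms of nodes, default paths and non-default edges, can be translated on the fly --- with $\O(1)$ overhead per step --- into the output we want for $\mas(w)$: an encoding of the first string followed by short edit-scripts.

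For that translation I would make explicit, using the proof of Theorem~\ref{thm:MAS_skeleton}, the dictionary between paths and strings: a path $s,v_{i_1}^{\ell_1},\dots,v_{i_r}^{\ell_r}$ corresponds to the $\mas$-prefix $w[i_1]\cdots w[i_r]$ (and, the node $v_{i_r}^{\ell_r}$ being fixed, also to the one-letter-longer prefix $w[i_1]\cdots w[i_r]w[j_{i_r,\ell_r}]$), and a full $s-f$-path corresponds to an $\mas$. Three observations then suffice. First, a prefix of a path corresponds to a prefix of the $\mas$, so the DAG instruction ``return on the previous path to node $z$'' becomes ``keep the prefix of the previous $\mas$ up to the letter contributed by $z$'', a prefix whose length is maintained for free while pushing and popping on the stack ${\mathcal S}$. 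Second, a default sub-path of $D(G)$ --- the one obtained by repeatedly following $down(\cdot)$ --- corresponds to a fixed ``canonical continuation'' of the current $\mas$-prefix, so its constant-size encoding $\defPath(\cdot,\cdot)$ becomes a single constant-size token of the string edit-script. Third, a non-default edge $(v_{i_r}^{\ell_r},v_k^{\ell'})$ contributes exactly the letter $w[j_{i_r,\ell_r}]$, which is forced by its origin node (the choice of $\ell'$ influencing only later letters), so it too is emitted in constant size. Hence every $\O(1)$-size DAG edit-script --- of the shape ``return to $z$; default path; non-default edge; default path'' --- maps to an $\O(1)$-size string edit-script, and the encoding $\Phi(s_1)$ of the first enumerated path maps to the encoding of the first $\mas$. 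Finally, the on-demand requirement --- printing the current $\mas$ in time linear in its length --- is inherited from the analogous guarantee for paths in Theorem~\ref{thm:DAG_enumeration}: walking ${\mathcal S}$ and expanding each default segment by chasing $down(\cdot)$ pointers recovers the positions $i_1<\dots<i_m$, whence the $\mas$ is read off in $\O(m)$ time.

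Composing the two ingredients yields $\O(n\sigma)$ preprocessing and $\O(1)$ delay, which is Theorem~\ref{thm:MAS_enumeration}; Corollary~\ref{cor:MAS_standard_enumeration} follows by exactly the same composition, with Corollary~\ref{thm:DAG_standard_enumeration} used in place of Theorem~\ref{thm:DAG_enumeration}, so that each enumerated $\mas$ is printed out explicitly with output-linear delay. I expect the main obstacle to be the third observation above: every skeleton node carries a one-step look-ahead --- its level fixes its own letter of the $\mas$ while its $j$-index fixes the following letter --- so some care is needed both in aligning the path-prefix/string-prefix correspondence and in checking that the translated edit-scripts are simultaneously correct and of constant size. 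Everything else is a thin $\O(1)$-cost bookkeeping layer over Theorems~\ref{thm:MAS_skeleton} and~\ref{thm:DAG_enumeration}.
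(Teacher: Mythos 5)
Your proposal is correct and follows exactly the paper's route: the paper obtains this corollary by composing Theorem~\ref{thm:MAS_skeleton} (the $\O(n\sigma)$-time, $\O(n\sigma)$-node skeleton DAG whose $s-f$-paths are in bijection with $\mas(w)$) with Corollary~\ref{thm:DAG_standard_enumeration}, just as you do. Your extra care about the path-to-string translation (each node's level giving its own letter, the $j$-index of the last node giving the final letter) is a correct elaboration of what the paper leaves implicit in the bijection.
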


The result of Corollary \ref{cor:MAS_standard_enumeration} can be improved, using a different approach. This approach is based on the following construction from \cite{KoscheKMS21}. 
More precisely, based on Theorem \ref{thm:mas}, the authors of  \cite{KoscheKMS21} define a labelled directed acyclic graph ${\mathcal D}_w$ with the nodes $\{(i,j)\mid 0\leq j< i\leq n\}\cup \{s,f\}$ (the notation $(0,0)$ is used instead of $s$ in \cite{KoscheKMS21}). The edges (represented as arrows $A \rightarrow B$ between nodes $A$ and $B$) are defined as follows:
\begin{itemize}
\item We have an edge $s\rightarrow (i,0)$, if there exists $a\in \Sigma$ such that $i=\nextpos[a,1]$ is the position where $a$ occurs the first time in $w$. This edge is labelled with $w[i]=a$.
\item For $1\leq j< i< k\leq n$ we have an edge $(i,j)\rightarrow (k,i)$,  if there exists $a\in \Sigma$ such that $k=\nextpos[a,i+1]$ is the position of the first occurrence of $a$, strictly to the right of position $i$ and $a\in \al(w[j+1:i])$. This edge is labelled with $w[k]=a$.
\item We have an edge $(i,j)\rightarrow f$ for each $b\in \al(w[j+1:i])$ such that $b\notin  \al (w[i+1:n])$. This edge is labelled with $b$. (Note that there can be multiple edges from $(i,j)$ to $f$, but they have different labels. For simplicity of exposure, we still refer to ${\mathcal D}_w$ as a DAG, as these multiple edges do not influence our reasoning at all.)
\end{itemize}
According to \cite{KoscheKMS21}, there is a bijection between $\mas(w)$ and the set of $s-f$-paths in the directed graph ${\mathcal D}_w$: the $\mas$ corresponding to an $s-f$-path is obtained by reading the label of that path, and, for each $\mas$, there is exactly one $s-f$-path whose label is the respective $\mas$. So, enumerating all elements of $\mas(w)$ is equivalent to enumerating all $s-f$-paths of ${\mathcal D}_w$, which can be done in a canonical way, with output-linear delay, by backtracking (i.e., using a depth-first search starting in node $s$). In \cite{KoscheKMS21} it is claimed that constructing ${\mathcal D}_w$ may take up to $\O(n^2\sigma )$ time, based on the fact that ${\mathcal D}_w$ may have $\O(n^2)$ nodes and each node may have $\sigma$ children. However, based on the proof of 
Theorem \ref{thm:MAS_skeleton}, one can note that for a fixed $i$, the set of $\{(r,i)\mid r\in [n], \exists\, j: (i,j)\rightarrow (r,i)\}$ has at most $\sigma$ elements. Therefore, there are $\O(n\sigma)$ nodes which are accessible from $s$, thus providing an improved upper bound of $\mathcal O(n\sigma^2)$ on the total size of the graph. To this end, we can now note that Theorem \ref{thm:MAS_skeleton} provides a compact representation of the accessible part of the DAG ${\mathcal D}_w$. 

But, we aim for an (incremental) enumeration algorithm for $\mas(w)$, with $\O(n)$ time preprocessing, with output-linear delay (respectively, $\O(1)$ delay), so we need to avoid constructing ${\mathcal D}_w$ or its compact representation explicitly, as both would require too much time. The following result shows that we can still achieve the same output-linear delay, but with only $\O(n)$ preprocessing, as we do not need to have ${\mathcal D}_w$ explicitly. Instead, while using the same backtracking/depth-first search strategy, we can construct its paths one by one, on the fly.\looseness=-1

\begin{restatable}{theorem}{masenumlindelaytheorem}\label{thm:MAS_enumeration_linear_delay}
    Given a string $w\in \Sigma^n$, over an alphabet of size $\sigma$, after an $\O(n)$ time preprocessing, we can enumerate $\mas(w)$ with output-linear delay.
\end{restatable}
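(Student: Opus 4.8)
The plan is to perform a depth-first traversal of the DAG $\mathcal{D}_w$ from \cite{KoscheKMS21}, but without ever materializing the graph: whenever the backtracking procedure is at a node $(i,j)$ and needs to branch to its children, we compute the list of outgoing edges on the fly, in time proportional to the number of those edges (plus a constant). Since each $s$--$f$-path has length equal to one plus the length of the corresponding $\mas$, and since the branching factor at each node is paid off against the edges actually traversed, this yields output-linear delay provided that (i) we can enumerate the children of a node in amortized $\O(1)$ per child, and (ii) we have $\O(n)$ preprocessing. The key obstacle is that the naive child-enumeration at $(i,j)$ involves, for each letter $a$ that occurs in $w[j+1:i]$ but whose next occurrence after $i$ must be computed, a query to $\nextpos[a,i+1]$ — and we cannot afford the $\O(n\sigma)$-space matrix $\nextpos[\cdot,\cdot]$ nor the $\O(n\sigma)$ construction time. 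So the heart of the argument is showing how to answer exactly the $\nextpos$-queries we need, in the order we need them, using only $\O(n)$-space structures.

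First I would set up the $\O(n)$-time preprocessing: compute $\nextArray[\cdot]$, $\prevArray[\cdot]$ (available by \cite[Lemma 4.20]{KoscheKMS22}), the arch factorisation, and $\dist[\cdot]$. The crucial observation — already isolated in the discussion preceding the statement, via the proof of Theorem~\ref{thm:MAS_skeleton} — is that the edge $(i,j)\rightarrow(k,i)$ depends on $j$ only through the set $\al(w[j+1:i])$, and the relevant "canonical" representatives are exactly the positions in the set $P_i$ (the smallest interval ending at $i$ containing each distinct letter). The set $P_i$ has at most $\sigma$ elements, but we never need all of $P_i$ at once for a fixed $i$: along any one root-to-leaf path in $\mathcal{D}_w$ we visit a given level $i$ at most once. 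Therefore I would maintain, during the DFS, the current path's "active letters" as a doubly linked list over $\Sigma$ recording, for the current prefix $w[i_1]\cdots w[i_t]$, the last-seen position of each letter — exactly the structure used in Algorithm~\ref{MAS:algo:KM}, but updated incrementally as the DFS pushes and pops path edges rather than precomputed for all $i$. Pushing an edge updates one entry of the list in $\O(1)$; popping restores it in $\O(1)$ using a small undo-log. From this list, the children of the current node $(i,j)$ are read off: for each letter $a$ in the list whose last-seen position lies in $w[j+1:i]$, we need $k=\nextpos[a,i+1]$, which is $\nextArray[p]$ where $p$ is the last-seen position of $a$ (this is $\nextpos[a,i+1]$ precisely when $p\geq j+1$, i.e., when $a\in\al(w[j+1:i])$, which is the membership condition in the edge definition); and the $(i,j)\rightarrow f$ edges are those $a$ with $\nextArray[p]=n+1$.

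The main technical step I expect to be delicate is twofold. (a) Restricting, at node $(i,j)$, to letters that actually occur in $w[j+1:i]$ rather than all of $w[1:i]$: the incremental linked list I maintain tracks last-seen positions over the whole current path-prefix, not over the window $[j+1:i]$; I would handle this by keeping the active-letter list ordered by last-seen position (so that, as in the $P_i$ definition, reading it from the most recent backward gives exactly the letters of successively larger windows) and, when at $(i,j)$, scanning it only down to position $j+1$ — the cost of this scan is exactly the number of children produced at $(i,j)$, hence amortized away against the edges the enumeration outputs. (b) The target child is not a bare position $k$ but a node $(k,i)$, and later, from $(k,i)$, we will need $\al(w[i+1:k])$; but this is automatically the state of the active-letter list after we push the edge to level $k$, so no extra work is needed — consistency is maintained by the push/pop discipline. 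Finally I would check the delay bound: between two consecutive outputs the DFS does a bounded amount of work per edge on the portion of the two paths that differ, plus the branching scans which are charged to output edges, giving delay $\O(|u_{t}|+|u_{t+1}|)=\O(|u_{t+1}|)$ in the standard amortized-output sense; combined with the $\O(n)$ preprocessing this gives the claimed bound.
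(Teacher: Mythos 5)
Your high-level plan coincides with the paper's: simulate a DFS over the DAG ${\mathcal D}_w$ without materializing it, generating the children of the current node on the fly, and you correctly isolate the crux, namely answering the needed $\nextpos$-queries without the $\O(n\sigma)$ matrix or the precomputed sets $P_i$. However, the device you propose for this does not work as claimed. Your doubly linked list is supposed to record, for the current level $i$, the last occurrence in $w[1:i]$ (hence, after truncation at $j+1$, in the window $w[j+1:i]$) of every letter of $\Sigma$; these last occurrences sit at arbitrary text positions, almost none of which lie on the current path. Consequently, pushing an edge from a node at level $i$ to a node at level $k$ does not ``update one entry of the list in $\O(1)$'': it must incorporate every text position $i+1,\ldots,k$, i.e.\ $\Theta(k-i)$ updates (and an undo log of the same size for the corresponding pop). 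Your point (b) --- that the information needed later at $(k,i)$ ``is automatically the state of the active-letter list after we push the edge'' --- presupposes exactly this expensive update. Nor can the cost be absorbed into the delay: the total window movement along one root-to-leaf descent is $\Theta(n)$, while the $\mas$ produced at its end can have length $\Theta(n/\sigma)$ (take $w=(12\cdots\sigma)^k$, where the DFS may backtrack close to the root between two consecutive $\mas$s of length $k+1$). Since the delay is a worst-case per-output quantity, this yields delay $\Theta(\sigma\cdot|u|)$ rather than $\O(|u|)$, so the claimed output-linear bound fails whenever $\sigma$ is not constant.

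The paper closes precisely this gap with a different tool: after an $\O(n)$ preprocessing it equips the array $\nextArray[\cdot]$ with a range-maximum-query structure, and at a node $(i,j)$ it finds the children by querying subintervals of $[i_{r-1}+1:i_r-1]$ (starting from the full interval and splitting it around each returned position, keeping a queue of still-useful subintervals attached to the stack element); a position $j'$ yields a child exactly when $\nextArray[j']>i_r$, the position $i_r$ itself gives the first child, and each further child is retrieved in $\O(1)$ worst-case time with no state that has to be maintained or undone across pushes and pops. If you replace your sliding last-occurrence list by this $\mathrm{RMaxQ}$-based child iterator, the rest of your argument (lazy per-child generation, charging the backtracking and the descent to the newly output string) goes through and essentially coincides with the paper's proof.
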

\begin{proof}
\noindent\emph{$\S $ Our Approach.} 
Let us assume that we want to enumerate the elements of $\mas(w)$ starting with letter $a$, and let $i_1$ be the position of the first occurrence of $a$ in $w$. In our enumeration algorithm, we will simulate a depth-first search (DFS) of the graph ${\mathcal D}_w$ starting with node $(i_1,0)$, without having this graph explicitly. In such a DFS, the $(i_1,0)-f$-paths are explored one by one, in the following way: a path $(i_1,0)\rightarrow\ldots\rightarrow(i_r,i_{r-1})$ is stored on a stack at some point (with the node $(i_r,i_{r-1})$ found on the top of the stack), and, for each child ($(i_{r+1},i_r)$ or $f$) of $(i_r,i_{r-1})$, we put this child on the stack and explore, one by one, in a recursive way, all $s-f$-paths starting with  $s\rightarrow (i_1,0)\rightarrow\ldots\rightarrow(i_r,i_{r-1})\rightarrow (i_{r+1},i_r)$.
To implement this strategy efficiently, we take advantage of the fact that the children of some node $(i,j)$, accessible from $s$, are determined by the positions $i$ and $j$ only. More precisely, let $\al(w[j+1:i])=\{a_1,\ldots,a_r\}$, for some $r\leq \sigma$. One can compute the list of $r$ children of $(i,j)$ as follows: for $t\in [r]$, $(j_t,i)$ is a child of $(i,j)$, where $j_t$ is the  position of the first occurrence of $a_t$ strictly to the right of $i$, if $a_t\in \al(w[i+1:n])$; $f$ can also be a child of $(i,j)$ if $\al(w[j+1:i])\setminus \al(w[i+1:n])\neq \emptyset$, but, for the sake of an uniform presentation, we say, in this case, that $(n+1,i)$ (which stands in for $f$) is a child of $(i,j)$. The main challenges, that we need to address in our algorithm, are how to compute these letters $a_t\in \al(w[j+1:i])$ to efficiently gain in this way on-the-fly access to the children of $(i,j)$, and to integrate these in our simulation of the DFS of ${\mathcal D}_w$.

To this end, we will use the range maximum query data structure \cite{BenderF00}. We can preprocess in $\O(n)$ time an array $A$ with $n$ elements, so that the answer to any range maximum query is retrieved in $\O(1)$ time. In a range maximum query, denoted $RMaxQ(i,j)$, for some $i,j\in [n]$, we ask for a position $x\in[i:j]$ such that $A[x]\geq A[\ell],$ for all $\ell\in [i:j]$. In the case when there are multiple positions where the maximum of the subarray $A[i:j]$ occurs, we just return the smallest such position; in the case $i>j$, $RMaxQ(i,j)=-\infty$. We build data structures allowing us to do range maximum queries on the array $\nextArray$.

The enumeration idea implemented in our algorithm is now relatively simple to explain. Assume that we have constructed the prefix $s\rightarrow (i_1,0)\rightarrow \ldots\rightarrow (i_r,i_{r-1})$ of some $s-f$-path (and stored its nodes on a stack), which corresponds to the canonical embedding $i_1<\ldots <i_r$ of an $\mas$-prefix. We can then identify the possible positions $j$ allowing us to extend this path (respectively, $\mas$-prefix) with a node $(j,i_r)$ by selecting first the position $j=\nextArray[i_r]$, and then, one by one, the positions $j=\nextArray[j']$ for all $j'\in [i_{r-1}+1:i_r-1]$ such that $\nextArray[j']>i_r$. Indeed, this approach simply selects, one by one, each letter $w[j']\in \al(w[i_{r-1}+1:i_r])$, occurring on some position $j'$, whose next occurrence $j=\nextArray[j']$ is to the right of $i_r$ (i.e., $j>i_r$); the nodes $(j,i_r)$ computed based on these selections are exactly the children of $(i_r,i_{r-1})$, according to the observation above. Now, the first such letter $w[j']$ that we select is $w[i_r]$, as $\nextArray[i_r]>i_r$ clearly holds; this corresponds to extending the path with the node $(\nextArray[i_r],i_r)$. This node is added to the stack, and the process is repeated. Once all paths starting with $s\rightarrow (i_1,0)\rightarrow \ldots\rightarrow (i_r,i_{r-1})\rightarrow (i_{r+1},i_r )$ are enumerated (or the $\mas$ corresponding to $s\rightarrow (i_1,0)\rightarrow \ldots\rightarrow (i_r,i_{r-1})\rightarrow (n+1,i_r)$, where the last letter of the respective $\mas$ is $w[j']$, is output), we need to find a new letter $w[j']$, as described above. To this end, the main observation is that such letters $w[j']\neq w[i_r]$, and their corresponding position $j'$, can be found efficiently by using successive range maximum queries on the array $\nextArray$, on subintervals of $[i_{r-1}+1:i_r-1]$, that cover this range completely. \looseness=-1

This $RMaxQ$-based search is implemented as follows: we consider $j'_0=RMaxQ(i_{r-1}+1,i_r-1)$ as the first candidate for $j'$ and, if $\nextArray[j'_0]>i_r$, we recursively enumerate the paths starting with  $s\rightarrow (i_1,0)\rightarrow \ldots\rightarrow (i_r,i_{r-1})\rightarrow (\nextArray[j'_0],i_r)$; then we recursively look for future candidates for $j'$ in $[i_{r-1}+1:j'_0-1]$ and $[j'_0+1:i_{r}-1]$. For the sake of efficiency, we maintain for each node $(i_r,i_{r-1})$ of our stack a queue of subintervals in which we will look for future candidates for position $j'$, and we insert such an interval $[x:y]\subseteq [i_{r-1}+1:i_{r}-1]$ in the respective queue if and only if $\nextArray[RMaxQ(x,y)]> i_r$ (otherwise, that interval contains no valid candidate for the position $j'$). The first interval considered to be inserted in that queue is $[i_{r-1}+1:i_{r}-1]$, and when we want to compute a new child of $(i_r,i_{r-1})$ we simply extract the first interval stored in the queue, do a range maximum query on it, and then split it as explained above (with the two resulting subintervals being further inserted in the queue, in the case the maximum value stored in the corresponding ranges of $\nextArray[\cdot]$ is greater than $i_r$). 

The correctness of this approach follows from the construction of ${\mathcal D}_w$: indeed, $s\rightarrow (i_1,0)\rightarrow \ldots\rightarrow (i_r,i_{r-1})\rightarrow (\nextArray[j'],i_r)$ is the prefix of some $s-f$-path (if $j'$ is selected as above), and, by Theorem \ref{thm:mas}, $i_1<\ldots <i_r<\nextArray[j']$ is the embedding of an $\mas$-prefix, if $\nextArray[j']\leq n$. Our algorithm extends $s-f$-paths (and the corresponding $\mas$-prefixes) using this idea, until $\nextArray[j']=n+1$; in that case, an $s-f$-path $s\rightarrow(i_1,0)\rightarrow\ldots\rightarrow(i_\ell,i_{\ell-1})\rightarrow (n+1,i_\ell)$ (corresponding to an $\mas$) has been discovered and the $\mas$ is output (in linear time w.r.t. $\ell$). In parallel with the output of this $\mas$, we go back (right to left) along the path $s\rightarrow(i_1,0)\rightarrow\ldots\rightarrow(i_\ell,i_{\ell-1})\rightarrow (n+1,i_\ell)$, until we find a node $(i_r,i_{r-1})$ and the corresponding range $[i_{r-1}+1:i_r]$, where not all letters $w[j']$, whose next occurrence $\nextArray[j']$ is to the right of $i_r$, and the corresponding nodes $(\nextArray[j'],i_r)$ were explored as continuations of the prefix $s\rightarrow(i_1,0)\rightarrow\ldots\rightarrow(i_r,i_{r-1})$ towards obtaining an $s-f$-path. The algorithm then constructs a new $s-f$-path (and the corresponding $\mas$) by extending the path $s\rightarrow(i_1,0)\rightarrow\ldots\rightarrow(i_r,i_{r-1})$ with $(\nextArray[j'],i_r)$, and then continuing as described above. Due to the repeated use of $RMaxQ$, on subintervals of $[i_{r-1}+1:i_r]$ to identify the positions $j'$, with the properties described above, and provided that the queried subintervals completely cover $[i_{r-1}+1:i_r-1]$ (while position $i_r$ is handled separately), we can ensure that all positions $j'$ with $\nextArray[j']>i_r$ are found, and all children of $(i_r,i_{r-1})$ (and the corresponding paths) are explored.\looseness=-1

As far as the complexity of this approach is concerned, after a linear time preprocessing, in which we build the $\nextArray[\cdot]$ array for $w$ and $RMaxQ$-data structures for it, the enumeration algorithm just implements a DFS of the graph. When this DFS reaches a node $(i,j)$, we get in $\O(1)$ time the first child of that node, and then explore the paths going through that child (which extend the current $s-f$-path-prefix, maintained in the stack of the DFS). When the DFS returns to the node $(i,j)$, an unexplored child of $(i,j)$ is retrieved in $\O(1)$ time using the queue of intervals stored for that node and range maximum queries. So, while the children of a node are not explicitly stored, we can still retrieve them in order, in $\O(1)$ time per child. 


\noindent\emph{$\S $ Details.} The details of this algorithm are described in the following.

In the first step, we build the $\nextArray[\cdot]$ array for $w$, and range maximum query data structures for this array. This preprocessing step can be implemented in linear time.

Our algorithm maintains a stack ${\mathcal S}$ where we build our solution. The elements of ${\mathcal S}$ are either tuples $((n+1,j),c,\emptyset)$, where $j\in [n]$ and $c\in \Sigma$, or
tuples $((i,j),c,L)$, with $i,j\in [n]$, $c=w[i]$, and $L$ a list (implemented as queue) of pairs $[x:y]$ with $x,y\in [j+1:i]$. Initially, ${\mathcal S}$ is empty. 

We describe how our algorithm enumerates all $s-f$-paths of ${\mathcal D}_w$ which start with some node $(i_1,0)$, where $i_1$ is the first occurrence of a letter $c$ in $w$. To obtain all elements of $\mas(w)$, we simply have to iterate this approach over all letters of the alphabet (and the nodes that correspond to the first occurrences of these letters).

So, let $i_1$ be the first occurrence of $c$ in $w$. We define the list $L_1$ to contain the pair $[i_1:i_1]$, as first element, and then, as second element, the pair $[1:i_1-1]$, but only if $ \nextArray[RMaxQ(1,i_1-1)]>i_1$. We insert the element $((i_1,0),c,L_1)$ in the stack ${\mathcal S}$.

Now, while the stack ${\mathcal S}$ is not empty, we do the following. Assume that ${\mathcal S}$ contains, in order from bottom to top, the elements $((i_1,0),c_1,L_1), \ldots, ((i_r,i_{r-1}),c_r,L_r)$. Our algorithm considers two cases.

Case 1: if $i_r<n+1$, then we proceed as follows: we retrieve and remove from $L_r$ the first element $[x:y]$. Let $\ell = RMaxQ(x,y)$, $\ell_1=RMaxQ(x,\ell-1)$, and $\ell_2=RMaxQ(\ell+1,y)$. If $\nextArray[\ell_1]>i_r$, we insert $[x:\ell-1]$ in $L_r$. If $\nextArray[\ell_2]>i_r$, we insert $[\ell+1:y]$ in $L_r$. If $\nextArray[\ell]>i_r$, we define the list $L_{r+1}=\emptyset$ and we set $i_{r+1}=\nextArray[\ell]$; if, additionally, $i_{r+1}<n+1$, we insert in $L_{r+1}$ the element $[i_{r+1}:i_{r+1}]$, and then the element $[i_r+1:i_{r+1}-1]$, but only if $ \nextArray[RMaxQ(i_r+1,i_{r+1}-1)]>i_{r+1}$. Then, we insert $((i_{r+1},i_r),w[\ell],L_{r+1})$ in ${\mathcal S}$ (as the new top element), and start a new iteration of the while-loop; note that whenever $i_{r+1}\leq n$, then $w[i_{r+1}]=w[\ell]$. 

Case 2: if $i_r=n+1$, then we need to output an $\mas$, and set the stage for computing the next $\mas$. Assume that ${\mathcal S}$ contains, in order from bottom to top, the elements $((i_1,0),c_1,L_1), \ldots, ((n+1,i_{r-1}),c_r,\emptyset)$. We proceed as follows. In a first loop, we go top to bottom through ${\mathcal S}$ until we reach an element $((i_p,i_{p-1}),c_p,L_p)$ such that $L_p\neq \emptyset$; while going through the stack in this first loop, we pop all traversed elements $((i_q,i_{q-1}),c_q,L_q)$, with $L_q=\emptyset$, and construct the string $v=c_{p+1}\ldots c_r$. In a second loop, we go top to bottom through the rest of the stack ${\mathcal S}$ until we reach the bottom, without popping any element, but we extend the string $v$, by prepending to it the letters of the traversed records; that is, we construct the string $v=c_{1}\ldots c_r$. We then output this string as an $\mas$, and continue with a new iteration of the while-loop. 

When the while-loop is completed, ${\mathcal S}$ is empty, and we have produced all $\mas$s that start with letter $c\in \Sigma$. We repeat the procedure for a different letter.

\medskip

\noindent\emph{$\S $ Correctness and Complexity.}
From the description above we see that the preprocessing phase runs in linear time $\O(n)$, and the time elapsed between the output of two strings $v$ and $u$ is linear in $|u|$ (that is, the delay is output-linear). We just need to show that the algorithm correctly outputs the $\mas$s of $w$. As the described procedure enumerates all the $s-f$-paths in ${\mathcal D}_w$ in the order in which they are found in a depth-first search of the respective DAG, this is immediate.
In conclusion, we have presented an algorithm which has linear preprocessing, enumerates all $s-f$-paths in ${\mathcal D}_w$, and outputs the corresponding $\mas$s with output-linear delay. 
\end{proof}

We can extend this approach to an optimal incremental enumeration algorithm. First, we define the default edges for the nodes of the DAG ${\mathcal D}_w$, following the approach used in the previous proof: if $(i,j)$  is a node considered in our enumeration algorithm, then its first explored child is $(\nextArray[i],i)$ (or $f$, if $\nextArray[i]>n$). So, we define the default edge for the node $(i,j)$ to be the edge connecting it to the node $(\nextArray[i],i)$ (or the edge labelled with $w[i]$ connecting $(i,j)$ to $f$, if $\nextArray[i]>n$); note that, in this setting, the longest default path starting in a node $(i,j)$ ends with $f$. In this framework, we can state our result. \looseness=-1

\begin{restatable}{theorem}{masenumconstantdelaytheorem}\label{thm:MAS_enumeration_constant_delay}
    Given a string $w\in \Sigma^n$, over an alphabet of size $\sigma$, after an $\O(n)$ time preprocessing, we can incrementally enumerate $\mas(w)$ with $\O(1)$ delay.
\end{restatable}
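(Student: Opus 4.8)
The plan is to combine the two ingredients already established. From the proof of Theorem~\ref{thm:MAS_enumeration_linear_delay} we keep the idea of running a depth-first search of $\mathcal{D}_w$ \emph{without materialising it}, generating the children of a visited node $(i,j)$ on the fly from $\nextArray[\cdot]$ together with a queue of subintervals of $[j+1:i-1]$ on which range-maximum queries (on $\nextArray[\cdot]$) are performed. Onto this we graft the two-stack, ``path-object'' machinery from the proof of Theorem~\ref{thm:DAG_enumeration}: a stack $\mathcal{S}$ whose entries encode either a single non-default edge of $\mathcal{D}_w$ (carrying, in place of the field $v'$, the interval-queue of its origin) or an entire default path from some node $(i,j)$ to $f$ (carrying the interval-queue of $(i,j)$ and, in place of $v''$, a pointer to the next still-branching node lying on that default path), together with a companion stack $\mathcal{C}$ of pointers to those entries of $\mathcal{S}$ that still admit branching, kept in the same relative order as in $\mathcal{S}$. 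Since every node of $\mathcal{D}_w$ other than $s$ has a unique default edge and the longest default path from any node ends at $f$, the framework of Theorem~\ref{thm:DAG_enumeration} applies, and passing from one enumerated $\mas$ to the next becomes a single $O(1)$ step: jump to the entry pointed to by the top of $\mathcal{C}$, branch there by extracting one interval from the relevant queue and performing a constant number of range-maximum queries, push at most two new entries, fix up $\mathcal{C}$, and emit an edit script of the form ``return to node $(i,j)$, take the non-default edge to $(\nextArray[\ell],i)$, $\defPath((\nextArray[\ell],i),f)$'' --- with the trailing $\defPath$ omitted when $\nextArray[\ell]=n+1$, in which case an $\mas$ has just been completed.

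The one genuinely new ingredient is the on-the-fly analogue of the precomputed pointer $nb(\cdot)$ of Theorem~\ref{thm:DAG_enumeration} (``next branching node on the default path''), which cannot be tabulated because $\mathcal{D}_w$ is never built. We use two structural facts about $\mathcal{D}_w$. First, by the $\mas$-prefix condition of Theorem~\ref{thm:mas}, every node $(i,j)$ reachable in the search satisfies $w[i]\notin\al(w[j+1:i-1])$, so the gap $[j+1:i-1]$ contains no occurrence of $w[i]$; consequently the rightmost occurrence in that gap of any letter it contains has its next occurrence strictly to the right of $i$, which shows that $(i,j)$ has a non-default child --- i.e.\ is \emph{branching} --- if and only if the gap is non-empty, that is, $i\geq j+2$. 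Second, writing $a=w[i]$, the default path out of $(i,j)$ visits exactly the nodes $(p,\prevArray[p])$ over the occurrences $p>i$ of $a$ in $w$, and such a node is branching exactly when $p$ starts a fresh run of $a$'s. Hence the first branching node strictly after $(i,j)$ on its default path is $(q,\prevArray[q])$, where $q$ is the first occurrence of $a$ after position $i$ that starts a run; this position, which we denote $\mathrm{nbar}(i)$, can be tabulated for all $i$ in $O(n)$ time by a single right-to-left scan of $w$ that remembers, for each letter, the last run-start seen. With $\mathrm{nbar}[\cdot]$, $\prevArray[\cdot]$, $\nextArray[\cdot]$ and the array $\mathrm{cnt}[\cdot]$ (where $\mathrm{cnt}[i]$ is the number of occurrences of $w[i]$ in $w[i:n]$, equivalently the length of the default path from $(i,\cdot)$ to $f$, also computed in $O(n)$), updating the ``next branching node'' field of a default-path entry after a branch, and reading off the block $w[i]^{\mathrm{cnt}[i]}$ that such an entry contributes to the current $\mas$, are both $O(1)$ operations, exactly mirroring the roles of $nb$, $down$ and $d$ in Theorem~\ref{thm:DAG_enumeration}.

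Putting this together: after the $O(n)$ preprocessing (building $\nextArray[\cdot]$, $\prevArray[\cdot]$, $\mathrm{nbar}[\cdot]$, $\mathrm{cnt}[\cdot]$ and the range-maximum-query structure of \cite{BenderF00} for $\nextArray[\cdot]$), the enumeration runs once per letter $c$: push the edge-object for $s\to(i_1,0)$, where $i_1$ is the first occurrence of $c$, and the default-path entry for $(i_1,0)$ to $f$, record on $\mathcal{C}$ whichever of the two still admits branching, output the encoding of the first path, and then repeatedly perform the $O(1)$ ``branch at the top of $\mathcal{C}$'' step, emitting one edit script per new $\mas$, until $\mathcal{C}$ becomes empty. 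On demand, the current $\mas$ is written out in time linear in its length by reading the edge labels along the $s-f$-path currently recorded on $\mathcal{S}$ and expanding each encoded default-path entry from $(i,\cdot)$ to $f$ into the block $w[i]^{\mathrm{cnt}[i]}$, just as in the corresponding part of Theorem~\ref{thm:DAG_enumeration}. Correctness reduces, exactly as in Theorems~\ref{thm:DAG_enumeration} and~\ref{thm:MAS_enumeration_linear_delay}, to the statement that between the moment a path-prefix is fixed on $\mathcal{S}$ and the moment it is removed, all of its extensions to $s-f$-paths of $\mathcal{D}_w$ are enumerated --- proved by induction on the default-path length $\mathrm{cnt}(\cdot)$ --- together with the bijection between $s-f$-paths of $\mathcal{D}_w$ and $\mas(w)$.

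The step I expect to be the main obstacle is precisely the on-the-fly replacement of $nb$ and $d$: one must check carefully that ``next occurrence of the same letter that starts a run'' really is the next branching node on a default path, treating separately the origin $(i,j)$ of a default path (where the second coordinate $j$ need not be the previous occurrence of $w[i]$, so branching has to be tested directly via $i\geq j+2$) and the interior nodes (whose second coordinate \emph{is} forced to be the previous occurrence of $w[i]$, which is what makes the run-start characterization apply), and that the interval-queue bookkeeping inherited from Theorem~\ref{thm:MAS_enumeration_linear_delay} remains consistent with the convention that the default child of a node is never re-explored once that node's default-path entry has been pushed. Once these invariants are verified, the $O(n)$ preprocessing and $O(1)$ delay bounds are immediate, since every step of the enumeration touches only a constant number of stack cells, array entries and range-maximum queries.
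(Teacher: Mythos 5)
Your proposal is correct and matches the paper's proof in all essentials: both simulate the DFS over the implicitly represented DAG ${\mathcal D}_w$, storing maximal default-path segments as path-objects on a stack (with an auxiliary stack giving $\O(1)$ access to the deepest object that still admits branching), generating non-default children on the fly via range-maximum queries on $\nextArray[\cdot]$ over split subintervals of the gap $[j+1:i-1]$, testing branching by $i\geq j+2$, and replacing the skeleton-specific tables of Theorem~\ref{thm:DAG_enumeration} by $\O(n)$-precomputable string arrays. The only divergence is bookkeeping: you traverse the branching nodes of a default path front-to-back using a precomputed ``next same-letter run-start'' array (your $\mathrm{nbar}[\cdot]$, together with $\mathrm{cnt}[\cdot]$), whereas the paper traverses them back-to-front using $\lastGap[\cdot]$ and $\lastPair[\cdot]$ so that its outputs coincide one-to-one with those of Theorem~\ref{thm:MAS_enumeration_linear_delay}; both variants are sound, yours being justified by the Theorem~\ref{thm:DAG_enumeration}-style induction rather than by order-equality with the output-linear-delay algorithm.
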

\begin{proof}
\noindent\emph{$\S $ General Approach.}
We recall that there is a bijection between $\mas(w)$ and the labelled paths in $\mathcal D_w$. Thus, we can simply incrementally enumerate these paths, by outputting encodings (or, in other words, succinct descriptions) of these paths. So, we efficiently implement the approach from the previous theorem: we do not construct ${\mathcal D}_w$, but construct and explore its paths on the fly, while simulating a DFS of the DAG. The only difference is that we work with succinct representations of these paths (based on default edges and default paths), allowing us to move from one path to the next one, in the enumeration, in $\O(1)$ time. Other than that, the paths will be output in exactly the same order as in the previous algorithm, so the correctness will be immediate. As in the respective proof, we construct the array $\nextArray$ and  build data structures allowing us to do range maximum queries on it.

We start with some simple remarks. We have already noted that a node $(i,j)$ of ${\mathcal D}_w$ has the (default) child $(\nextArray[i], i)$ (or $f$, if $\nextArray[i]>n$). Moreover, if $i>j+1$, then the node $(i,j)$ also has the node $(\nextArray[i-1], i)$ as child (or $f$, if $\nextArray[i-1]>n$); indeed, this holds due to Theorem \ref{thm:mas} and because $w[i-1]\neq w[i]$ (as $w[i]\notin \al(w[j+1:i-1])$). So, we can check in $\O(1)$ whether a node $(i,j)$ has other children than the default child: we simply need to check whether $i>j+1$. Further, if we have a default path in ${\mathcal D}_w$ containing, in order, the nodes $(i_1,i_0),\ldots, (i_r,i_{r-1})$, we can immediately identify the rightmost node on this default path that has more than one child (i.e., another child than the target of the default edge) by finding the largest value $\ell\in [r]$ such that $i_\ell - i_{\ell-1}>1$.  \looseness=-1

In our enumeration, we will store on the DFS-stack {\em path-objects}. These are either tuples $[(n+1,j),a,\emptyset]$ (acting, like in the proof of \cref{thm:MAS_enumeration_linear_delay}, as a marker for the fact that the current path has reached $f$) or tuples $[\defPath((i,j),(i',j')), w[i], I]$, where $(i,j)$ is a node of ${\mathcal D}_w$, $(i',j')$ is a node accessible from $(i,j)$ on the longest default path leaving $(i,j)$, and $I$ is a pair $\langle (i'',j'') | L \rangle$ where $(i'',j'')$ is a node on the path encoded by $\defPath((i,j),(i',j'))$, with $i''>j''+1$, and $L$ is a queue containing intervals $[x:y]$, with $x,y\in [j''+1:i''-1]$. Each path-object $[\defPath((i,j),(i',j')), a, I]$
represents a sequence of nodes $(i_1,i_0),(i_2,i_1),\ldots, (i_r,i_{r-1})$, where $(i_1,i_0)=(i,j)$ and $(i_r,i_{r-1})=(i',j')$, and $w[i_\ell]=a$ for each $\ell\in[r]$.
Note that we might also have that $(i,j)=(i',j')$, and then $[\defPath((i,j),(i',j')), a, I]$ represents only the node $(i,j)$. Also, the path-object $[(n+1,j),a,\emptyset]$ corresponds to the pair formed by the node $f$ and a generic $a$-labelled edge of ${\mathcal D}_w$ from some node $(j,x)$ to $f$, where $x$ is not fixed; for simplicity of exposure, we will say, in the following, that the node associated with this path-object is $(n+1,j)$ (as stand in for $f$). 

As before, we focus on enumerating the paths starting with some node $(i_1,0)$, where $w[i_1]$
is the first occurrence of letter $a=w[i_1]$ in $w$; let $i_{r-1}$ and $i_r$ be the positions of the rightmost two occurrences of $a$ (where $i_r>i_{r-1}$). We can safely assume that $a$ occurs at least twice in $w$, otherwise we simply enumerate the path corresponding to the $\mas$ $aa$ and move on to another letter. The first step in our enumeration is to output ``$s,\defPath((i_1,0),(i_r,i_{r-1})),((n+1,i_r),a)$''; this corresponds to the path starting in $s$, going to $(i_1,0)$, then following the default path from $(i_1,0)$ to $(i_r,i_{r-1})$, and taking the $a$-labelled edge to $f$, and we note that this is also the first path enumerated by the algorithm of  \cref{thm:MAS_enumeration_linear_delay}. We also put on the stack maintained in our algorithm, in order, the path-objects $[\defPath((i_1,0),(i_r,i_{r-1})),w[i_1],I],[(n+1,i_r),a,\emptyset]$, with $I=\langle (i_\ell,i_{\ell-1})\mid [i_{\ell-1}+1:i_\ell-1]\rangle$ where $(i_\ell,i_{\ell-1})$ is the rightmost node on the default path $\defPath((i_1,0),(i_r,i_{r-1}))$ which has more than one child. \looseness=-1

We maintain the following invariant properties. After the first $e$ outputs, we have:
\begin{itemize}
\item The sequence of nodes defined by the path-objects stored on the stack is $(i_1,0),$ $(i_2,i_1),\ldots,$ $(i_r, i_{r-1}), (n+1,i_r)$, where there is always an edge between $(i_t,i_{t-1})$ and $(i_{t+1},i_t)$, for $t\in [1:\ell]$, under the assumption that $i_{0}=0$ and $i_{r+1}=n+1$.
\item Assume $[\defPath((i,j),(i',j')), w[i], I]$ is an object on the stack, representing the sequence of nodes  $(i_g,i_{g-1}),\ldots, (i_h, i_{h-1})$, and let $(i_d,i_{d-1})$ be the rightmost node on this path which has more than one child (i.e., $i_d>i_{d-1}+1$). Then $I=\langle (i_d,i_{d-1})\mid L\rangle $
and $L$ is a list of subintervals of $[i_{d-1}+1:i_d-1]$, where each such subinterval $[x:y]$ contains at least one position $j$ such that no path starting with $s,(i_1,0),\ldots, (i_d, i_{d-1}), (\nextArray[j],i_d)$ was enumerated yet; note that $\nextArray[j]>i_d$ holds, accordingly. Moreover, each position $j\in [i_{d-1}+1:i_d-1]$, such that no path extending $s,(i_1,0),\ldots, (i_d, i_{d-1}), (\nextArray[j],i_d)$ was enumerated yet, is contained in exactly one subinterval of $L$.
\item The first $e$ outputs correspond to the first $e$ outputs of the algorithm from \cref{thm:MAS_enumeration_linear_delay}.
\end{itemize}
These properties clearly hold after the first enumeration step.

Let us now explain how an 
enumeration step is done. We retrieve from the stack the uppermost element $[\defPath((i,j),(i',j')), w[i], I]$ with $I\neq \emptyset$.
This element becomes the new top of the stack (we implement this using a second stack, as in Theorem \ref{thm:DAG_enumeration}). Assume now that $I=\langle (i'',j'') | L \rangle$ and extract $[x:y]$, the first element of $L$. We first compute $g=RMaxQ(x,y)$
(and, by the maintained properties $\nextArray[g]>i''$); if $\nextArray[RMaxQ(x,g-1)]>i''$, add $[x:g-1]$ to $L$ and if $\nextArray[RMaxQ(g+1,y)]>i''$, add $[g+1:y]$ to $L$. Note that the $RMaxQ$-based search strategy, and the corresponding interval splitting, that we employ here is the same as in the algorithm described in  \cref{thm:MAS_enumeration_linear_delay}. It is important in our enumeration that if $L$ becomes empty in this step, then we efficiently identify the rightmost node $(z,v)$ on $\defPath((i,j),(i'',j''))$ for which $z>v+1$ and set $I'=\langle (z,v)\mid [v+1:z-1]\rangle$; if no such node exists, we set $I'=\emptyset$. 
Now, let us assume that $\nextArray[g]\leq n$. Let $\defPath((\nextArray[g],i''),(g',g''))$ be the longest default path starting in node $(\nextArray[g],i'')$ and with $g'\leq n$.
We can now emit the new output: ``return to node $(i'',j'')$, follow the edge to node $(\nextArray[g],i'')$, follow $\defPath((\nextArray[g],i''),(g',g''))$, follow edge to $f$ labelled with $w[g]$''. Then we pop $[\defPath((i,j),(i',j')), w[i], I]$ from the top of the stack and insert, in this order, the path-objects $[\defPath((i,j),(i'',j'')), w[i], I']$, $[\defPath((\nextArray[g],i''),(g',g'')), w[g], I'']$, 
and $[(n+1,g'),w[g],\emptyset]$. Here, $I''=\langle (p,q)\mid [p+1:q-1]\rangle$, where $(p,q)$ is the last node on $\defPath((\nextArray[g],i''),(g',g''))$ with $p>q+1$ (i.e., $(p,q)$ has more than one child). 
If $\nextArray[g]= n+1$, we can directly emit the new output: ``return to node $(i'',j'')$, follow edge to $f$ labelled with $w[g]$''. Then we pop $[\defPath((i,j),(i',j')), w[i], I]$ from the stack, insert $[\defPath((i,j),(i'',j'')), w[i], I']$ and also insert $[(n+1,i''),w[g],\emptyset]$. Then we move on and enumerate the next path.

As we basically identify the rightmost node which has an unexplored child on the currently enumerated path, it is immediate that this algorithm simulates exactly the one from \cref{thm:MAS_enumeration_linear_delay}, so the last item of our invariant is preserved. The first and second item are also preserved: While the first one is trivial, the argument for the second one is more involved. Firstly, we always work on the interval list corresponding to the last node with unexplored children from a default path, and we make sure that this always holds. Then, the interval queues are rewritten using range maximum queries on $\nextArray$: we select one interval, identify a position in it corresponding to a new child, and then we split the interval in two subintervals around this position (which is now explored). This is analogous to the procedure in 
\cref{thm:MAS_enumeration_linear_delay}, and it ensures that we will explore all children of that node (we do not miss any position leading to an unexplored child). As all these invariants hold, the correctness of our algorithm also follows.

For the complexity (see also below), note that the delay between two outputs is constant: the edit-scripts are of constant size, and the time needed to move from one path to the next, while correctly managing the stack, is also $\O(1)$ due to the usage of the data structures described next. Further, all these data structures are computed in $\mathcal O(n)$ time.

\noindent\emph{$\S $ Details.} Our algorithm maintains a stack (corresponding to the DFS) which is implemented statically: it is an array (of size at most $n+1$) for which we maintain a variable $top$ that points to the top of the stack (incrementing it at push operations, decrementing it at pop operations), and the objects of the stack are stored in the respective array (the bottom is on position $1$, the top on position $top$). This allows us to pop in $\O(1)$ time multiple elements of the array: to pop $\ell$ elements, we simply decrement $top$ by $\ell$; in subsequent pushes, the elements which were previously in the array are overwritten. 

In our algorithm, we need to be able to retrieve the uppermost element of the stack which has $I\neq \emptyset$. This is done by maintaining an additional auxiliary stack of the elements with $I\neq \emptyset$, with pointers to their position in the DFS-stack. When saying  that we retrieve from the DFS-stack the uppermost element $[\defPath((i,j),(i',j')), w[i], I]$ with $I\neq \emptyset$ and this element becomes the new top of the stack, we simply update the $top$ variable of the DFS-stack to point to the element $[\defPath((i,j),(i',j')), w[i], I]$, which was on top of the auxiliary stack. Modifications on the elements of the DFS-stack, as well as pops and pushes in this stack, should be reflected in the auxiliary stack, but this is trivial, as we always operate only on the top elements of both stacks.

In $\O(n)$ overall time (by traversing the word once), we can produce $\sigma$ lists $L_a$, for $a\in\Sigma$, containing the positions where $a$ occurs in $w$, increasingly ordered. 
We can assume that each list has at least two elements: if $a$ occurs only once in $w$, then the last element of $L_a$ is $n+1$. The last two elements of $L_a$ will give us then, in $\O(1)$ time, the last two occurrences of $a$ in $w$, as needed in our algorithm, for the initial enumeration step. We store these two occurrences (say $x$ and $y$, with $x<y$) as $\lastPair[a]=(x,y)$. 

We will also use the lists $L_a$, for $a\in \Sigma$, to define an array $\lastGap[\cdot]$, where $\lastGap[i]=[x:y]$ if and only if $\al(w[y:i])=\{w[i]\}$, $w[x]=w[y]=w[i]$, $y=\nextArray[x]$ and $x-y>1$, and  $\lastGap[i]=\emptyset$ otherwise. For each $a\in \Sigma$, we traverse $L_a$ from left to right. If $i$ is the first element of $L_a$, then $\lastGap[i]=\emptyset$. If $i$ is not the first element of $L_a$, and $j$ is the element occurring just before $i$ in $L_a$ (i.e., $j<i$ and $a\notin \al(w[j+1:i-1])$), then $\lastGap[i]=[j:i]$, if $i-j>1$, or $\lastGap[i]=\lastGap[j]$, otherwise. We repeat this process for all $a\in \Sigma$, and the total time is given by the total size of the lists $L_a$, for $a\in \Sigma$, so $\O(n)$. 
We use this array to identify the rightmost node $(z,v)$ on a default path (represented succinctly as) $\defPath((i,j),(i'',j''))$ for which $z>v+1$. We proceed as follows. We retrieve $\lastGap[i'']=[x:y]$. If $x\geq j$, then $(y,x)$ is the node $(z,v)$ we were looking for (by the definition of the default path, this node must be on that path). If $x<j$, then $(i,j)$ is the node $(z,v)$ we were looking for, but only if $i\neq j+1$. Finally, in all other cases, there is no such node $(z,v)$. 

Finally, we need to compute in $\O(1)$ time the node $(g',g'')$ with $g'\leq n$ such that $\defPath((\nextArray[g],i''),(g',g''))$ is the longest default path starting with $(\nextArray[g],i'')$.
We retrieve $\lastPair[w[g]]=(x,y)$, and note that $y\geq \nextArray[g]$ (as, in the worst case, both $y$ and $\nextArray[g]$ equal $n+1$).
If $x\geq \nextArray[g]$, then $g''=x$ and $g'=y$.  If $x < \nextArray[g]=y$, then $g''=y$ and $g'=n+1$. 

Using these data structures (which can be computed in $\O(n)$ time) alongside the array $\nextArray$ and  the corresponding data structures allowing us to do range maximum queries on it (which can also be computed in $\O(n)$ time), we are able to move from the currently enumerated path (and the corresponding output) to the next one in $\O(1)$ time.

\noindent\emph{$\S $ Correctness and Complexity.} The algorithm simply simulates, more efficiently, the one in \cref{thm:MAS_enumeration_linear_delay}: its outputs correspond one-to-one to the outputs of that algorithm. Therefore, the correctness is a consequence of the correctness of that algorithm (and, in particular, it follows from the fact that both these algorithms just enumerate the paths of ${\mathcal D}_w$ using a DFS). As far as the complexity is concerned, the delay between two outputs is constant:
the edit-scripts are of constant size, and the time needed to move from one path to the next, while correctly managing the stack, is also bounded by a constant, as explained in the part dedicated to the used data structures. All these data structures are computed in $\O(n)$ time, so the preprocessing stays linear. 

Moreover, it is immediate that the currently enumerated path (that is, $\mas$) can be retrieved from the stack of the DFS in output-linear time; for the paths, we simply have to output the nodes, in the order in which they appear on the default paths stored on the respective stack, from the bottom to the top. The $\mas$ can then be derived from this sequence of nodes (with its final letter being the second component of the top symbol $[(n+1,i''),w[g],\emptyset]$ of the stack). Thus, our statement holds.
\end{proof}

According to Section \ref{sec:enumSkelDAG}, skeleton-representations for $\sas(w)$ or $\mas(w)$ can also be used to count their elements. It is unclear if the approach of Theorem \ref{thm:MAS_enumeration_constant_delay} can be used in this direction. \looseness=-1

\section{Conclusion and Further Work}\label{sec:longMAS}

In this paper, we have given optimal (incremental) enumeration algorithms for the sets of shortest and minimal absent subsequences of a given input word. A potential way to improve these results would be to aim for algorithms with the same optimal preprocessing time and delay which enumerate the respective sets in some natural order of their elements (e.g., in lexicographical order). Our algorithms do not have such a nice property, as the order in which the respective absent subsequences are enumerated seems somehow arbitrary.

Another interesting direction of research would be to understand the extremal elements of these sets, with respect to some natural order. For instance, when we consider ordering these sets with respect to the length of the absent subsequence, then, for a given word $w$, all elements of $\sas(w)$ have the same length, the shortest elements of $\mas(w)$ have the same length as the elements of $\sas(w)$, but computing a longest element of $\mas(w)$ seems non-trivial. In particular, an algorithm computing this length, for a string $w$ of length $n$ over an alphabet of size $\sigma$, in $\O(n\sigma)$ time was reported in \cite{Tronicek23}. We can improve this bound significantly, and propose here an algorithm computing a longest $\mas$ in $\O(n\log \sigma)$ time. 

Let us begin the presentation of this algorithm by emphasising that simple greedy strategies for computing a longest $\mas$ of a string do not seem to work. For instance, one possible greedy strategy 
(suggested, e.g., by looking at a longest $\mas$ of the string $1^2 2 1^5$, from Section \ref{sec:prels}, namely $1^8$), 
would be to compute a longest $\mas$ in a string $w$ by simply selecting the letter $c$ which has the most occurrences in $w$, and, if its number of occurrences is $\ell$, then return $c^{\ell+1}$ as a longest $\mas$. Clearly, $c^{\ell+1}$ is an $\mas$ of $w$, but not necessarily a longest one. Indeed, if $w=1^{n}212^{n}$, then a longest $\mas$ of $w$ is $1^{n+1} 2^{n+1}$, while the above mentioned greedy strategy would return $1^{n+2}$ or $2^{n+2}$. So, it seems that a more sophisticated approach is needed, and we propose a dynamic programming algorithm solving this problem.


In order to define our algorithm, we need the following data structure, which can be, e.g., a 1-dimensional range tree~\cite{Lueker78,Bentley79}, implemented using AVL-balanced trees \cite{AVL}. 
\begin{restatable}{lemma}{avlmaxlemma}\label{lem:AVL-MAX}
Let $n,\sigma$ be positive integers, with $\sigma\leq n$. We can maintain a set $S\subset [n]\times[n] $ with $\sigma$ elements, where if $(a,b),(c,d)\in S$ then $a\neq c$, such that the following operations are implemented in $\O(\log \sigma)$ time:
\begin{itemize}
\item $\mathrm{insert}(a,b)$: inserts $(a,b)$ in $S$; it is only successful if there is no element $(a,d)$ already contained in $S$ and $|S|<\sigma$. 
\item $\mathrm{delete}(a)$: removes $(a,b)$ from $S$ (the only element with $a$ as its first component), if such an element exists.
\item $\mathrm{rangeMax}(x,y)$: returns $(c,d)\in S$ with $c\in [x:y]$ and $d=\max\{b\mid (a,b)\in S, a\in [x:y]\}$ and, for any other $(e,d)\in S$, with $e\in [x:y]$, we have $c<e$.
\end{itemize}
\end{restatable}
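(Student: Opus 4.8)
The plan is to implement $S$ as a balanced binary search tree (an AVL tree) keyed on the first component of the stored pairs. Since the first components of the elements of $S$ are pairwise distinct, this key induces a total order on $S$, so the tree is well defined. Besides the usual balance information, each node $v$, storing a pair $(a_v,b_v)$, is augmented with a field $\mathrm{best}(v)=(c,d)$: the pair occurring in the subtree rooted at $v$ whose second component $d$ is largest, with ties broken by taking the one whose first component $c$ is smallest. This field obeys the obvious recurrence — $\mathrm{best}(v)$ is the ``winner'' among $(a_v,b_v)$, $\mathrm{best}(\mathrm{left}(v))$ and $\mathrm{best}(\mathrm{right}(v))$ under the preference rule ``larger second component wins; on a tie, smaller first component wins'' (a missing child contributing a sentinel $(+\infty,-\infty)$ that loses every comparison) — hence it is recomputable in $\O(1)$ time from the fields of the children. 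We also keep a plain counter holding $|S|$.

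For $\mathrm{insert}(a,b)$ and $\mathrm{delete}(a)$ we first search for the key $a$ in $\O(\log\sigma)$ time; this settles admissibility (for $\mathrm{insert}$: that $a$ is not already a first component and that $|S|<\sigma$; for $\mathrm{delete}$: that an element with first component $a$ exists). We then perform the standard AVL insertion or deletion. The only additional work is restoring the augmented fields: every AVL rotation touches $\O(1)$ nodes and changes their subtree contents only locally, so after each rotation we recompute $\mathrm{best}$ bottom-up for the constant number of affected nodes, and afterwards recompute $\mathrm{best}$ along the remainder of the search path up to the root. This costs $\O(\log\sigma)$ extra time, and the counter is updated in $\O(1)$.

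For $\mathrm{rangeMax}(x,y)$ we use the \emph{canonical decomposition} of the key interval $[x:y]$ induced by the tree. Walking down from the root, the search paths towards $x$ and towards $y$ first share a common prefix and then split; together they identify $\O(\log\sigma)$ entire subtrees that ``hang off'' these paths whose keys lie entirely in $[x:y]$, plus $\O(\log\sigma)$ individual nodes lying on the two paths whose own key happens to lie in $[x:y]$. Each element of $S$ with first component in $[x:y]$ is represented exactly once among these subtrees and single nodes. We retrieve the stored $\mathrm{best}$ value of each such subtree and the pair of each such single node, and combine all $\O(\log\sigma)$ candidates in increasing order of keys, using the same preference rule. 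The outcome is exactly the required $(c,d)$: within each subtree $\mathrm{best}$ already reports the smallest first component among the local second-component maximizers, and processing the pieces in key order makes the final tie-break select the globally smallest first component. Total time $\O(\log\sigma)$.

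The only delicate points are (i) making sure the canonical decomposition covers $[x:y]$ exactly once, in particular handling correctly the nodes lying \emph{on} the search paths, and (ii) keeping the tie-breaking consistent — both the augmented field $\mathrm{best}$ and the final combination step must favour the smallest first component among the second-component maximizers. Both are routine once the preference rule is fixed; I do not expect a genuine combinatorial obstacle, and the claimed $\O(\log\sigma)$ bounds follow immediately from the $\O(\log\sigma)$ height of the AVL tree.
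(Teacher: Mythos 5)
Your proposal is correct, and it uses the same underlying data structure as the paper: an AVL tree keyed on the first components, augmented with per-node subtree information about the maximal second component (a 1-dimensional range tree), with insertions and deletions handled as usual and the augmented fields repaired along the search path and through rotations in $\O(\log\sigma)$ time. The only real divergence is in how $\mathrm{rangeMax}(x,y)$ is answered. The paper's proof computes the successor of $x$ and the predecessor of $y$ and then simply returns the maximum-of-subtree value stored at their lowest common ancestor; taken literally this is too coarse, since the subtree rooted at that ancestor can contain keys outside $[x:y]$, so the intended procedure must in any case walk down the two boundary paths and restrict attention to the hanging subtrees that lie entirely inside the range. Your canonical-decomposition query is exactly that standard, fully correct version: you cover $[x:y]$ by $\O(\log\sigma)$ complete subtrees plus the on-path nodes, each element counted once, and combine the stored candidates. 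You also treat the tie-breaking requirement of the lemma (among second-component maximizers, return the smallest first component) explicitly, by building it into the augmented field and into the final combination step, whereas the paper glosses over it; your observation that the subtree-local tie-break is compatible with the global one is the point that makes this work. So: same data structure and same complexity analysis, but your query routine is the more careful one, and it is the version one would actually have to implement to make the paper's sketch airtight.
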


\begin{proof}
The general idea is to maintain $S$ as an AVL-balanced tree \cite{AVL}, where the order in the tree is according to the first component of the pairs contained in $S$. Moreover, each node of the tree will store the maximum of the second components of the elements stored in its subtree. This structure is usually called a 1-dimensional range tree \cite{Lueker78,Bentley79}. Insertions and deletions are done as in a usual AVL-tree (w.r.t. the first component of the inserted or deleted element), with the maximum-of-subtree value being updated bottom up (also while rebalancing). Answering $\mathrm{rangeMax}(x,y)$ is done by computing the successor of $x$ and predecessor of $y$ (w.r.t. the first component), and then returning the maximum-of-subtree value stored in their lowest common ancestor.


Let us now describe everything in more details.

As mentioned above, insertions and deletions are done as in a usual AVL-tree (where the search criterion is the first component of the inserted or deleted element), the only observation being that the maximum of each subtree (w.r.t. the second component) is updated bottom-up from the inserted node (or, respectively, the deleted node) to the root of the tree; one has to take into account the potential rebalance-operations (rotations), but only $\O(\log \sigma)$ such operations are used in any insertion and/or deletion, and the maximum-of-subtree value can be recomputed in each rotation for each of the (at most) three subtrees involved.

The answer to $\mathrm{rangeMax}(x,y)$ is computed in a straightforward way: find the pair $(x',v_1)$ with a minimal $x'\geq x$ (which is equivalent to a successor search for the key $x$); find the pair $(y',v_2)$ with a maximal $y'\leq y$ (which is equivalent to a predecessor search for the key $y$); find the lowest common ancestor of $(x',v_1)$ and $(y',v_2)$, and return the maximum-of-subtree value stored in that node. The procedure is clearly correct and takes $\O(\log \sigma)$ time.

We note that the time bounds reported in Lemma \ref{lem:AVL-MAX} can be slightly improved (see \cite{ChanT18}), to $\O(\log \sigma / \log\log \sigma)$. However, for the sake of clarity and simplicity, we prefer to use the  slightly less efficient version presented above (but note that this lemma is used as a black-box, with a different complexity being reflected only in a potential replacement, by $\log \sigma / \log\log \sigma$, of the $\log \sigma$ factor present in the complexity of the algorithm reported in Theorem \ref{thm:longestMAS}). 
\end{proof}

Our result regarding the computation of the longest $\mas$ of a word is the following.

\begin{restatable}{theorem}{longestmastheorem}\label{thm:longestMAS}
Given a string $w\in \Sigma^n$, over an alphabet of size $\sigma$, we can compute a longest $\mas$ of $w$ in $\O(n \log \sigma)$ time.
\end{restatable}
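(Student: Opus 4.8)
The plan is to reduce the problem to computing, for every position $i\in[n]$, the value $A[i]$ defined as the maximal length of an $\mas$-prefix of $w$ whose canonical embedding ends at position $i$; this is well defined since, by \cref{lem:MASprefix}, the string $w[i]^{\len{w[1:i]}_{w[i]}}$ is such a prefix, so $A[i]\geq 1$. I would first show that the length of a longest $\mas$ of $w$ equals $\max_{i\in[n]}\bigl(A[i]+\len{w[i+1:n]}_{w[i]}+1\bigr)$: on the one hand, if $v$ is a longest $\mas$-prefix ending at $i$ and $\ell=\len{w[i+1:n]}_{w[i]}$, then $v\,w[i]^{\ell+1}$ is an $\mas$ (exactly as in the proof of \cref{lem:MASextension}), giving the lower bound; on the other hand, for any $\mas$ $u$ the string $u[1:|u|-1]$ is an $\mas$-prefix ending at the last position $i$ of its canonical embedding, so $|u|\leq A[i]+1\leq A[i]+\len{w[i+1:n]}_{w[i]}+1$. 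Hence it suffices to compute all values $A[i]$, after which the answer follows from a single pass.

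Next I would set up the dynamic programming for $A[i]$ using \cref{thm:mas}. If $\prevArray[i]=0$ (i.e.\ $i$ is the first occurrence of $w[i]$), then $A[i]=1$; otherwise, reading off Conditions (3) and (4) of \cref{thm:mas} for the letter embedded at $i$, and using that $\prevArray[i]$ is the latest occurrence of $w[i]$ before $i$, one obtains
\[
    A[i]=1+\max\{\,h(k,\prevArray[i])\mid \prevArray[i]\leq k\leq i-1\,\},
\]
where $h(k,p)$ is the maximal length of an $\mas$-prefix ending at $k$ whose second-to-last embedded position is strictly smaller than $p$. The quantity $h$ obeys the completely analogous one-level-deeper recurrence: $h(k,p)=1$ if $\prevArray[k]=0$ (and $p\geq 1$), $h(k,p)=-\infty$ if $1\leq p\leq\prevArray[k]$, and $h(k,p)=1+\max\{h(k',\prevArray[k])\mid \prevArray[k]\leq k'\leq p-1\}$ if $p>\prevArray[k]\geq 1$. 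Note that the summand for $k=\prevArray[i]$ always equals $A[\prevArray[i]]$, so the maximum above is never empty.

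For the efficient implementation I would sweep $i$ from $1$ to $n$, maintaining $\prevArray[\cdot]$, $\nextArray[\cdot]$, the per-letter lists of occurrences, and a small number of instances of the data structure of \cref{lem:AVL-MAX}, all keyed by positions of $w$. The crucial structural observation is that, although the DAG ${\mathcal D}_w$ behind this recurrence has $\Theta(n\sigma)$ accessible nodes, only $\O(\sigma)$ of them are relevant at any moment: for fixed $i$ with $q=\prevArray[i]$, one has $h(k,q)>-\infty$ for $k\in[q,i-1]$ exactly when $k$ is the first occurrence, within the window $w[q:i-1]$, of the letter $w[k]$, and there are at most $\sigma$ such positions. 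Unfolding the recurrence one further level turns $A[i]$ into $1+\max$ of $\O(\sigma)$ expressions of the form $1+h(\cdot,\cdot)$ whose second argument is now a last-occurrence of a letter \emph{before} $q$; the goal is to organize these as a bounded number of $\mathrm{rangeMax}(x,y)$ queries on the \cref{lem:AVL-MAX} trees, with the entry kept for each letter storing exactly the restricted prefix-length that that letter currently contributes. If this is achieved with $\O(1)$ queries and $\O(1)$ amortized tree updates per position of $w$, the overall running time is $\O(n\log\sigma)$ on top of the $\O(n)$ preprocessing (the arrays $\prevArray[\cdot]$, $\nextArray[\cdot]$ and the occurrence lists being computable in linear time).

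The hard part is precisely this reconciliation of the two-level recurrence with the $\O(\log\sigma)$-per-position budget. One cannot simply substitute $A[k]$ for $h(k,\prevArray[i])$ in the recurrence for $A[i]$: in general $h(k,\prevArray[i])<A[k]$, since the longest $\mas$-prefix ending at $k$ may be forced to use a second-to-last position that is $\geq\prevArray[i]$, and a small example already shows that such a substitution overestimates $A[i]$. Consequently one must maintain, along the sweep, a compressed description of the step functions $p\mapsto h(k,p)$ for the currently relevant positions $k$, choose the right invariant describing which value each letter's tree entry holds as $\prevArray[i]$ varies non-monotonically in $i$, and show that the total number of insertions and deletions performed on the trees over the whole run is $\O(n)$. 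Once this bookkeeping is pinned down, correctness is immediate from \cref{thm:mas} and the reduction above, and the complexity bound follows from \cref{lem:AVL-MAX}.
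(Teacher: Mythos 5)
Your reduction is fine: your $A[i]$ is exactly the paper's table $D[i]$ (length of a longest $\mas$-prefix whose canonical embedding ends at $i$), your closing formula $\max_i\bigl(A[i]+\len{w[i+1:n]}_{w[i]}+1\bigr)$ is a correct (and equivalent) variant of the paper's $1+\max_i D[i]$, the two-parameter recurrence you extract from \cref{thm:mas} via $h(k,p)$ is a correct reading of conditions (3) and (4), and your observation that at any moment only $\O(\sigma)$ end positions can contribute is also true. You have, moreover, correctly identified the real obstacle: one may not blindly replace $h(k,\prevArray[i])$ by $A[k]$, because the longest prefix ending at $k$ may be forced to use a second-to-last position that is too far right.

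The gap is that the proof stops exactly there. The part you yourself label ``the hard part'' --- which invariant the \cref{lem:AVL-MAX} trees satisfy, what value each letter's entry holds as $\prevArray[i]$ varies, why $\O(1)$ queries and amortized $\O(1)$ updates per position suffice, and why the restricted maxima still yield the correct $A[i]$ --- is the entire content of the $\O(n\log\sigma)$ claim, and your text defers it (``once this bookkeeping is pinned down\dots''). Maintaining ``compressed descriptions of the step functions $p\mapsto h(k,p)$'' is neither specified nor shown to fit the per-position budget, so what is actually proved is only the easy reduction and recurrence, not the theorem. For comparison, the paper does not manipulate the two-parameter function at all: it stores, besides $D[i]$, a single witness $\back[i]$, the minimal second-to-last position over all longest $\mas$-prefixes ending at $i$, proves the structural fact $D[\back[i]]=D[i]-1$, and runs a left-to-right sweep in which the pair $(g,D[g])$ is inserted into the range-max structure only at iteration $\back[g]$ and removed at iteration $g$. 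This timing is what makes the plain value $D[g]$ safe to use when computing $D[\nextArray[i]]$ (precisely the soundness issue you flag), and it simultaneously bounds the structure's size by $\sigma$ (at most one active pair per letter), giving one $\mathrm{rangeMax}$ query and $\O(1)$ updates per position, hence $\O(n\log\sigma)$ overall. That single-witness bookkeeping, rather than per-position step functions, is the idea missing from your write-up.
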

\begin{proof}
\noindent {\em $\S $ General Approach.} We will obtain a longest $\mas$ by a dynamic programming approach, in which two arrays $D[\cdot]$ and $\back[\cdot]$, each with $n$ elements and indexed by the numbers of $[n]$, are computed. 

More precisely, for $i\in [n]$, we define $D[i]$ as the length of a longest $\mas$-prefix, whose canonical embedding in $w$ ends on position $i$ (note that there always exists at least one $\mas$-prefix ending on each position of $w$ by Lemma~\ref{lem:MASprefix}).

If there is a single longest $\mas$-prefix of length $D[i]=r$, whose canonical embedding in $w$ ends on position $i$, and this $\mas$-prefix has the canonical embedding $i_1<i_2<\ldots <i_{r-1}<i_r=i$, then we define $\back[i]=i_{r-1}$ (in the case when $r=1$, we set $\back[i]=0$). If there are more such longest $\mas$-prefixes of length $D[i]=r$, whose canonical embedding in $w$ ends on position $i$, we choose the one such $\mas$-prefix which has the canonical embedding $i_1<i_2<\ldots <i_{r-1}<i_r=i$ with the value of $i_{r-1}$ minimal (compared to all other such $\mas$-prefixes), and set $\back[i]=i_{r-1}$. 

\medskip

\noindent {\em $ \S$ Preliminaries.} Before showing how to compute $D[\cdot]$ and $\back[\cdot]$, let us note the following property (crucial for our algorithm): if $D[i]=r$ and $\back[i]=j$, for some $j>0$, then $D[j]=r-1$. This can be shown as follows. Clearly, by definition, $D[j]\geq r-1$ and $\prevArray[j]< \prevArray[i]\leq j$, due to Theorem~\ref{thm:mas}. Assume, for the sake of a contradiction, that $D[j]>r-1$. Then, there exists an $\mas$-prefix of length $p>r-1$ whose canonical embedding is $j_1<j_2<\ldots< j_{p-1} <j_p=j$; for simplicity, set $j_0=0$. By the definition of an $\mas$-prefix, we have that $j_{p-1}\geq \prevArray[j]>j_{p-2}$. If $j_{p-1}< \prevArray[i] $, then $j_1<j_2<\ldots< j_{p-1} <j_p=j<i$ is the (canonical) embedding of an $\mas$-prefix of length $p+1 > r$ ending on position $i$, a contradiction. So, assume $j_{p-1}\geq \prevArray[i] $. We thus have  $j_{p-1}> \prevArray[i] > \prevArray[j]>j_{p-2}$. So, $j_1<j_2<\ldots< j_{p-2}<j_{p-1}<i$ is a canonical embedding of an $\mas$-prefix of length $p\geq r$, whose one-before-last position $j_{p-1}$ is strictly smaller than $\back[i]$; again, this is a contradiction with the definitions of $D[\cdot]$ and $\back[\cdot]$.

So, based on this property, we can intuitively see that the array $D[\cdot]$ is used for computing the length of a longest $\mas$, while $\back[\cdot]$ will be used to actually retrieve this $\mas$ (from the right to the left). Indeed, if $D[i]=r$ then there exists an $\mas$-prefix of length $r$ with the embedding $0=i_0<i_1<i_2<\ldots <i_{r-1}<i_r=i$ such that $\back[i_{j}]=i_{j-1}$, for all $j\in [r]$. So, if we see the pairs $(\back[i],i)$ as directed edges in a DAG with nodes $\{0,1,\ldots, n\}$, then every $\mas$-prefix corresponds to a path starting in $0$. 

\medskip

\noindent {\em $\S $ Algorithm.} We now show how to actually compute the arrays $D[\cdot]$ and $\back[\cdot]$. Our algorithm works iteratively, for $j=0$ to $n$. In the iteration for $j=0$, the values of $D[g]$ are set, for every $g$ where a letter of $\Sigma$ occurs for the first time. Then, in the iteration for $j=i>0$ we compute $D[\nextArray[i]]$ and $\back[\nextArray[i]]$. Initially, all elements of the arrays $D[\cdot]$ and $\back[\cdot]$ are set to $-1$.

We maintain a set $E\subseteq (\{0\}\cup [n])\times (\{0\}\cup [n])$ with at most $\sigma$ elements, implemented as a $1$-dimensional range tree (see Lemma \ref{lem:AVL-MAX}); intuitively, when the iteration for $j=i$ is reached, $E$ contains the pairs $(g,v)$ (corresponding to edges $(\back[g],g)$, when one refers to the DAG-analogy)  with $-1\neq \back(g)< i\leq g$ and $D[g]=v$ (we will ensure that this is possible, since $\prevArray[g]\leq \back[g]$, $D[g]$ was already computed at this point). Initially, we set $E=\emptyset$. 

Finally, we also maintain the sets $L[q]\subseteq  (\{0\}\cup [n])\times (\{0\}\cup [n])$, for all $q\in \{0\}\cup [n]$, all initially set to $\emptyset$ and implemented as simple lists; intuitively, at the beginning of the iteration for $j=i$, for all $g$ such that $\prevArray[g]< i$, we have that the pair $(\back[g],g)$ is stored in the list $L[\back[g]]$ and there are no other pairs in the lists $L[\cdot]$. Referring to the DAG-analogy: the edges of the DAG of $\mas$-prefixes that we have discovered so far are stored in the list corresponding to their originating nodes. After a pair $(\back[g],g)$ is inserted in the list $L[\back[g]]$, during the iteration for some $j=i<\back[g]$, it will then be reconsidered (that is, inserted in $E$) when we reach the iteration $j=\back[g]$, at whose end the value $D[f]$ is computed, and $(\back[f],f)$ is inserted in $L[\back[f]]$, for $f=\nextArray[\back[g]]$.

Now we proceed with the description of the actual dynamic programming algorithm. As said above, we iterate for $j$ from $0$ to $n$. 

Before starting the computation, we set all elements of $D[\cdot]$ and $\back[\cdot]$ to~$-1$.

Initially, in the iteration for $j=0$, we simply set $D[i]=1$ and $\back [i]=0$, insert $(i,1)$ in $E$, and the pair $(0,i)$ in $L[0]$, whenever $i$ is the position of the leftmost occurrence of $w[i]$ in $w$.

We explain, in the following, how we proceed for $j\geq 1$. As already hinted above, we aim to maintain the following properties: at the end of the iteration for $j=i-1$ (that is, if $i\leq n$, at the beginning of the iteration for $j=i$, or at the end of the algorithm for $i=n+1$), the following hold:
\begin{itemize}
    \item all elements $D[\ell]$ and $\back[\ell]$ with $\ell\in [n]$ and $\prevArray[\ell]< i$ are correctly computed,
    \item $E$ contains the pairs $(g,D[g])$, where $ \back[g]< i\leq g$ and $\back[g]\neq -1$,
    \item for all $g$ such that $\prevArray[g]< i$, we have that the pair $(\back[g],g)$ is stored in the list $L[\back[g]]$ and there are no other pairs stored in the lists $L[\cdot]$.
\end{itemize}
These clearly hold at the end of the iteration for $j=0$ (i.e., beginning of the iteration for $j=1$).

Now, in the iteration for $j=i$, if $E\neq \emptyset $ and $ \nextArray[i] \leq n$, we set $h=\nextArray[i]$, $D[h]=r+1$, $\back[h]=q$, where $(q,r)=\mathrm{rangeMax}(i,h-1)$, and insert the pair $(\back[h],h)$ in $L[\back[h]]$. After that, we remove the pair $(i,t)$ from $E$ (that is, we remove from $E$ the edge ending on $i$) and for every $(i,s)\in L[i]$, we insert the pair $(s,D[s])$ in $E$.

After the iteration for $j=n$ is finished, the length of a longest $\mas$ of $w$ is $r=1+\max\{D[i]\mid i\in [n]\}$, and such a longest $\mas$ can be obtained as $w[i_1]w[i_2]\cdots w[i_{r-1}]c$, where $i_{r-1}$ is defined as a position $h$ such that $D[h]=r-1$, $i_{\ell-1}=\back[i_\ell]$ for $\ell \in [2:r-1]$, and $c$ is a letter which does not occur in $\al(w[h+1:n])$ (such a letter could be, for example, exactly $w[h]$, as another occurrence of this letter to the right of $h$ would lead to a longer $\mas$-prefix occurring in $w$).

\medskip

\noindent {\em $ \S$ Correctness and complexity.}
To show the correctness of this algorithm, we can prove by induction on $i$ that, at the end of iteration for $j=i$, the three properties that we aim to maintain (at the end of the iteration for $j=i$) do indeed hold. As said already, these properties hold at the end of the iteration for $j=0$. Assume now that the respective properties hold at the end of the iteration for $j=i-1$, and we show that they also hold at the end of the iteration for $j=i$. Let $h=\nextArray[i]$. We want to show that, if $h\leq n$, $D[h]$ and $\back[h]$ are correctly computed. Let $i_1<i_2<\ldots<i_r=h$ be the canonical embedding of a longest $\mas$-prefix ending on position $h$, and let $i_0=0$. As $\prevArray[h]=i$, then $\prevArray[i_{r-1}] < i \leq i_{r-1}<i_r$. So $D[i_{r-1}]$ and $\back[i_{r-1}]$ are correctly computed, and $(i_{r-1}, D[i_{r-1}])$ is contained in $E$ (by induction). Also, there is no position $q<h$ such that $\prevArray[q]<i$ and $D[q]>r-1$ or $D[q]=r-1$ and $q<i_{r-1}$ (this would contradict the hypothesis that $i_1<i_2<\ldots<i_{r-1}<i_r=h$ is the canonical embedding of a longest $\mas$-prefix ending on position $h$, as such an embedding should contain $q$ as the predecessor of $h$). So, $(i_{r-1},D[i_{r-1}])$ is retrieved by the $\mathrm{rangeMax}(i,h-1)$, and $D[h]$ is correctly set to $r$ and $\back[h]$ is set to $i_{r-1}$. The properties of $E$ and the lists $L[\cdot]$ are clearly correctly maintained. We also note that $|E|$ remains at most $\sigma$. At the beginning of the iteration for $j=i$, $E$ contains the pairs $(g,D[g])$, where $ \back[g]< i\leq g$ and $\back[g]\neq -1$. Assume that $E$ contains two elements $(g,D[g])$ and $(g',D[g'])$ with $w[g]=w[g']$, and assume $\prevArray[g]<\prevArray[g']$; as $\prevArray[g]<\prevArray[g']\leq \back[g']< i$ and both $g\geq i$ and $g'\geq i$, this immediately yields a contradiction. So, $E$ contains at most one pair $(g,D[g])$ such that $w[g]=a$, for every $a\in \Sigma$. So $|E|\leq \sigma$. 

It is now not hard to show that the complexity of the algorithm is $\O(n\log \sigma)$, under the assumption that the set $E$, which has at most $\sigma$ elements, is maintained as in Lemma \ref{lem:AVL-MAX}. This concludes our proof.
\end{proof}

In connection to \cref{thm:longestMAS}, we leave as an open problem the natural question whether we can improve the complexity reported in the respective theorem to $\O(n)$.

\bibliographystyle{plainurl}
\bibliography{references}

\appendix

\section{Figures and Pseudocodes}\label{sec:figures}

This section contains figures exemplifying the construction of skeleton-DAGs for $\mas(w)$ and $\sas(w)$ (Figures~\ref{fig:massas1} and \ref{fig:massas2}) as well as pseudocodes implementing the construction of these skeleton-DAGs (Algorithm~\ref{SAS:algo:main}, and Algorithms~\ref{MAS:algo:KM}~and~\ref{MAS:algo:main}, respectively). 

\begin{algorithm}[tbh]
    \caption{compute$\sas G$($w$)}\label{SAS:algo:main}
    \footnotesize
    \KwData{String $w$ with $\abs{w}=n$, alphabet $\Sigma=$ alph$(w)$ with $\abs{\Sigma}=\sigma$}
    \KwResult{Skeleton DAG $G$ consisting of $\mathcal O(n)$ nodes and edges representing all $\sas$ of $w$}
    compute arch factorisation of $w$, $\dist[\cdot]$, $\firstPosArch[\cdot][\cdot]$, and $\lastPosArch[\cdot][\cdot]$\;
    \For{$\ell=1$ \KwTo $k-1$}{
        \For{$i \in \text{ar}_w(\ell)$}{
            \lIf{$i = \firstPosArch[\ell][w[i]]$ $\wedge$ $\dist[i] = k - \ell + 2$}{add $i$ to $F_\ell$}
            \If{$i = \lastPosArch[\ell][w[i]] \wedge \dist[\firstPosArch[\ell+1][w[i]] = k - (\ell + 1) + 2$}{add $i$ to $G_\ell$\;}
        }
    }
    \For{$i \in \text{ar}_w(k)$}{
        \lIf{$i = \firstPosArch[k][w[i]]$ $\wedge$ $\dist[i] = 2$}{add $i$ to $F_k$}
        \lIf{$i = \lastPosArch[k][w[i]] \wedge w[i] \notin \text{alph}(r(w))$}{add $i$ to $G_k$}
    }
    $V \gets \{s, f\} \cup \bigcup_{\ell=1}^k F_\ell \cup \{n+i \mid i \in [\sigma]\}$\;
    \For{$i = 1$ \KwTo $\abs{F_1}$}{
        \lIf{$i+1 < \abs{F_1}$}{add $(F[i+1], F[i])$ to $E$}
    }
    \For{$\ell = 2$ \KwTo $k$}{
        \For{$i = 1$ \KwTo $\abs{G_\ell}$}{
            \lIf{$i+1 < \abs{G_\ell}$}{add $(\firstPosArch[\ell][w[G_\ell[i]]], \firstPosArch[\ell][w[G_\ell[i+1]]])$ to $E$}
        }
    }
    \For{$i = 1$ \KwTo $\abs{G_k}$}{
        \lIf{$i+1 < \abs{G_k}$}{add $(n+w[G_k[i]], n+w[G_k[i+1]])$ to $E$}
    }
    \For{$\ell = 1$ \KwTo $k$}{
        $d \gets 1$\;
        \For{$i = 1$ \KwTo $\abs{F_\ell}$}{
            \lWhile{$G_\ell[d] < F_\ell[i]$}{
                $d \gets d+1$}
            \lIf{$\ell+1 \leq k$}{
                add $(F_\ell[i], \firstPosArch[\ell+1][w[G_\ell[d-1]]])$ to $E$}
            \lIf{$\ell+1 = k+1$}{
                add $(F_\ell[i], n+w[G_\ell[d-1]])$ to $E$}
        }
    }
    
    \Return{$(V, E)$.}
\end{algorithm}

\begin{algorithm}[tbh]
    \caption{computeP($w$)} \label{MAS:algo:KM}
    \footnotesize
    \KwData{String $w$ with $\abs{w}=n$, alphabet $\Sigma=$ alph$(w)$ with $\abs{\Sigma}=\sigma$}
    \KwResult{Matrix $\KM$ such that $\KM[i,\cdot]$ contains the set $P_{i}$ for $i \in [n]$, and $\KM[i,j]$ maintains the internal total order for increasing  $j \in [\sigma]$.} 
    array \knp\  of length $\sigma$ with elements initially set to $0$\;
    array \knn\  of length $\sigma$ with elements initially set to $n+1$\;
    matrix \KM\  of size $n\times\sigma$ with elements initially set to $\infty$\;
    $last\gets 0$\;
    $first \gets n+1$\;
    \For{$i=1$ \KwTo $n$}{
        $x\gets w[i]$\;
        \If{$\knn[x]=n+1$}{
            \If{$\knp[x]=0$}{
                \If{$first=n+1 \vee w[first] = x$}{$first\gets i$\;}
                \Else{
                    $\knp[x]\gets last$\;
                    $\knn[w[last]]\gets i$\;
                }
                $last\gets i$\;
            }
            \Else{
                $\knn[w[\knp[x]]]\gets i$\;
                $last\gets i$\;
            }
        }
        \Else{
            \If{$x = w[first]$}{$first\gets \knn[x]$\;}
            \Else{
                $tmp \gets \knn[x]$\;
                $\knn[w[\knp[x]]]\gets tmp$\;
                $\knp[w[tmp]]\gets \knp[x]$\;
            }
            $\knp[x]\gets last$\;
            $\knn[x]\gets n+1$\;
            $\knn[w[last]]\gets i$\;
            $last\gets i$\;
        }

        $current\gets first$\;
        $row \gets 1$\;
        \While{$current\neq n+1$}{
            $\KM<row>[i]\gets current$\;
            $row\gets row+1$\;
            $current\gets \knn[w[current]]$\;
        }
    }
    \Return{\textup\KM.}
\end{algorithm}

\begin{algorithm}[tbh]
    \caption{compute$\mas G$($w$)}\label{MAS:algo:main}
    \footnotesize
    \KwData{String $w$ with $\abs{w}=n$, alphabet $\Sigma=$ alph$(w)$ with $\abs{\Sigma}=\sigma$}
    \KwResult{Skeleton DAG $G$ consisting of $\mathcal O(n\sigma)$ nodes and edges representing all $\mas$ of $w$}
    compute $\nextpos[\cdot,\cdot]$, and $\KM$\;
    array $S$ of length $n$ with elements initially set to $1$\;
    \For{$a=1$ \KwTo $\sigma$}{
        $i\gets\nextpos[1,a]$\;
        add $(s,v_{i}^{1})$ to the list of edges of $G$\;
    }
    \For{$i=1$ \KwTo $n$}{
        \For{$\ell=1$ \KwTo $\sigma$}{
            $j\gets \KM<\ell>[i]$\;
            \If{$j\neq \infty$}{
                $k \gets \nextpos[w[j],i+1]$\;
                \If{$k\leq n$}{
                    $s\gets S[k]$\;
                    \While{$\KM<s>[k]\leq i$}{$s\gets s+1$\;}
                    add $(v_{i}^{\ell},v_{k}^{s})$ to the list of edges of $G$\;
                    $S[k]\gets s$\;
                }
                \Else{
                    add $(v_{i}^{\ell},f)$ to the list of edges of $G$\;
                }
                \If{$\ell>1$}{
                    add $(v_{i}^{\ell-1},v_{i}^{\ell})$ to the list of edges of $G$\;
                }
            }
        }
    }
    
    \Return{$G$.}
\end{algorithm}

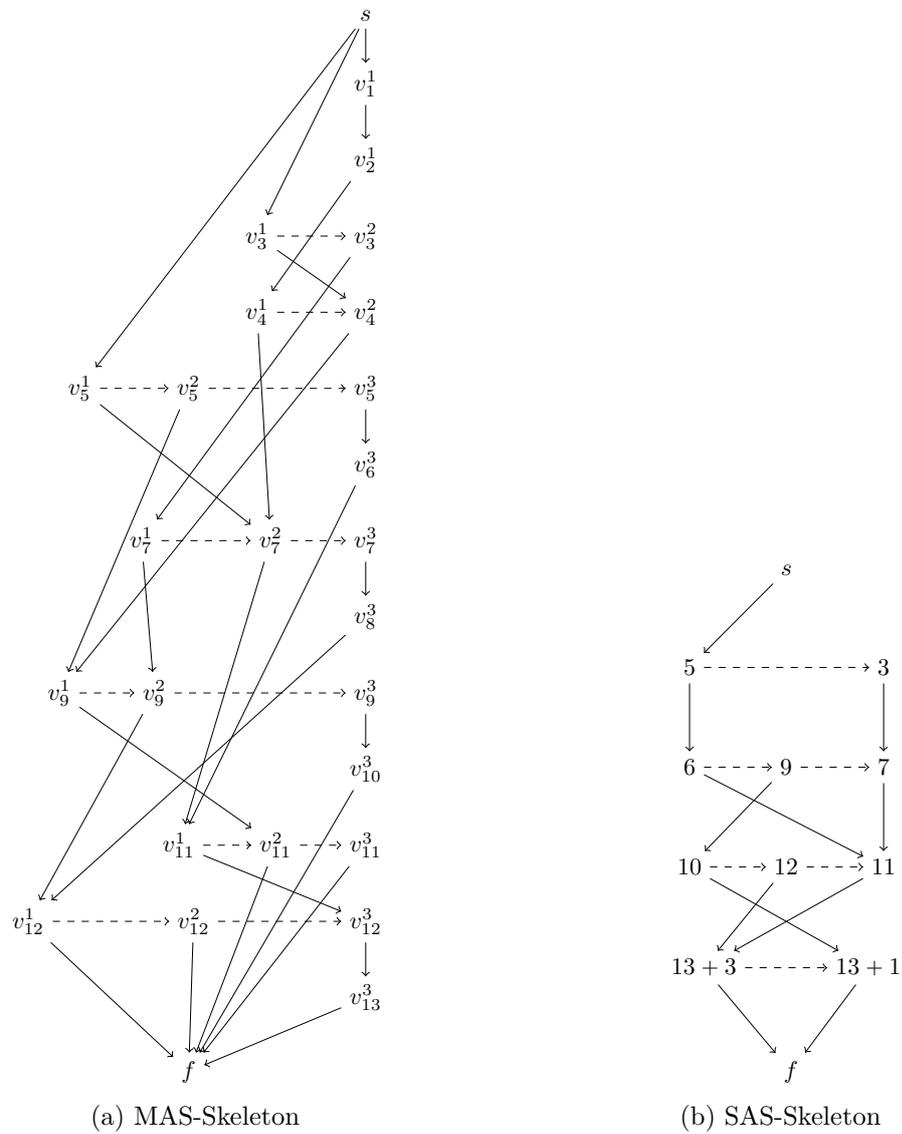
\begin{figure}[htb]
\centering
    \begin{minipage}[b]{.6\textwidth}
    \centering
        \scalebox{.9}{\begin{tikzpicture} 
            \node (0_0) {$s$};
            \node (1_0) [below=.5cm of 0_0] {$v_1^1$};
            \node (2_1) [below=.5cm of 1_0] {$v_2^1$};
            \node (3_2) [below=.5cm of 2_1] {$v_3^2$};
            \node (3_1) [left=of 3_2] {$v_3^1$};
            \node (4_3) [below=.5cm of 3_2] {$v_4^2$};
            \node (4_2) [left=of 4_3] {$v_4^1$};
            \node (5_4) [below=.5cm of 4_3] {$v_5^3$};
            \node (5_3) [left=2cm of 5_4] {$v_5^2$};
            \node (5_2) [left=1.cm of 5_3] {$v_5^1$};
            \node (6_5) [below=.5cm of 5_4] {$v_6^3$};
            \node (7_6) [below=.5cm of 6_5] {$v_7^3$};
            \node (7_5) [left=.8cm of 7_6] {$v_7^2$};
            \node (7_3) [left=1.3cm of 7_5] {$v_7^1$};
            \node (8_7) [below=.5cm of 7_6] {$v_8^3$};
            \node (9_8) [below=.5cm of 8_7] {$v_9^3$};
            \node (9_7) [left=2.5cm of 9_8] {$v_9^2$};
            \node (9_5) [left=.8cm of 9_7] {$v_9^1$};
            \node (10_9) [below=.5cm of 9_8] {$v_{10}^3$};
            \node (11_10) [below=.5cm of 10_9] {$v_{11}^3$};
            \node (11_9) [left=.6cm of 11_10] {$v_{11}^2$};
            \node (11_7) [left=.7cm of 11_9] {$v_{11}^1$};
            \node (12_11) [below=.5cm of 11_10] {$v_{12}^3$};
            \node (12_10) [left=1.8cm of 12_11] {$v_{12}^2$};
            \node (12_9) [left=1.7cm of 12_10] {$v_{12}^1$};
            \node (13_12) [below=.5cm of 12_11] {$v_{13}^3$};
            \node (f) [below left=.5cm and 2.cm of 13_12] {$f$};
            
            \draw[->] (0_0) -- (1_0);
            \draw[->] (0_0) -- (3_1);
            \draw[->] (0_0) -- (5_2);
            \draw[->] (1_0) -- (2_1);
            \draw[->] (2_1) -- (4_2);
            \draw[->] (3_1) -- (4_3);
            \draw[->] (3_2) -- (7_3);
            \draw[->, dashed] (3_1) --  (3_2);
            \draw[->] (4_2) -- (7_5);
            \draw[->] (4_3) -- (9_5);
            \draw[->, dashed] (4_2) --  (4_3);
            \draw[->] (5_2) -- (7_5);
            \draw[->] (5_3) -- (9_5);
            \draw[->, dashed] (5_2) --  (5_3);
            \draw[->] (5_4) -- (6_5);
            \draw[->, dashed] (5_3) --  (5_4);
            \draw[->] (6_5) -- (11_7);
            \draw[->] (7_3) -- (9_7);
            \draw[->] (7_5) -- (11_7);
            \draw[->, dashed] (7_3) --  (7_5);
            \draw[->] (7_6) -- (8_7);
            \draw[->, dashed] (7_5) --  (7_6);
            \draw[->] (8_7) -- (12_9);
            \draw[->] (9_5) -- (11_9);
            \draw[->] (9_7) -- (12_9);
            \draw[->, dashed] (9_5) --  (9_7);
            \draw[->] (9_8) -- (10_9);
            \draw[->, dashed] (9_7) --  (9_8);
            \draw[->] (10_9) -- (f);
            \draw[->] (11_7) -- (12_11);
            \draw[->] (11_9) -- (f);
            \draw[->, dashed] (11_7) --  (11_9);
            \draw[->] (11_10) -- (f);
            \draw[->, dashed] (11_9) --  (11_10);
            \draw[->] (12_9) -- (f);
            \draw[->] (12_10) -- (f);
            \draw[->, dashed] (12_9) --  (12_10);
            \draw[->] (12_11) -- (13_12);
            \draw[->, dashed] (12_10) --  (12_11);
            \draw[->] (13_12) -- (f);
        \end{tikzpicture}}
        \\(a) MAS-Skeleton
    \end{minipage}\hfill
    \begin{minipage}[b]{.3\textwidth}
    \centering
    \scalebox{.9}{
        \begin{tikzpicture}
            \node (start) {$s$};
            \node (3) [below right=of start] {$3$};
            \node (5) [below left=of start] {$5$};
            \node (7) [below=of 3] {$7$};
            \node (6) [below=of 5] {$6$};
            \node (9) [right=of 6, left=of 7] {$9$};
            \node (11) [below= of 7] {$11$};
            \node (10) [below= of 6] {$10$};
            \node (12) [below=of 9] {$12$};
            \node (n1) [below right=1cm and .3cm of 12] {$13+1$};
            \node (n3) [below left=1cm and .3cm of 12] {$13+3$};
            \node (f) [below left=1cm and .3cm of n1] {$f$};

            \draw[->] (start) -- (5);
            \draw[->, dashed] (5) -- (3);
            \draw[->] (3) -- (7);
            \draw[->, dashed] (9) -- (7);
            \draw[->, dashed] (6) -- (9);
            \draw[->] (5) -- (6);
            \draw[->, dashed] (10) -- (12);
            \draw[->, dashed] (12) -- (11);
            \draw[->] (7) -- (11);
            \draw[->] (6) -- (11);
            \draw[->] (9) -- (10);
            \draw[->, dashed] (n3) -- (n1);
            \draw[->] (10) -- (n1); 
            \draw[->] (11) -- (n3);
            \draw[->] (12) -- (n3); 
            \draw[->] (n1) -- (f);
            \draw[->] (n3) -- (f);
        \end{tikzpicture}}
        \\(b) SAS-Skeleton
    \end{minipage}
    \caption{MAS-Skeleton (a) and SAS-Skeleton (b) of the word $11213\cdot3221\cdot132\cdot2$, of length $13$ (the $\cdot$ symbols separate the arches). For better readability, the MAS-Skeleton (a) does not show nodes that can not be reached from $s$.}\label{fig:massas1}
\end{figure}

\begin{figure}[tbh]
    \centering
    \begin{minipage}[b]{.6\linewidth}
        \centering
        \scalebox{.9}{\begin{tikzpicture}
            \node (0_0) {$s$};
            \node (1_0) [below left=.7cm and 1.2cm of 0_0] {$v_1^1$};
            \node (2_0) [below right=.5cm and 0cm of 1_0] {$v_2^1$};
            \node (2_1) [right=2.cm of 2_0] {$v_2^2$};
            \node (3_0) [below right=.5cm and 3cm of 2_0] {\textcolor{white}{$v_3^1$}};
            \node (3_2) [below right=.5cm and 1cm of 2_1] {$v_3^2$};
            \node (4_0) [below left=.5cm and 2.4cm of 3_0] {$v_4^1$};
            \node (4_2) [right=.6cm of 4_0] {$v_4^2$};
            \node (4_3) [right=.6cm of 4_2] {$v_4^3$};
            \node (5_0) [below=.5cm of 4_0] {\textcolor{white}{$v_5^1$}};
            \node (5_2) [right=1cm of 5_0] {\textcolor{white}{$v_5^2$}};
            \node (5_4) [below=.5cm of 4_3] {$v_5^3$};
            \node (6_2) [below left=.5cm and 0.5cm of 5_0] {$v_6^1$};
            \node (6_4) [right=.5cm of 6_2] {$v_6^2$};
            \node (6_5) [right=2.5cm of 6_4] {$v_6^3$};
            \node (7_2) [below=.5cm of 6_2] {\textcolor{white}{$v_7^1$}};
            \node (7_5) [right=1.5cm of 7_2] {$v_7^2$};
            \node (7_6) [right=1.9cm of 7_5] {$v_7^3$};
            \node (f) [below right=.5cm and .3cm of 7_5] {$f$};
            \draw[->] (0_0) --  (1_0);
            \draw[->] (0_0) --  (2_0);
            \draw[->] (0_0) --  (4_0);
            \draw[->] (1_0) --  (6_2);
            \draw[->] (2_0) --  (6_2);
            \draw[->] (2_1) --  (3_2);
            \draw[->, dashed] (2_0) --  (2_1);
            \draw[->] (3_2) --  (f);
            \draw[->] (4_0) --  (6_4);
            \draw[->] (4_2) --  (f);
            \draw[->, dashed] (4_0) --  (4_2);
            \draw[->] (4_3) --  (5_4);
            \draw[->, dashed] (4_2) --  (4_3);
            \draw[->] (5_4) --  (7_5);
            \draw[->] (6_2) --  (f);
            \draw[->] (6_4) --  (7_6);
            \draw[->, dashed] (6_2) --  (6_4);
            \draw[->] (6_5) --  (f);
            \draw[->, dashed] (6_4) --  (6_5);
            \draw[->] (7_5) --  (f);
            \draw[->] (7_6) --  (f);
            \draw[->, dashed] (7_5) --  (7_6);
        \end{tikzpicture}}
        \\(a) MAS-Skeleton
    \end{minipage}
    \hfill
    \begin{minipage}[b]{.35\linewidth}
        \centering 
        \scalebox{.9}{\begin{tikzpicture}
            \node (s) {$s$};
            \node (4) [below=of s] {$4$};
            \node (val2) [below=of 4] {$7+2$};
            \node (f) [below=of val2] {$f$};
            \draw[->] (s) -- (4);
            \draw[->] (4) -- (val2);
            \draw[->] (val2) -- (f);
        \end{tikzpicture}}
        \\(b) SAS-Skeleton
    \end{minipage}
    \caption{MAS-Skeleton (a) and SAS-Skeleton (b) of the word $1223\cdot 313$, of length $7$ (the $\cdot$ symbols separate the arches). Nodes, that can not be reached from $s$ in (a) have been removed for better readability. There is exactly one SAS.} \label{fig:massas2}
\end{figure}
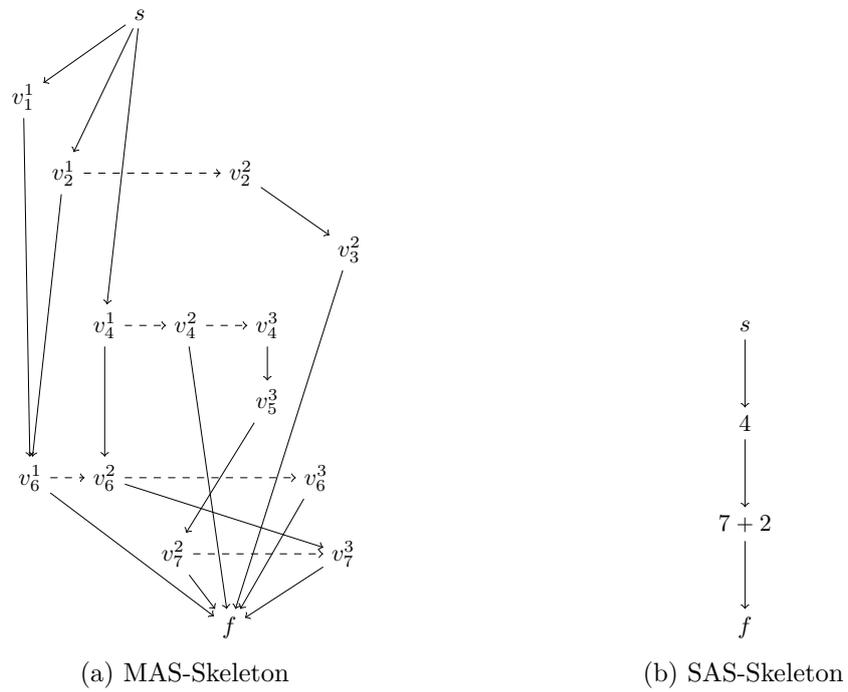

\end{document}